%% file: lhwr_theory_TAS.tex
\documentclass[12pt]{article}

\usepackage[T1]{fontenc}
\usepackage[utf8]{inputenc}
\usepackage{lmodern}
\usepackage[letterpaper,margin=1in]{geometry}
\usepackage{amsmath,amssymb,amsthm,bm}
\usepackage{booktabs}
\usepackage{graphicx}
\usepackage{float}
\usepackage{placeins}
\usepackage[ruled,vlined,linesnumbered]{algorithm2e}
\usepackage{xspace}
\usepackage{url}
\usepackage[numbers,sort&compress]{natbib}
\usepackage[colorlinks=true,linkcolor=blue,citecolor=blue,urlcolor=blue]{hyperref}

\newtheorem{assumption}{Assumption}
\newtheorem{theorem}{Theorem}
\newtheorem{proposition}{Proposition}
\newtheorem{corollary}{Corollary}

\newtheorem{lemma}{Lemma}[section]

\newcommand{\R}{\mathbb{R}}
\newcommand{\Htwo}{\mathbb{H}^{2}}
\newcommand{\innerL}[2]{\left\langle #1,#2\right\rangle_L}
\newcommand{\dH}{d_{\mathbb H}}
\newcommand{\acosh}{\operatorname{arcosh}}
\newcommand{\argmin}{\operatorname*{arg\,min}}
\newcommand{\tr}{\operatorname{tr}}
\newcommand{\diag}{\operatorname{diag}}
\newcommand{\E}{\mathbb{E}}
\newcommand{\Var}{\operatorname{Var}}
\newcommand{\Cov}{\operatorname{Cov}}
\newcommand{\RSS}{\ensuremath{\mathrm{RSS}}\xspace}
\newcommand{\AIC}{\ensuremath{\mathrm{AIC}}\xspace}
\newcommand{\AICc}{\ensuremath{\mathrm{AICc}}\xspace}
\newcommand{\CV}{\ensuremath{\mathrm{CV}}\xspace}
\newcommand{\RMSE}{\ensuremath{\mathrm{RMSE}}\xspace}
\newcommand{\MAE}{\ensuremath{\mathrm{MAE}}\xspace}

\newcommand{\by}{\bm y}
\newcommand{\bX}{\bm X}
\newcommand{\bZ}{\bm Z}
\newcommand{\bbeta}{\bm\beta}
\newcommand{\bx}{\bm x}
\newcommand{\bW}{\bm W}
\newcommand{\bA}{\bm A}
\newcommand{\bG}{\bm G}
\newcommand{\be}{\bm e}
\newcommand{\bI}{\bm I}
\newcommand{\bS}{\bm S}
\newcommand{\bepsilon}{\bm\varepsilon}
\newcommand{\bs}{\bm s}

\hypersetup{
  pdftitle={Lorentz Hyperbolic Weighted Regression: Theory for Fixed and Estimated Representations},
  pdfauthor={Bahadır Yüzbaşı; Zühal Küçükarslan Yüzbaşı},
  pdfkeywords={hyperbolic representations; local regression; Lorentz model; varying coefficients; diagnostics; economic similarity}
}

\title{Lorentz Hyperbolic Weighted Regression:\\
Theory for Fixed and Estimated Representations}
\author{Bahadır Yüzbaşı\textsuperscript{1,*} \and
Zühal Küçükarslan Yüzbaşı\textsuperscript{2}\\[0.5em]
\small \textsuperscript{1}Department of Econometrics, İnönü University,
Malatya 44280, Türkiye\\
\small \textsuperscript{2}Department of Mathematics, Fırat University,
Elazığ, Türkiye\\
\small \textsuperscript{*}Corresponding author:
\href{mailto:bahadir.yuzbasi@inonu.edu.tr}{bahadir.yuzbasi@inonu.edu.tr}}
\date{}

\begin{document}

\raggedbottom
\setlength{\emergencystretch}{2em}
\maketitle

\begin{abstract}
Many applications provide each observation with a meaningful representation
in addition to an ordinary response and covariates. When that representation
is hierarchical, hub--periphery structured, or network derived, Euclidean or
geographic proximity may define the wrong peer groups. We present Lorentz
hyperbolic weighted regression (LHWR) as a practical local regression method
for this setting. Responses and predictors remain real valued; only locality
is defined by distances between observations represented on the Lorentz model
of hyperbolic space. We describe coordinate construction, adaptive bandwidth
selection, prediction, local coefficient summaries, collinearity checks, and
residual autocorrelation diagnostics. Theoretical results explain
consistency, bias--variance tradeoffs, curvature effects, and the extra
uncertainty caused by estimated representations. Simulations show that the
Lorentz geometry is most useful for sharply localized coefficient surfaces,
whereas tangent-plane approximations can be competitive for smooth surfaces.
In a 141-country World Development Indicators illustration, economic
similarity based on income and trade openness defines local peer groups.
Representation-based locality predicts GDP growth better than global least
squares and geographic local regression, although differences among Lorentz,
Poincar\'e, and tangent metrics are modest. The main practical lesson is that
the representation should be chosen scientifically and the distance geometry
should be checked rather than assumed.
\end{abstract}

\noindent\textbf{Keywords:} hyperbolic representations; local regression;
Lorentz model; varying coefficients; diagnostics; economic similarity

\section{Introduction}
\label{sec:introduction}

Local regression is usually introduced through physical location. Nearby
observations receive greater weight, and the fitted association is allowed to
vary over space. This principle is useful far beyond geography. A hospital may
want to borrow information from clinically similar patients, a university
from institutions with comparable research profiles, or a country from
economies with similar income and trade integration. In each example, the
scientifically relevant neighbourhood is defined by a representation rather
than longitude and latitude.

Hyperbolic representations are particularly attractive when the data exhibit
hierarchy, hubs, or rapidly expanding neighbourhoods
\citep{krioukov2010,papadopoulos2012,nickel2017,nickel2018}. Their negative
curvature provides room for many peripheral observations while retaining a
compact notion of centrality. Yet constructing a hyperbolic representation
does not itself answer the regression question. An analyst still needs a way
to estimate how covariate associations vary across that representation, to
select neighbourhood sizes, to predict new observations, and to determine
whether the chosen geometry matters.

Lorentz hyperbolic weighted regression (LHWR) addresses this second-stage
problem. For a scalar response \(Y_i\), Euclidean covariates
\(\bx_i\in\R^p\), and a represented location \(Z_i\) on the Lorentz model,
LHWR fits
\begin{equation}
  Y_i=\bx_i^\top\bbeta(Z_i)+\varepsilon_i.
  \label{eq:lhwr-model}
\end{equation}
In Equation~\eqref{eq:lhwr-model}, \(i=1,\ldots,n\) indexes observations,
\(Y_i,\varepsilon_i\in\R\), and
\(\bbeta:\mathbb H_\kappa^d\to\R^p\) is the coefficient field;
\(p\) includes the intercept when one is fitted. The varying coefficient
\(\bbeta(Z_i)\) is estimated by weighted least
squares, with weights determined by Lorentz-hyperbolic distance. Thus the
method does not turn the response or predictors into manifold-valued objects.
It changes only the definition of a local peer.

LHWR belongs to the varying-coefficient framework of
Hastie and Tibshirani \citep{hastietibshirani1993}: regression coefficients are
smooth functions of an indexing variable. Here that index is a hyperbolic
representation, while the response and regression covariates remain Euclidean.
The contribution is not the weighted least-squares identity or a new general
regression family. It is the analysis of intrinsic hyperbolic localization,
curvature-dependent smoothing terms, and representation-induced uncertainty
within an interpretable coefficient model.

The method is connected historically to geographically weighted regression
(GWR) \citep{brunsdon1996,fotheringham2002}, but geography is neither required
nor privileged. It also differs from manifold kernel regression for a scalar
conditional mean \citep{pelletier2006}, tangent-plane local regression
\citep{chengwu2013}, and Fr\'echet regression for metric-space responses
\citep{petersenmuller2019}. Hyperbolic manifold regression
\citep{marconi2020} predicts a hyperbolic-valued response, whereas LHWR uses
hyperbolic locations to index Euclidean coefficient vectors. For estimated
locations, the relevant statistical connection is to regression with
generated covariates \citep{mammen2012}. Our perturbation results identify
the additional weighted-score term and state the conditions under which a
joint first- and second-stage limit transfers to the coefficient estimator;
they do not supply a first-stage limit theorem for every embedding algorithm.

This article is organized around practice. Section~\ref{sec:representation}
explains what the representation means and how it can be supplied. Sections
\ref{sec:method} and \ref{sec:workflow} give the estimator, prediction rule,
bandwidth selection, diagnostics, and a reproducible analysis workflow.
Section~\ref{sec:theory} states the theoretical guarantees needed to interpret
the estimator; Appendices~\ref{app:geometric-definitions}--\ref{app:additional-wdi-diagnostics}
provide the detailed geometry, proofs, and further diagnostics. Sections
\ref{sec:simulation} and \ref{sec:wdi} show what the geometry changes in
simulation and in a country-level application. The emphasis throughout is not
that negative curvature must win, but that representation-defined locality can
be scientifically preferable to global or geographic pooling.

\section{From Representations to Neighbourhoods}
\label{sec:representation}

\subsection{The Lorentz model}

For dimension \(d\) and sectional curvature \(\kappa<0\), let
\(R_\kappa=(-\kappa)^{-1/2}\). The Lorentz inner product on
\(\R^{1,d}\) is
\begin{equation*}
  \innerL{u}{v}=-u_0v_0+\sum_{a=1}^d u_av_a,
\end{equation*}
and the Lorentz model is
\begin{equation*}
  \mathbb H_\kappa^d=
  \{z\in\R^{1,d}:\innerL{z}{z}=-R_\kappa^2,\ z_0>0\}.
\end{equation*}
Its geodesic distance is
\begin{equation}
  d_\kappa(z,w)=R_\kappa\acosh
  \left\{-\frac{\innerL{z}{w}}{R_\kappa^2}\right\}.
  \label{eq:lorentz-distance-general}
\end{equation}
Equation~\eqref{eq:lorentz-distance-general} computes intrinsic distance on
the upper sheet, not the indefinite ambient norm of a coordinate difference.
The radius \(R_\kappa\) fixes the distance scale; all bandwidths expressed as
metric radii use that same scale.

\begin{proposition}[Geodesic metric]
\label{prop:geodesic-metric}
For \(z,w\in\mathbb H_\kappa^d\), the argument of \(\acosh\) in
Equation~\eqref{eq:lorentz-distance-general} is at least one. The distance is
nonnegative, symmetric, zero exactly when \(z=w\), and satisfies the triangle
inequality.
\end{proposition}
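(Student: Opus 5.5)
We first rescale to curvature \(-1\). For \(z\in\mathbb H_\kappa^d\), set \(\tilde z=z/R_\kappa\), so that \(\innerL{\tilde z}{\tilde z}=-1\) and \(\tilde z_0>0\). Then \(d_\kappa(z,w)=R_\kappa\,\acosh\{-\innerL{\tilde z}{\tilde w}\}\). The factor \(R_\kappa>0\) preserves every metric property, so it suffices to treat \(R_\kappa=1\).

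\textbf{Step 1: the argument is at least one.} Write \(z=(z_0,\bar z)\) with \(\bar z\in\R^d\), so that \(z_0=\sqrt{1+\|\bar z\|^2}\). Cauchy--Schwarz gives \(\bar z^\top\bar w\le\|\bar z\|\|\bar w\|\), hence
\begin{equation*}
  -\innerL{z}{w}=z_0w_0-\bar z^\top\bar w\ \ge\ \sqrt{(1+a^2)(1+b^2)}-ab,
  \qquad a=\|\bar z\|,\ b=\|\bar w\|.
\end{equation*}
The right-hand side is at least \(1\), because \((1+a^2)(1+b^2)-(1+ab)^2=(a-b)^2\ge0\). So \(\acosh\) is well defined and the distance is nonnegative. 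Symmetry is immediate from the symmetry of \(\innerL{\cdot}{\cdot}\).

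\textbf{Zero exactly when \(z=w\).} The distance vanishes only if both inequalities above are equalities. This requires \(a=b\) and \(\bar z^\top\bar w=ab\), so \(\bar z=\bar w\) (Cauchy--Schwarz equality, with equal norms). Then \(z_0=w_0\), hence \(z=w\). The converse is trivial: \(-\innerL{z}{z}=1\) and \(\acosh 1=0\).

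\textbf{Step 2: triangle inequality.} This is the main obstacle. I would avoid Riemannian machinery and use transitivity of the Lorentz group \(O^+(1,d)\), which preserves \(\innerL{\cdot}{\cdot}\), maps the upper sheet to itself, and therefore preserves \(d_\kappa\).

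First, apply a boost and a rotation sending the middle point \(y\) to \(e_0=(1,0,\dots,0)\). Then any point can be written \(x=(\cosh s,\ \sinh s\,u)\) with \(s=d(x,e_0)\ge0\) and \(\|u\|=1\). Similarly \(z=(\cosh t,\ \sinh t\,v)\) with \(t=d(e_0,z)\) and \(\|v\|=1\).

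Next, compute directly:
\begin{equation*}
  -\innerL{x}{z}=\cosh s\cosh t-\sinh s\sinh t\,u^\top v\ \le\ \cosh s\cosh t+\sinh s\sinh t=\cosh(s+t).
\end{equation*}
Since \(\acosh\) is increasing on \([1,\infty)\), this gives \(d(x,z)\le s+t=d(x,y)+d(y,z)\).

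The step needing care is justifying transitivity. I would verify it explicitly: for \(y=(\cosh r,\sinh r\,\omega)\), first rotate \(\omega\) to the first axis, then apply the boost by \(-r\) in the \((0,1)\)-plane. This sends \(y\) to \(e_0\), and one checks directly that the boost preserves the bilinear form and keeps the \(0\)-th coordinate positive on the sheet.

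Finally, restoring the factor \(R_\kappa\) completes the proof of Proposition~\ref{prop:geodesic-metric}.
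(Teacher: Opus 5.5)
Your proof is correct, but it takes a genuinely different route from the paper's, mainly for the triangle inequality.

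\textbf{Lower bound and separation.} The paper applies a time-orientation-preserving Lorentz transformation that sends \(z\) to \((R_\kappa,0,\ldots,0)\). The argument of \(\acosh\) then becomes the time coordinate of the transformed \(w\), which is at least \(R_\kappa\), with equality only at that point. Your Cauchy--Schwarz computation, using \((1+a^2)(1+b^2)-(1+ab)^2=(a-b)^2\), reaches the same conclusion, including the equality case, without any group action.

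\textbf{Triangle inequality.} The paper argues geometrically. The radial curve \((R_\kappa\cosh(t/R_\kappa),R_\kappa\sinh(t/R_\kappa)\omega)\) has length \(t\), so the \(\acosh\) formula is the length of a geodesic. Cartan--Hadamard (complete, simply connected, nonpositive curvature) makes that geodesic minimizing. Hence \(d_\kappa\) is the infimum of curve lengths, and the triangle inequality follows by concatenating curves. You instead move the middle point to \(e_0\), use the law-of-cosines bound \(-\innerL{x}{z}\le\cosh(s+t)\) and the monotonicity of \(\acosh\), and check transitivity and upper-sheet preservation of the boost explicitly.

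\textbf{What each approach buys.} Your argument is self-contained and purely algebraic, and it fully proves the metric axioms the proposition states. The paper's argument gives more: it identifies \(d_\kappa\) with the Riemannian path-length distance on the hyperboloid. That is what the title's ``geodesic'' refers to, and later results rely on it, for instance \(d_\kappa\{z,\exp_z(hu)\}=h\|u\|\) in the normal-coordinate expansion and the infimum over curves in the tangent-distortion lemma. Your proof does not establish that identification. If it were needed, you would add a separate step: from the polar form \(ds^2=dr^2+\{R_\kappa\sinh(r/R_\kappa)\}^2d\omega^2\) one gets \(ds\ge|dr|\). So every curve from the origin to a point at geodesic radius \(s\) has length at least \(s\).
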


The implemented model uses \(d=2\) and \(\kappa=-1\), so
\(\dH(z,w)=\acosh\{-\innerL{z}{w}\}\). The Lorentz and Poincar\'e
models are isometric displays of the same hyperbolic space
\citep{helgason1978,chavel2006}. LHWR uses Lorentz-model coordinates for
distance calculation and the Poincar\'e disk when a bounded two-dimensional
display is easier to read.

\subsection{How coordinates are obtained}

A simple route begins with two scientifically chosen features
\(\psi(x_i)=(\psi_{i1},\psi_{i2})^\top\). The lift
\begin{equation}
  z_i=
  \left(
    \sqrt{1+\psi_{i1}^2+\psi_{i2}^2},
    \psi_{i1},\psi_{i2}
  \right)^\top
  \label{eq:generic-feature-map-Z}
\end{equation}
satisfies \(\innerL{z_i}{z_i}=-1\) and \(z_{i0}>0\). The features may be
economic indicators, clinical summaries, functional scores, shape scores, or
other domain quantities. The lift guarantees a valid Lorentz-model point; it
does not prove that the chosen features are the uniquely correct
representation.

A second route starts from relational data. A graph or similarity matrix can
be embedded by an external method, after which its Poincar\'e or polar
coordinates are converted to the Lorentz model. For example, polar coordinates
\((r_i,\theta_i)\) give
\begin{equation*}
  z_i=(\cosh r_i,\sinh r_i\cos\theta_i,
       \sinh r_i\sin\theta_i)^\top.
\end{equation*}
If the embedding returns \(p_i\) in the Poincar\'e disk, then
\begin{equation*}
  z(p_i)=\frac{(1+\|p_i\|_2^2,2p_{i1},2p_{i2})^\top}
  {1-\|p_i\|_2^2}.
\end{equation*}
Embedding estimation is external to LHWR. The distinction matters: a fixed
feature map can be treated as part of the design, whereas a learned embedding
may contribute first-stage uncertainty and must be reconstructed within each
validation fold to avoid outcome leakage.

The motivation for relational representations comes from tree embeddings
and network geometry \citep{sarkar2011,adcock2013,sala2018,boguna2010}.
Examples include international trade networks \citep{gaulier2010baci,garciaperez2016},
protein interactions \citep{alanislobato2018}, single-cell hierarchies
\citep{klimovskaia2020}, and drug representations \citep{yu2020drug}.
These studies motivate candidate representations, not the correctness of an
LHWR regression for those data. A recent survey describes the broader graph
learning context \citep{yang2025hgl}.

\subsection{Lorentz and tangent neighbourhoods}

Figure~\ref{fig:geometry-comparison} contrasts the two principal local
geometries. LHWR measures neighbourhoods directly on the hyperbolic surface.
Tangent-E maps all observations to one tangent plane at a reference point
\(\mu\) and uses distances there. Because the Lorentz inner product is
positive definite when restricted to \(T_\mu\Htwo\), tangent vectors can be
expressed in a Lorentz-orthonormal basis and compared by ordinary two-dimensional
Euclidean distance. This approximation can be excellent near \(\mu\) or for
smooth broad variation, but it need not preserve pairwise geodesic distances
farther away. The construction is made explicit below so that the comparator
is reproducible \citep{helgason1978,chavel2006,nickel2018}.

For \(\mu\in\mathbb H_\kappa^d\), its tangent space and metric are
\begin{equation}
 T_\mu\mathbb H_\kappa^d=\{v\in\R^{1,d}:\innerL{v}{\mu}=0\},
 \qquad g_\mu(v,w)=\innerL{v}{w},\qquad
 \|v\|_\mu=\sqrt{g_\mu(v,v)}.
 \label{eq:tangent-space-metric}
\end{equation}
Equation~\eqref{eq:tangent-space-metric} supplies a positive-definite norm
only for vectors tangent at \(\mu\). Set \(r=d_\kappa(\mu,z)\). The
logarithmic map sending a represented point to that tangent space is
\begin{equation}
 \log_\mu(z)=
 \begin{cases}
 \displaystyle\frac{r}{R_\kappa\sinh(r/R_\kappa)}
 \left(z+\frac{\innerL{\mu}{z}}{R_\kappa^2}\mu\right),&z\ne\mu,\\[0.6em]
 0,&z=\mu.
 \end{cases}
 \label{eq:tangent-log-map}
\end{equation}
In Equation~\eqref{eq:tangent-log-map}, the scalar multiplier tends to one
as \(r\to0\); the zero case is the continuous extension, not an undefined
division. Moreover, \(\|\log_\mu(z)\|_\mu=r\).

For \(v\in T_\mu\mathbb H_\kappa^d\), let \(s=\|v\|_\mu\). The
exponential map, used below to perturb represented locations, is
\begin{equation}
 \exp_\mu(v)=
 \begin{cases}
 \displaystyle\cosh(s/R_\kappa)\mu+
 R_\kappa\sinh(s/R_\kappa)\frac{v}{s},&s>0,\\[0.4em]
 \mu,&s=0.
 \end{cases}
 \label{eq:exp-map-explicit}
\end{equation}
Equation~\eqref{eq:exp-map-explicit} traces the geodesic with initial
velocity \(v\); orthogonality to \(\mu\) verifies the hyperboloid
constraint. The ratio \(R_\kappa\sinh(s/R_\kappa)/s\) tends to one.
The inverse relation with Equation~\eqref{eq:tangent-log-map} is verified
in Appendix~\ref{app:geometric-definitions}.

If
\(E_1,\ldots,E_d\) is a Lorentz-orthonormal basis of the tangent space,
write \(q_a(z)=g_\mu\{\log_\mu(z),E_a\}\) and
\(q(z)=(q_1(z),\ldots,q_d(z))^\top\). Tangent-E uses
\begin{equation}
 d_{T,\mu}(z_i,z_j)=
 \|\log_\mu(z_i)-\log_\mu(z_j)\|_\mu
 =\|q(z_i)-q(z_j)\|_2.
 \label{eq:tangent-distance}
\end{equation}
Equation~\eqref{eq:tangent-distance} is independent of the chosen
orthonormal basis, but generally depends on \(\mu\). Radial distances from
\(\mu\) are preserved; distances between arbitrary pairs need not be.
A quantitative bound for this approximation is given in Appendix~\ref{app:geometric-definitions}.

Unless a reference is supplied, the implementation uses the normalized
extrinsic Lorentz mean of the \(n_{\rm tr}\) training locations:
\begin{equation}
 \bar z_{\rm tr}=n_{\rm tr}^{-1}\sum_{i\in\mathcal I_{\rm tr}}z_i,
 \qquad
 \mu=\frac{R_\kappa\bar z_{\rm tr}}
 {\sqrt{-\innerL{\bar z_{\rm tr}}{\bar z_{\rm tr}}}}.
 \label{eq:tangent-reference}
\end{equation}
Here \(\mathcal I_{\rm tr}\) is the training index set. The denominator in
Equation~\eqref{eq:tangent-reference} is positive for upper-sheet inputs;
Appendix~\ref{app:geometric-definitions} proves this and verifies the
inverse relation between the logarithmic and exponential maps.
This reference is not defined as the minimizer of squared geodesic distances
(the intrinsic Fr\'echet mean). The same training reference is retained when
mapping validation or future observations. In outer validation it is computed
anew from each training fold. Internal bandwidth scoring holds the resulting
training distance matrix fixed; it is not a second re-estimation of the
reference for every omitted response. These distinctions describe the existing
protocol, not a change to the analysis. The theory below concerns intrinsic
LHWR weights; it does not automatically give a separate limit theorem for
Tangent-E with a random reference.

At unit curvature, the log map uses tolerance \(10^{-10}\) for the computed
geodesic distance \(r=d_{-1}(\mu,z)\): if \(r\leq10^{-10}\), tangent
coordinates are set to zero; otherwise the logarithmic map is evaluated.

This numerical convention is distinct from the exact map and distance bound
in Appendix~\ref{app:geometric-definitions}. Pairwise tangent distances are
evaluated from coordinate differences. In particular,
identical mapped points have exactly zero distance, whereas close distinct
coordinates are not collapsed by an additional pairwise-distance threshold.

\begin{figure}[!t]
  \centering
  \includegraphics[width=0.98\textwidth]{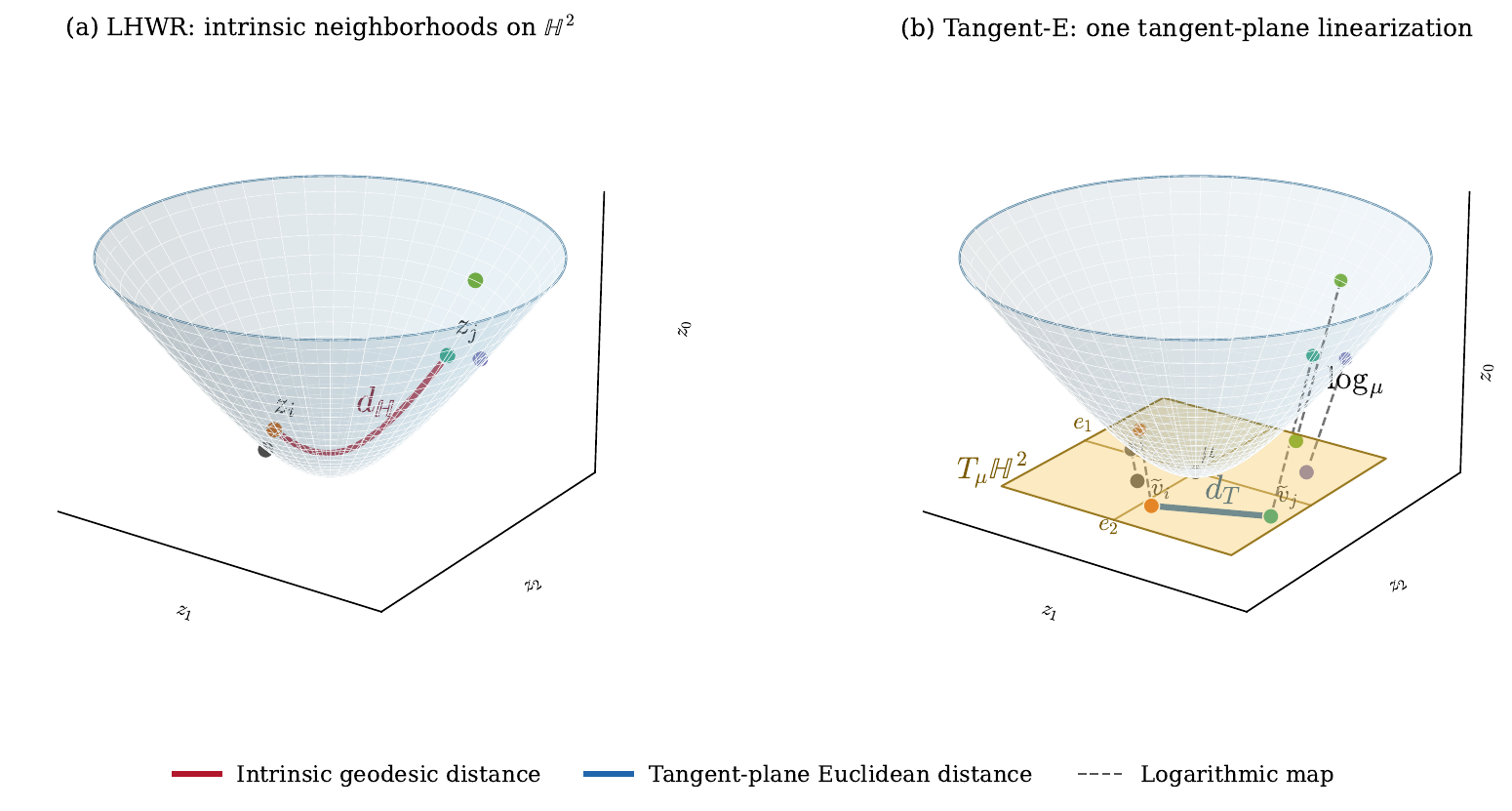}
  \caption{Two ways to define locality from the same Lorentz-model
  representation. The left panel uses geodesic distance directly on
  \(\mathbb H^2\). The right panel maps observations to the single tangent
  plane \(T_\mu\mathbb H^2\) before measuring distance.}
  \label{fig:geometry-comparison}
\end{figure}

\section{The LHWR Method}
\label{sec:method}

Let \(\bX=(\bx_1^\top,\ldots,\bx_n^\top)^\top\),
\(\by=(Y_1,\ldots,Y_n)^\top\), and
\(\bZ=(Z_1^\top,\ldots,Z_n^\top)^\top\). At a target \(z\), define
\begin{equation*}
  w_i(z;h)=K\{d_\kappa(z,Z_i)/h\},\qquad
  \bW(z;h)=\diag\{w_1(z;h),\ldots,w_n(z;h)\}.
\end{equation*}
The estimator is
\begin{equation}
  \widehat\bbeta_h(z)=
  \{\bX^\top\bW(z;h)\bX\}^{-1}
  \bX^\top\bW(z;h)\by,
  \label{eq:lhwr-estimator}
\end{equation}
provided the weighted design has full column rank. Here \(\bX\) is
\(n\times p\), \(\by\) is \(n\times1\), \(\bW\) is \(n\times n\),
\(h>0\), and \(K:[0,\infty)\to[0,\infty)\) is a specified kernel.
Equation~\eqref{eq:lhwr-estimator} estimates a \(p\)-vector at each target;
it does not require \(\bbeta\) to be constant over the representation.

\begin{assumption}[Finite-sample local design]
\label{ass:local-design}
For every reported target and bandwidth, the nonnegative weights are fixed
and \(\bW^{1/2}\bX\) has column rank \(p\).
\end{assumption}

\begin{proposition}[Existence, uniqueness, and weighted optimality]
\label{prop:local-wls}
Under Assumption~\ref{ass:local-design}, the estimator in
Equation~\eqref{eq:lhwr-estimator} exists and uniquely minimizes
\begin{equation}
 Q_z(\theta)=\sum_{i=1}^n w_i(z;h)
               \{Y_i-\bx_i^\top\theta\}^2,\qquad\theta\in\R^p.
 \label{eq:local-wls-objective}
\end{equation}
\end{proposition}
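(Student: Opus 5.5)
The plan is to reduce everything to the Gram matrix \(\bA=\bX^\top\bW(z;h)\bX\) and show it is positive definite. First, invertibility. Since the weights are nonnegative, \(\bW^{1/2}=\diag\{w_i(z;h)^{1/2}\}\) is a well-defined real diagonal matrix, and
\[
\bA=(\bW^{1/2}\bX)^\top(\bW^{1/2}\bX).
\]
For any \(\theta\in\R^p\) this gives \(\theta^\top\bA\theta=\|\bW^{1/2}\bX\theta\|_2^2\geq0\). By Assumption~\ref{ass:local-design}, \(\bW^{1/2}\bX\) has column rank \(p\), so \(\bW^{1/2}\bX\theta=0\) forces \(\theta=0\). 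Hence \(\bA\) is positive definite and invertible, and Equation~\eqref{eq:lhwr-estimator} is well defined.

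Next, write the objective in matrix form,
\[
Q_z(\theta)=(\by-\bX\theta)^\top\bW(\by-\bX\theta),
\]
and complete the square. Set \(\widehat\theta=\bA^{-1}\bX^\top\bW\by\), which is exactly \(\widehat\bbeta_h(z)\). Expanding \(\by-\bX\theta=(\by-\bX\widehat\theta)+\bX(\widehat\theta-\theta)\) produces a cross term proportional to \(\bX^\top\bW(\by-\bX\widehat\theta)\). This cross term vanishes because of the normal equations \(\bA\widehat\theta=\bX^\top\bW\by\). What remains is
\[
Q_z(\theta)=Q_z(\widehat\theta)+(\theta-\widehat\theta)^\top\bA(\theta-\widehat\theta).
\]

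From this identity, minimization and uniqueness follow at once. The quadratic form is nonnegative, so \(\widehat\theta\) minimizes \(Q_z\). It is strictly positive whenever \(\theta\ne\widehat\theta\), because \(\bA\) is positive definite, so the minimizer is unique.

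There is no real obstacle here. The only point needing care is that the rank condition is imposed on \(\bW^{1/2}\bX\) rather than on \(\bX\). Zero weights could otherwise make \(\bA\) singular even when \(\bX\) has full column rank, which is why the positive-definiteness step must use the assumption exactly as stated.
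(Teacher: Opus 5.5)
Your proof is correct and is essentially the paper's argument: both get positive definiteness of \(\bX^\top\bW\bX\) from the rank condition on \(\bW^{1/2}\bX\). The only difference is that the paper then uses the gradient \(-2\bX^\top\bW(\by-\bX\theta)\) and the positive-definite Hessian \(2\bX^\top\bW\bX\), whereas you use the exact completing-the-square identity; for a quadratic objective this is the same fact written out, and it gives global uniqueness without appealing to convexity.
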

Equation~\eqref{eq:local-wls-objective} defines optimality for the selected
local loss. It is not a claim that LHWR minimizes prediction risk among
different geometries. Singular local designs must be flagged; a numerical
generalized-inverse fallback does not establish uniqueness in the proposition.
The experiments use the bisquare kernel
\(K(u)=(1-u^2)^2\mathbf1(|u|<1)\). Gaussian, exponential, boxcar, and
tricube kernels are available alternatives
\citep{gollini2013gwmodel,yuzbasi2026shrinkage}.

The prediction for a represented future observation
\((\bx_\star,z_\star)\) is
\begin{equation}
  \widehat y_\star=
  \bx_\star^\top\widehat\bbeta(z_\star),
  \label{eq:out-of-sample-prediction}
\end{equation}
where weights compare \(z_\star\) only with training coordinates. A fixed
bandwidth uses one metric radius. An adaptive bandwidth uses the distance to
the \(k\)th nearest training observation, allowing dense and sparse parts of
the representation to contain comparable neighbourhood mass. The package
orders strictly positive finite distances: coincident locations are omitted
from radius selection but retain their kernel weight at distance zero. With a
bisquare kernel the observation at the radius has zero weight, so \(k\) is a
radius-selection parameter, not a guarantee of exactly \(k\) positive weights.
Equation~\eqref{eq:out-of-sample-prediction} uses the training response only;
the target covariates enter through the final inner product.

For observed targets, stacking the fitted-value rows gives a smoother matrix
\(\widehat\by=\bS\by\). Let
\(\RSS=\|\by-\widehat\by\|_2^2\). The package selects candidate
bandwidths by leave-one-out cross-validation or by the corrected Akaike
information criterion
\begin{equation}
  \AICc=n\log(\RSS/n)+n\log(2\pi)+n+
  \frac{2n\{\tr(\bS)+1\}}{n-\tr(\bS)-2}.
  \label{eq:bandwidth-aicc}
\end{equation}
Equation~\eqref{eq:bandwidth-aicc} scores admissible fits with
\(\RSS>0\) and \(n-\tr(\bS)-2>0\); \(\tr(\bS)\) measures the
conditional smoother complexity. For the same fixed distance system, define
the leave-one-out criterion
\begin{equation}
 \CV(h)=\sum_{i=1}^n
 \{Y_i-\bx_i^\top\widehat\bbeta_h^{(-i)}(Z_i)\}^2.
 \label{eq:bandwidth-cv}
\end{equation}
In Equation~\eqref{eq:bandwidth-cv}, superscript \((-i)\) excludes response
\(i\) from the weighted fit, using the candidate's specified radius rule.
This conditional score is distinct from outer validation of a learned
representation and bandwidth-selection procedure.
The same kernel, candidate set, and selection criterion should be used when
comparing distance geometries.

Algorithm~\ref{alg:lhwr} summarizes the complete LHWR calibration procedure,
using the intrinsic geometry at \(d=2\), \(\kappa=-1\). In the algorithm,
\(\mathcal B\) is the search set, \(b\) a candidate, \(b_i(b)\) its local
metric radius, and \(\widehat b_i=b_i(\widehat b)\) the selected radius.
The residual vector is \(\be=\by-\widehat\by\); the learned rule is
\(\widehat m_n(\bx,z)=\bx^\top\widehat\bbeta_n(z)\), where subscript
\(n\) emphasizes use of the training sample. The positive-distance and
admissibility conventions above apply to the algorithm's nearest-neighbour
search and candidate fits. Its provisional and final coefficient equations
are instances of Equation~\eqref{eq:lhwr-estimator}, and its prediction uses
Equation~\eqref{eq:out-of-sample-prediction}. The implementation solves the
normal equations; the displayed inverse is not a requirement to form a dense
inverse explicitly. Tangent-E is a separate geometry ablation defined by
Equations~\eqref{eq:tangent-log-map}--\eqref{eq:tangent-reference}.
\begin{algorithm}[!t]
    \renewcommand{\baselinestretch}{1}\fontsize{10}{12}\selectfont
    \ResetInOut{Output}
    \SetKwInOut{Input}{Input}
    \SetKwInOut{Output}{Output}
    \Input{Response vector $\by\in\R^n$, design matrix $\bX\in\R^{n\times p}$,
    Lorentz-model coordinate matrix $\bZ=(z_1^\top,\ldots,z_n^\top)^\top$ with
    $z_i\in\Htwo$, kernel $K$, bandwidth search set $\mathcal B$, bandwidth
    type, criterion $\mathcal C\in\{\mathrm{AICc},\mathrm{CV}\}$, and optional
    new pair $(\bx_\star,z_\star)$}
    \Output{Local coefficient surfaces $\{\widehat\bbeta(z_i)\}_{i=1}^n$,
    fitted values $\widehat\by$, residuals $\be$, selected bandwidth
    $\widehat b$, learned prediction rule $\widehat m_n$, optional prediction
    $\widehat y_\star$, and diagnostic summaries}

    \For{$i\gets 1$ \KwTo $n$}{
        Check $\langle z_i,z_i\rangle_L\approx -1$ and $z_{i0}>0$\;
    }
    If coordinates are obtained from a feature map, Poincar\'e embedding,
    network embedding, or similarity representation, convert or project them
    to the upper sheet of $\Htwo$ before fitting\;

    \ForEach{required pair $(i,j)$}{
        Compute the Lorentz-hyperbolic distance
        $\dH(z_i,z_j)=\operatorname{arcosh}
        \{-\langle z_i,z_j\rangle_L\}$\;
    }
    For exact kNN fitting, retain only the nearest-neighbor indices and
    distances required by the compact kernel and adaptive bandwidth\;

    \ForEach{candidate bandwidth $b\in\mathcal B$}{
        Construct local bandwidths $b_i(b)$\;
        \If{adaptive kNN bandwidth is used}{
            Set $b_i(b)=d_i(k)$, the Lorentz-hyperbolic distance from $z_i$ to its
            $k$th nearest neighbor\;
        }
        \For{$i\gets 1$ \KwTo $n$}{
            Set $w_{ij}^{(b)}=K\{\dH(z_i,z_j)/b_i(b)\}$,
            $\bW_i^{(b)}=\operatorname{diag}(w_{i1}^{(b)},\ldots,w_{in}^{(b)})$,
            and $\bG_i^{(b)}=\bX^\top\bW_i^{(b)}\bX$\;
            Compute provisional
            $\widehat\bbeta_b(z_i)=(\bG_i^{(b)})^{-1}\bX^\top\bW_i^{(b)}\by$\;
            Compute $\widehat y_{i,b}=\bx_i^\top\widehat\bbeta_b(z_i)$\;
        }
        Evaluate the selection score $\mathcal C(b)$ using AICc or leave-one-out CV\;
    }
    Select $\widehat b=\argmin_{b\in\mathcal B}\mathcal C(b)$\;

    \For{$i\gets 1$ \KwTo $n$}{
        Construct final weights
        $w_{ij}=K\{\dH(z_i,z_j)/\widehat b_i\}$,
        $\bW_i=\operatorname{diag}(w_{i1},\ldots,w_{in})$,
        and $\bG_i=\bX^\top\bW_i\bX$\;
        Compute the local coefficient estimate
        $\widehat\bbeta(z_i)=\bG_i^{-1}\bX^\top\bW_i\by$\;
        \If{$\bG_i$ is ill-conditioned}{
            Record local condition diagnostics and flag the fit as numerically
            nonregular. If direct inversion fails, the numerical implementation
            uses a generalized inverse as a fallback; such a solution may be
            nonunique and lies outside the regular full-rank theory\;
        }
        Compute $\widehat y_i=\bx_i^\top\widehat\bbeta(z_i)$ and store the
        residual component $e_i$\;
    }

    Compute requested diagnostics: \RSS, \AIC, \AICc, \CV, $\operatorname{tr}(\bS)$,
    effective degrees of freedom, local standard errors, local $t$-values,
    local collinearity diagnostics, conditional F-tests, and residual
    autocorrelation diagnostics under hyperbolic neighborhoods\;
    \If{$ (\bx_\star,z_\star)$ is supplied}{
        Construct $\bW_\star$ from training coordinates and the calibrated
        bandwidth; compute $\widehat\bbeta_n(z_\star)$ and
        $\widehat y_\star=\bx_\star^\top\widehat\bbeta_n(z_\star)$\;
    }
    \textbf{return} $\{\widehat\bbeta(z_i)\}_{i=1}^n$, $\widehat\by$, $\be$,
    $\widehat b$, $\widehat m_n$, optional $\widehat y_\star$, and diagnostics\;
    \caption{Lorentz Hyperbolic Weighted Regression calibration.}
    \label{alg:lhwr}
\end{algorithm}

\section{Diagnostics and a Practical Workflow}
\label{sec:workflow}

\subsection{What should be checked}

The first check is whether the local design is estimable. At location \(i\),
the effective sample size, number of positive weights, condition number, and
smallest eigenvalue of \(\bX^\top\bW_i\bX\) reveal sparse or nearly singular
neighbourhoods. Local variance inflation factors identify predictor-specific
collinearity \citep{wheeler2005}. These diagnostics concern the weighted
design, not the curvature itself: a valid Lorentz representation can still
produce an unstable regression if too few locally distinct observations are
available.

Conditional coefficient covariance follows from the linear representation
\(\widehat\bbeta(Z_i)=\bA_i\by\), where
\begin{equation*}
  \bA_i=(\bX^\top\bW_i\bX)^{-1}\bX^\top\bW_i.
\end{equation*}
Let \(\bm m=\E(\by\mid\bX,\bZ)\), with entries
\(m_j=\bx_j^\top\bbeta(Z_j)\), and suppose
\(\E(\bepsilon\mid\bX,\bZ)=0\) and
\(\Var(\bepsilon\mid\bX,\bZ)=\bm\Sigma\), an \(n\times n\)
covariance matrix. With weights fixed independently of the response errors,
the exact conditional identities are
\begin{equation}
\begin{aligned}
 \E\{\widehat\bbeta(Z_i)\mid\bX,\bZ\}&=\bA_i\bm m,\\
 \Var\{\widehat\bbeta(Z_i)\mid\bX,\bZ\}&=\bA_i\bm\Sigma\bA_i^\top,\\
 \operatorname{Bias}_i&=\bA_i\{\bm m-\bX\bbeta(Z_i)\}.
\end{aligned}
\label{eq:conditional-moments}
\end{equation}
Equation~\eqref{eq:conditional-moments} separates sampling variation from
the smoothing bias of a varying field. Under homoskedastic uncorrelated errors,
\(\bm\Sigma=\sigma^2\bI_n\), giving the plug-in covariance
\(\widehat\sigma^2\bA_i\bA_i^\top\). Simply conditioning on a
response-selected bandwidth does not preserve the assumed error law; reported
plug-in covariances do not incorporate that selection uncertainty.
Pointwise coefficient maps are useful
for exploration, but displaying many local tests creates multiplicity and the
bandwidth was selected from the data. Raw and Benjamini--Hochberg-adjusted
classifications should therefore be distinguished explicitly.

Global and local residual autocorrelation can be summarized using the same
Lorentz-hyperbolic neighbourhoods that fitted the model. For row-standardized
weights \(w_{ij}^{(H)}\) with zero diagonal, the residual Moran statistic is
\begin{equation*}
  I_H=\frac{n}{S_0}
  \frac{\sum_{i,j}w_{ij}^{(H)}(e_i-\bar e)(e_j-\bar e)}
       {\sum_i(e_i-\bar e)^2},
  \qquad S_0=\sum_{i,j}w_{ij}^{(H)}.
\end{equation*}
The corresponding local indicators identify high--high, low--low,
high--low, and low--high residual patterns
\citep{moran1950,anselin1995}. Because smoothing makes fitted residuals
nonexchangeable, ordinary residual permutations are descriptive post-fit
checks rather than exact tests. Moment-matched F diagnostics for global and
coefficient-specific structure have the same conditional limitation
\citep{leung2000}; their definitions and calibration are given in
Appendices~\ref{app:secondary-diagnostics} and~\ref{app:diagnostic-calibration}.

The representation should encode a scientifically defensible notion of
similarity. Comparisons with a global regression baseline and alternative
neighbourhood geometries should use the same observations and covariates,
with a common kernel and bandwidth-selection protocol for the local methods.
For predictive assessment, any learned representation must be re-estimated
and local bandwidths reselected using only the training observations in each
validation split. Local coefficient maps should be interpreted alongside the
design diagnostics and with the uncertainty convention stated explicitly.
Residual Moran and LISA summaries remain descriptive post-fit diagnostics.
The resulting local associations concern the chosen representation and do
not, by themselves, establish causal effects.

\section{What the Theory Guarantees}
\label{sec:theory}

This section records the results that guide bandwidth choice and uncertainty
interpretation. Detailed proofs appear in Appendices~\ref{app:geometric-definitions}--\ref{app:generated-representation-proofs}.

Suppose \((Y_i,\bx_i,Z_i)\) are independent and identically distributed and satisfy
\begin{equation}
  Y_i=\bx_i^\top\bbeta(Z_i)+\varepsilon_i,
  \qquad \E(\varepsilon_i\mid\bx_i,Z_i)=0.
  \label{eq:asymptotic-model}
\end{equation}
In Equation~\eqref{eq:asymptotic-model}, \(p\) is fixed and the mean-zero
condition permits heteroskedasticity. Let \(f\) be the density of \(Z_i\)
with respect to hyperbolic volume \(dV_\kappa\),
and let \(\bm M(z)=\E(\bx_i\bx_i^\top\mid Z_i=z)\). Define
\begin{equation}
  \bm Q(z)=f(z)\bm M(z).
  \label{eq:asymptotic-Q}
\end{equation}
Let
\(\bm\Omega(z)=\E(\varepsilon_i^2\bx_i\bx_i^\top\mid Z_i=z)\).
Equation~\eqref{eq:asymptotic-Q} combines density and the conditional design
second moment; \(\bm Q,\bm M,\bm\Omega\) are \(p\times p\) matrices.
For a radial kernel, with all integrals over \(\R^d\), define
\(\mu_0=\int K(\|u\|)du\),
\(\mu_2=\int u_1^2K(\|u\|)du\),
\(\nu_0=\int K(\|u\|)^2du\), and
\(\nu_2=\int u_1^2K(\|u\|)^2du\).

\begin{assumption}[Pointwise regularity]
\label{ass:pointwise-asymptotics}
At a fixed interior target \(z\), the fields \(f\), \(\bm M\),
\(\bm\Omega\), and \(\bbeta\) have four bounded covariant derivatives;
\(f(z)>0\); and \(\bm M(z)\) and \(\bm\Omega(z)\) are positive definite.
For some \(\delta>0\), the conditional moments of \(\|\bx_i\|^{4+\delta}\)
and \(|\varepsilon_i|^{4+\delta}\|\bx_i\|^{4+\delta}\), given \(Z_i=q\),
are uniformly bounded near \(z\). The kernel is nonnegative, bounded,
Lipschitz, radial, and supported on \([0,1]\), with
\(0<\mu_0,\mu_2,\nu_0<\infty\). The deterministic bandwidth satisfies
\(h\to0\) and \(nh^d\to\infty\).
\end{assumption}

Let \(\nabla\) be the Levi--Civita covariant derivative and
\(e_1,\ldots,e_d\) an orthonormal frame at the target, extended normally
there. The Laplace--Beltrami convention is
\(\Delta_\kappa=\operatorname{div}_g\operatorname{grad}_g\), applied
entrywise to vector or matrix fields; it reduces to \(\sum_a\partial_a^2\)
in flat coordinates. Its coordinate definition is given in Appendix~\ref{app:geometric-definitions}. With these conventions, define the leading smoothing-bias vector
\begin{equation}
  \bm B(z)=\frac{\mu_2}{2\mu_0}\bm Q(z)^{-1}
  \left[
    \Delta_\kappa\{\bm Q\bbeta\}(z)
    -\{\Delta_\kappa\bm Q(z)\}\bbeta(z)
  \right],
  \label{eq:asymptotic-bias-constant}
\end{equation}
equivalently,
\begin{equation}
  \bm B(z)=\frac{\mu_2}{2\mu_0}
  \left[
    \Delta_\kappa\bbeta(z)+2\bm Q(z)^{-1}
    \sum_{a=1}^d\{\nabla_{e_a}\bm Q(z)\}
    \{\nabla_{e_a}\bbeta(z)\}
  \right].
  \label{eq:asymptotic-bias-equivalent}
\end{equation}
Equations~\eqref{eq:asymptotic-bias-constant} and
\eqref{eq:asymptotic-bias-equivalent} are linked by the covariant product
rule: the second form separates curvature of the coefficient field from
gradients of density and design. The leading covariance constant is
\begin{equation}
  \bm V(z)=\frac{\nu_0}{\mu_0^2f(z)}
  \bm M(z)^{-1}\bm\Omega(z)\bm M(z)^{-1}.
  \label{eq:asymptotic-variance-constant}
\end{equation}
Equation~\eqref{eq:asymptotic-variance-constant} gives a \(p\times p\)
matrix; unlike \(\bm B(z)\), it depends on the conditional error second
moment. Define \(K_{h,z}(q)=K\{d_\kappa(z,q)/h\}\) and
\begin{equation}
\begin{aligned}
 \bm\Gamma_h(z)&=\E\{K_{h,z}(Z)\bx\bx^\top\},&
 \bm g_h(z)&=\E\{K_{h,z}(Z)\bx Y\},\\
 \widehat{\bm\Gamma}_{n,h}(z)&=n^{-1}\sum_iw_i(z;h)\bx_i\bx_i^\top,&
 \widehat{\bm g}_{n,h}(z)&=n^{-1}\sum_iw_i(z;h)\bx_iY_i.
\end{aligned}
\label{eq:local-population-moments}
\end{equation}
The matrices and vectors in Equation~\eqref{eq:local-population-moments}
have dimensions \(p\times p\) and \(p\times1\), respectively. Kernel
weights are intentionally unnormalized: a common factor \(h^{-d}\) cancels
from the local estimator. The population target and sample estimator are

\begin{equation}
\begin{aligned}
  \bbeta_h(z)&=\bm\Gamma_h(z)^{-1}\bm g_h(z),\\
  \widehat\bbeta_h(z)&=\widehat{\bm\Gamma}_{n,h}(z)^{-1}
                         \widehat{\bm g}_{n,h}(z).
\end{aligned}
\label{eq:population-and-sample-estimator}
\end{equation}

Equation~\eqref{eq:population-and-sample-estimator} distinguishes the
population smoothing target \(\bbeta_h\) from both \(\bbeta\) and the
random estimate. The sample estimator is defined when its local Gram matrix is nonsingular,
an event whose probability tends to one. For unconditional moment statements
only, assign a fixed bounded vector on the complementary event; the moment
conditions below apply to this completion. This convention does not change
the fitted estimator on admissible local designs.

\begin{theorem}[Pointwise behavior]
\label{thm:pointwise-consistency}
Under Assumption~\ref{ass:pointwise-asymptotics},
\begin{equation}
  \widehat\bbeta_h(z)-\bbeta(z)=
  h^2\bm B(z)+O(h^4)+O_p\{(nh^d)^{-1/2}\},
  \label{eq:pointwise-rate}
\end{equation}
and therefore \(\widehat\bbeta_h(z)\xrightarrow{p}\bbeta(z)\).
\end{theorem}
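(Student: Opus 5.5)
Write \(\widehat\bbeta_h(z)-\bbeta(z)=\{\bbeta_h(z)-\bbeta(z)\}+\{\widehat\bbeta_h(z)-\bbeta_h(z)\}\), with \(\bbeta_h\) the population smoothing target of Equation~\eqref{eq:population-and-sample-estimator}. The first difference is deterministic and will give \(h^2\bm B(z)+O(h^4)\). The second is stochastic and will give \(O_p\{(nh^d)^{-1/2}\}\). Consistency then follows because \(h\to0\) and \(nh^d\to\infty\).

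\emph{Bias step.} Using \(\E(\varepsilon\mid\bx,Z)=0\), write \(\bm g_h(z)=\E\{K_{h,z}(Z)\bm M(Z)\bbeta(Z)f\text{-weighted}\}\), i.e.
\[
\bm g_h(z)=\int K\{d_\kappa(z,q)/h\}\,\bm Q(q)\bbeta(q)\,dV_\kappa(q),
\qquad
\bm\Gamma_h(z)=\int K\{d_\kappa(z,q)/h\}\,\bm Q(q)\,dV_\kappa(q).
\]
Pass to normal coordinates \(q=\exp_z(hu)\) via Equation~\eqref{eq:exp-map-explicit}, so that \(d_\kappa(z,q)=h\|u\|\). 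The volume element in geodesic polar form is \(\{R_\kappa\sinh(s/R_\kappa)/s\}^{d-1}ds\,d\sigma\), which gives \(dV_\kappa=h^d\{1+c_\kappa h^2\|u\|^2+O(h^4)\}du\) with \(c_\kappa=-\kappa(d-1)/6\).

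Next Taylor-expand any smooth field \(F\) in normal coordinates to fourth order using the bounded covariant derivatives:
\[
F(\exp_z(hu))=F(z)+h\,u^a\nabla_aF+\tfrac{h^2}{2}u^au^b\nabla_a\nabla_bF+\tfrac{h^3}{6}(\cdot)+O(h^4).
\]
Integrating against the radial kernel kills odd orders, so the \(h^3\) remainder vanishes by symmetry. The \(h^2\) term becomes \(\tfrac{\mu_2}{2}\Delta_\kappa F(z)\), since \(\int u_au_bK=\mu_2\delta_{ab}\) and the trace of the Hessian equals the Laplace--Beltrami operator at the center of normal coordinates. Applying this with \(F=\bm Q\bbeta\) and with \(F=\bm Q\) gives
\[
h^{-d}\bm g_h=\mu_0\bm Q\bbeta+h^2\{\tfrac{\mu_2}{2}\Delta_\kappa(\bm Q\bbeta)+\mu_0'\bm Q\bbeta\}+O(h^4),
\]
and the analogous expansion for \(h^{-d}\bm\Gamma_h\), with the same volume-correction constant \(\mu_0'=c_\kappa\int\|u\|^2K\).

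Inverting \(h^{-d}\bm\Gamma_h\) by a Neumann series and multiplying, the volume-correction terms cancel between numerator and denominator. What remains is \(\bbeta_h-\bbeta=h^2\bm B(z)+O(h^4)\), with \(\bm B\) in the form of Equation~\eqref{eq:asymptotic-bias-constant}. The equivalent form in Equation~\eqref{eq:asymptotic-bias-equivalent} then follows from the product rule.

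\emph{Variance step.} Write
\[
\widehat\bbeta_h-\bbeta_h=\widehat{\bm\Gamma}_{n,h}^{-1}\{\widehat{\bm g}_{n,h}-\widehat{\bm\Gamma}_{n,h}\bbeta_h\}.
\]
First, \(h^{-d}\widehat{\bm\Gamma}_{n,h}=h^{-d}\bm\Gamma_h+O_p\{(nh^d)^{-1/2}\}\) by a Chebyshev bound. The entrywise variance is at most \(n^{-1}\E\{K_{h,z}^2\|\bx\|^4\}=O(h^d/n)\) by the bounded fourth moments, and \(h^{-d}\bm\Gamma_h\to\mu_0\bm Q(z)\), which is positive definite. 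Hence the inverse is \(O_p(1)\), and the nonsingularity event has probability tending to one, so the completion convention is harmless.

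The centered score \(h^{-d}\{\widehat{\bm g}_{n,h}-\widehat{\bm\Gamma}_{n,h}\bbeta_h\}\) has mean exactly zero by the definition of \(\bbeta_h\). Its variance is \(n^{-1}h^{-2d}\E[K_{h,z}^2\bx\bx^\top\{\varepsilon+\bx^\top(\bbeta(Z)-\bbeta_h)\}^2]\). Using \(\bm\Omega\) and \(\sup_{d(q,z)\le h}\|\bbeta(q)-\bbeta_h\|=O(h)\), this is \(O\{(nh^d)^{-1}\}\). Chebyshev then gives the \(O_p\{(nh^d)^{-1/2}\}\) term.

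The main obstacle is the bias step: the geometric bookkeeping in normal coordinates. One must verify that the Riemannian volume correction and the curvature terms in the second-order Taylor expansion cancel in the ratio \(\bm\Gamma_h^{-1}\bm g_h\), leaving exactly the Laplace--Beltrami expression. One must also confirm that odd moments vanish so the remainder is \(O(h^4)\) rather than \(O(h^3)\). I would handle this by expanding both moments with the same generic lemma for \(\int K(\|u\|)F(\exp_z(hu))J_h(u)\,du\) and then applying it twice.
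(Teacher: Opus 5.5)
Your proposal is correct and follows essentially the same route as the paper. You expand $\bm\Gamma_h$ and $\bm g_h$ by the normal-coordinate radial-moment expansion, which you rederive inline; your constant $c_\kappa\int\|u\|^2K(\|u\|)\,du$ equals the paper's $-\mu_2\operatorname{Scal}_\kappa/6$. You invert by a Neumann series so the curvature/volume term cancels and leaves $h^2\bm B(z)+O(h^4)$. You then bound $\widehat\bbeta_h-\bbeta_h$ through the score identity, using second-moment (Chebyshev) bounds on the normalized Gram matrix and on the mean-zero local score. One cosmetic slip: the polar volume element is $\{R_\kappa\sinh(s/R_\kappa)\}^{d-1}ds\,d\sigma$, so the factor you display is the Jacobian relative to $s^{d-1}ds\,d\sigma$; your resulting expansion of $dV_\kappa$ is nonetheless correct.
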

Equation~\eqref{eq:pointwise-rate} separates a second-order smoothing term
from local sampling noise. Both vanish under the bandwidth conditions of
Assumption~\ref{ass:pointwise-asymptotics}.

\begin{theorem}[Population bias and sampling variance]
\label{thm:bias-variance}
Under Assumption~\ref{ass:pointwise-asymptotics}, the population expansion
below holds. The unconditional variance statement additionally
requires uniform integrability of
\(nh^d\|\widehat\bbeta_h(z)-\bbeta_h(z)\|^2\), including the completed
singular-design event. This extra condition is not needed for the
probability or distribution limits. Then
\begin{align}
  \bbeta_h(z)-\bbeta(z)&=h^2\bm B(z)+O(h^4),
  \label{eq:population-bias-expansion}\\
  \Var\{\widehat\bbeta_h(z)\}&=
  \frac{1}{nh^d}\bm V(z)+o\{(nh^d)^{-1}\}.
  \label{eq:variance-expansion}
\end{align}
\end{theorem}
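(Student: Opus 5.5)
Both claims will be proved by working in normal coordinates at the target. Write \(q=\exp_z(hu)\) with \(u\in\R^d\) expressed in the orthonormal frame \(e_1,\ldots,e_d\), so that \(d_\kappa(z,q)=h\|u\|\) exactly and \(K_{h,z}(q)=K(\|u\|)\).

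\textbf{Population expansion.} The hyperbolic volume in these coordinates is
\[
dV_\kappa=\{R_\kappa\sinh(h\|u\|/R_\kappa)/(h\|u\|)\}^{d-1}h^d\,du
=\{1+c_\kappa h^2\|u\|^2+O(h^4)\}h^d\,du,
\]
with \(c_\kappa=(d-1)/(6R_\kappa^2)\). It is radial. Hence
\[
\bm\Gamma_h(z)=h^d\int K(\|u\|)\,\bm Q(\exp_z hu)\,J(hu)\,du,\qquad
\bm g_h(z)=h^d\int K(\|u\|)\,(\bm Q\bbeta)(\exp_z hu)\,J(hu)\,du,
\]
where \(J\) denotes the volume density above. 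The identity \(\E(\bx Y\mid Z)=\bm Q\bbeta/f\cdot f\) follows from Equation~\eqref{eq:asymptotic-model}. I will Taylor-expand \(\bm Q\) and \(\bm Q\bbeta\) to fourth order along the radial geodesics, using the four bounded covariant derivatives. Radial symmetry of \(K\) and \(J\) kills all odd moments. Since \(\int u_au_bK=\mu_2\delta_{ab}\) and the second-order terms sum to the Laplacian in normal coordinates at the centre, this gives
\[
h^{-d}\bm\Gamma_h=\mu_0\bm Q+\tfrac{h^2}{2}\mu_2\Delta_\kappa\bm Q+h^2\tilde\mu c_\kappa\bm Q+O(h^4),
\]
and the analogous expansion for \(h^{-d}\bm g_h\) with \(\bm Q\) replaced by \(\bm Q\bbeta\). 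Here \(\tilde\mu=\int\|u\|^2K\).

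Inverting the first expansion by a Neumann series and multiplying by the second, the volume-curvature terms \(c_\kappa\) cancel because they multiply \(\bm Q\) and \(\bm Q\bbeta\) identically. What remains is
\[
\bbeta_h-\bbeta=\frac{h^2\mu_2}{2\mu_0}\bm Q^{-1}\bigl[\Delta_\kappa(\bm Q\bbeta)-(\Delta_\kappa\bm Q)\bbeta\bigr]+O(h^4),
\]
which is Equation~\eqref{eq:population-bias-expansion} with \(\bm B\) as in Equation~\eqref{eq:asymptotic-bias-constant}. The remainder is \(O(h^4)\), not \(O(h^3)\), because the third-order terms are odd and integrate to zero.

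\textbf{Variance expansion.} I will linearize around the population target:
\[
\widehat\bbeta_h-\bbeta_h=\widehat{\bm\Gamma}_{n,h}^{-1}\bigl(\widehat{\bm g}_{n,h}-\widehat{\bm\Gamma}_{n,h}\bbeta_h\bigr).
\]
\begin{enumerate}
\item The centred score \(S_n=n^{-1}\sum_i w_i\bx_i(Y_i-\bx_i^\top\bbeta_h)\) has mean zero by definition of \(\bbeta_h\).
\item Its covariance is \(n^{-1}\Var\{K_{h,z}\bx(Y-\bx^\top\bbeta_h)\}\). Writing \(Y-\bx^\top\bbeta_h=\varepsilon+\bx^\top(\bbeta(Z)-\bbeta_h)\), the second piece is \(O(h)\) on the kernel support. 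The same change of variables then gives \(n^{-1}h^d\{\nu_0 f\bm\Omega+o(1)\}\).
\item Chebyshev's inequality together with the \(4+\delta\) moment bounds gives \(h^{-d}\widehat{\bm\Gamma}_{n,h}=\mu_0\bm Q+o_p(1)\), since its variance is \(O((nh^d)^{-1})\).
\item Slutsky's lemma then yields \(nh^d\,\E\|\cdot\|^2\)-level control in distribution: \(\sqrt{nh^d}(\widehat\bbeta_h-\bbeta_h)\) has limiting covariance \((\mu_0\bm Q)^{-1}\nu_0f\bm\Omega(\mu_0\bm Q)^{-1}=\bm V(z)\).
\item The stated uniform integrability of \(nh^d\|\widehat\bbeta_h-\bbeta_h\|^2\), including the completed singular event, upgrades convergence of second moments of the limit law to convergence of the actual variance. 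The bias contribution \(\E\widehat\bbeta_h-\bbeta_h\) is then \(o((nh^d)^{-1/2})\) by the same uniform integrability, so it does not affect Equation~\eqref{eq:variance-expansion}.
\end{enumerate}

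\textbf{Main obstacle.} The hardest step will be the geometric Taylor expansion. It must be shown that the second-order radial terms integrate exactly to \(\tfrac{\mu_2}{2}\Delta_\kappa\) applied entrywise to matrix fields. It must also be shown that the curvature correction in the Jacobian, together with any curvature in the expansion of the field along geodesics, cancels in the ratio \(\bm\Gamma_h^{-1}\bm g_h\). My first attempt will work with the scalar functions \(t\mapsto\bm Q(\exp_z(tu))\) and use the fact that, in normal coordinates at the centre, the Christoffel symbols vanish, so the coordinate Hessian equals the covariant Hessian. The equivalence with Equation~\eqref{eq:asymptotic-bias-equivalent} then follows from the covariant product rule.
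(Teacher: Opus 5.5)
Your population-bias argument is the paper's own. You apply the normal-coordinate kernel expansion to $\bm Q$ and $\bm Q\bbeta$, invert by a Neumann series, and let the volume term cancel in the ratio. Your explicit Jacobian term agrees with the paper's Ricci contraction, since $\tilde\mu c_\kappa=d\mu_2\cdot\{-(d-1)\kappa/6\}=-\mu_2\operatorname{Scal}_\kappa/6$. Steps 1--3 of your variance part reproduce the paper's score identity, score covariance and Gram limit, and step 5 is its final moment argument.

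The step that does not hold as written is step 4. Let $\bm U_n=(nh^d)^{-1/2}\sum_i\bm\psi_{i,h}(z)$ be the scaled centred score. On nonsingular designs, $\bm T_{n,h}=\sqrt{nh^d}\{\widehat\bbeta_h(z)-\bbeta_h(z)\}=(h^{-d}\widehat{\bm\Gamma}_{n,h})^{-1}\bm U_n$. Your steps 2--3 then give only $\bm T_{n,h}=(\mu_0\bm Q)^{-1}\bm U_n+o_p(1)$ together with $\Var(\bm U_n)\to\nu_0f\bm\Omega$. Slutsky cannot turn this into a limit law with covariance $\bm V(z)$, because $\bm U_n$ has not been shown to converge in distribution. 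By Fatou, a subsequential limit of $\bm U_n$ can have strictly smaller second moment than $\lim\Var(\bm U_n)$ if mass escapes. Uniform integrability of $\|\bm T_{n,h}\|^2$ only transfers the second moment of whatever limit $\bm T_{n,h}$ actually has.

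The paper closes this gap with the Lindeberg--Feller CLT for the triangular array $a^\top\bm\psi_{i,h}(z)/\sqrt{nh^d}$ plus Cram\'er--Wold. Lindeberg's condition follows from the compact kernel support and the $(4+\delta)$-moment bounds; this is the argument behind Theorem~\ref{thm:asymptotic-normality}. It yields $\bm T_{n,h}\Rightarrow N_p\{\bm0,\bm V(z)\}$, and your step 5 then finishes the proof, including $\E\bm T_{n,h}\to\bm0$ to absorb the squared mean.

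An alternative is to bound $\E\|\bm U_n\|^{2+\eta}$ uniformly for some small $\eta>0$. Rosenthal's inequality and the moment bounds of Assumption~\ref{ass:pointwise-asymptotics} give this bound, which makes $\|\bm U_n\|^2$ uniformly integrable. Combined with your assumed uniform integrability of $\|\bm T_{n,h}\|^2$, that lets you pass $\E\bm T_{n,h}\bm T_{n,h}^\top$ directly to the limit of $(\mu_0\bm Q)^{-1}\E(\bm U_n\bm U_n^\top)(\mu_0\bm Q)^{-1}$ without identifying a limit law.
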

Equation~\eqref{eq:population-bias-expansion} describes smoothing bias, not
an unqualified identity for \(\E\widehat\bbeta_h-\bbeta\).
Equation~\eqref{eq:variance-expansion} requires the stated moment control
because rare nearly singular local designs can affect unconditional moments.

\begin{theorem}[Pointwise asymptotic normality]
\label{thm:asymptotic-normality}
Under Assumption~\ref{ass:pointwise-asymptotics},
\begin{equation}
  \sqrt{nh^d}\{\widehat\bbeta_h(z)-\bbeta_h(z)\}
  \Rightarrow N_p\{\bm0,\bm V(z)\}.
  \label{eq:clt-population-target}
\end{equation}
If \(nh^{d+8}\to0\), then
\begin{equation}
  \sqrt{nh^d}\{\widehat\bbeta_h(z)-\bbeta(z)-h^2\bm B(z)\}
  \Rightarrow N_p\{\bm0,\bm V(z)\}.
  \label{eq:clt-bias-centred}
\end{equation}
If the stronger undersmoothing condition \(nh^{d+4}\to0\) holds, then
\begin{equation}
 \sqrt{nh^d}\{\widehat\bbeta_h(z)-\bbeta(z)\}
 \Rightarrow N_p\{\bm0,\bm V(z)\}.
 \label{eq:clt-undersmoothed}
\end{equation}
\end{theorem}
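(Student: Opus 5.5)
The proof linearizes the estimator around the population target $\bbeta_h(z)$, applies a triangular-array Lindeberg--Feller CLT to the weighted score, and then transfers the centring to $\bbeta(z)$ using the bias expansion of Theorem~\ref{thm:bias-variance}. Write
\begin{equation*}
 \widehat\bbeta_h(z)-\bbeta_h(z)=\widehat{\bm\Gamma}_{n,h}(z)^{-1}\bm s_n(z),\qquad
 \bm s_n(z)=n^{-1}\sum_{i=1}^n w_i(z;h)\bx_i\{Y_i-\bx_i^\top\bbeta_h(z)\},
\end{equation*}
which holds on the admissible event, whose probability tends to one; the completion convention affects only an event of vanishing probability. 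By the definition of $\bbeta_h$ in Equation~\eqref{eq:population-and-sample-estimator}, $\E\bm s_n(z)=\bm0$ exactly, so $\bm s_n$ is a centred average of i.i.d.\ terms.

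The first step is the Gram matrix. Using geodesic normal coordinates at $z$, the volume element is $dV_\kappa=\{1+O(\|u\|^2)\}du$. Hence $h^{-d}\bm\Gamma_h(z)=\mu_0\bm Q(z)+O(h^2)$, and a variance bound, using the moment conditions on $\|\bx_i\|^4$ and the boundedness of $K$, gives $\operatorname{Var}$ of each entry of $h^{-d}\widehat{\bm\Gamma}_{n,h}$ of order $O\{(nh^d)^{-1}\}$. Therefore $h^{-d}\widehat{\bm\Gamma}_{n,h}(z)\xrightarrow{p}\mu_0\bm Q(z)$, which is positive definite.

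The second step, and the core of the argument, is the CLT for $\sqrt{nh^d}\,h^{-d}\bm s_n(z)=\sum_i\bm\xi_{ni}$, where $\bm\xi_{ni}=(nh^d)^{-1/2}w_i\bx_i\{\varepsilon_i+\bx_i^\top(\bbeta(Z_i)-\bbeta_h(z))\}$.
\begin{itemize}
\item \emph{Covariance.} Since $\bbeta(Z_i)-\bbeta_h(z)=O(h)$ on the kernel support, its contribution to the covariance is $O(h^2)$. The main term is $h^{-d}\E\{K_{h,z}(Z)^2\varepsilon^2\bx\bx^\top\}\to\nu_0 f(z)\bm\Omega(z)$, by the same normal-coordinate change of variables.
\item \emph{Lindeberg condition.} This follows from a Lyapunov bound with exponent $2+\delta/2$. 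Using the $(4+\delta)$ conditional moments, $\sum_i\E\|\bm\xi_{ni}\|^{2+\delta/2}=O\{(nh^d)^{-\delta/4}\}\to0$.
\end{itemize}
The Cram\'er--Wold device then yields $\sum_i\bm\xi_{ni}\Rightarrow N_p\{\bm0,\nu_0f\bm\Omega\}$. Slutsky's lemma combined with the first step gives the sandwich
\begin{equation*}
 (\mu_0\bm Q)^{-1}(\nu_0 f\bm\Omega)(\mu_0\bm Q)^{-1}=\frac{\nu_0}{\mu_0^2 f}\,\bm M^{-1}\bm\Omega\bm M^{-1}=\bm V(z),
\end{equation*}
which proves Equation~\eqref{eq:clt-population-target}.

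The remaining two displays follow by recentring. By Equation~\eqref{eq:population-bias-expansion},
\begin{equation*}
 \sqrt{nh^d}\{\bbeta_h(z)-\bbeta(z)-h^2\bm B(z)\}=O(\sqrt{nh^{d+8}})\to0,
\end{equation*}
which gives Equation~\eqref{eq:clt-bias-centred}. Likewise,
\begin{equation*}
 \sqrt{nh^d}\{\bbeta_h(z)-\bbeta(z)\}=O(\sqrt{nh^{d+4}})+O(\sqrt{nh^{d+8}})\to0
\end{equation*}
under $nh^{d+4}\to0$, which gives Equation~\eqref{eq:clt-undersmoothed}. Both conclusions use Slutsky's lemma once more.

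The main obstacle is the geometric change of variables in the covariance and Gram limits. One must verify that the exponential map at $z$ (Equation~\eqref{eq:exp-map-explicit}) turns $K\{d_\kappa(z,q)/h\}$ into exactly $K(\|u\|)$. It does, because $d_\kappa(z,\exp_z(hu))=h\|u\|$ for $h\|u\|$ below the injectivity radius, which is infinite here. The Jacobian $\{R_\kappa\sinh(r/R_\kappa)/r\}^{d-1}=1+O(r^2)$ then contributes only lower-order terms. With this in place, everything else is standard kernel-smoothing bookkeeping.
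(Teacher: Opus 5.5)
Your proposal is correct and follows essentially the same route as the paper. Both write $\widehat\bbeta_h-\bbeta_h=\widehat{\bm\Gamma}_{n,h}^{-1}n^{-1}\sum_i\bm\psi_{i,h}$, apply a Lindeberg--Feller/Cram\'er--Wold CLT to the scaled score with limit covariance $\nu_0f\bm\Omega$, use Slutsky with $h^d\widehat{\bm\Gamma}_{n,h}^{-1}\xrightarrow{p}(\mu_0\bm Q)^{-1}$, and recentre via the $O(h^4)$ population-bias remainder. Your explicit Lyapunov exponent $2+\delta/2$, chosen so that the smoothing-bias part of the score needs only the $\|\bx_i\|^{4+\delta}$ moments, usefully supplies the Lindeberg verification that the paper only asserts.
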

Equations~\eqref{eq:clt-population-target}, \eqref{eq:clt-bias-centred},
and \eqref{eq:clt-undersmoothed} distinguish centring at the population
target, explicit bias centring, and making the leading bias negligible.
None of these distribution limits alone guarantees moment convergence.

The population smoothing bias and leading covariance in
Theorem~\ref{thm:bias-variance} give the asymptotic mean squared error criterion
\begin{equation}
  \operatorname{AMSE}_z(h)=h^4\|\bm B(z)\|_2^2+
  \frac{\tr\{\bm V(z)\}}{nh^d},
  \label{eq:amse}
\end{equation}
and, when \(\bm B(z)\ne0\),
\begin{equation}
  h_{\mathrm{AMSE}}(z)=
  \left[\frac{d\,\tr\{\bm V(z)\}}
  {4n\|\bm B(z)\|_2^2}\right]^{1/(d+4)}.
  \label{eq:optimal-bandwidth}
\end{equation}
Equation~\eqref{eq:amse} balances squared smoothing bias against sampling
variance. Minimizing it gives Equation~\eqref{eq:optimal-bandwidth}, not an
automatic justification of a data-selected bandwidth's inferential law.
Thus \(h\asymp n^{-1/(d+4)}\); in two dimensions,
\(h\asymp n^{-1/6}\) and the optimal mean squared error is of order
\(n^{-2/3}\).

\subsection{Curvature and adaptive neighbourhoods}

Curvature enters through geodesic normal coordinates, the
Laplace--Beltrami derivatives, and the hyperbolic volume element. For constant
sectional curvature the contraction chain is
\begin{equation}
 \operatorname{Ric}_\kappa=(d-1)\kappa g,
 \qquad \operatorname{Scal}_\kappa
 =\operatorname{tr}_g\operatorname{Ric}_\kappa=d(d-1)\kappa.
 \label{eq:curvature-contractions}
\end{equation}
Equation~\eqref{eq:curvature-contractions} links sectional curvature to the
Ricci bilinear form and then to scalar curvature. Appendix~\ref{app:geometric-definitions} gives the polar metric and exact ball-volume integral: for \(d\geq2\),
large-radius volume grows proportionally to
\(\exp\{(d-1)r/R_\kappa\}\). A regular rooted tree with integer branching factor
\(b_{\rm tree}\geq2\)
has \(b_{\rm tree}^{\ell}\) nodes at level \(\ell=0,1,\ldots\); this
exponential increase parallels the volume capacity of hyperbolic space
\citep{krioukov2010}. This motivates hierarchical neighbourhoods;
it is not a regression-risk comparison. For the
constant-curvature model used here, the normal-coordinate Jacobian satisfies
\(J_z(v)=1-\operatorname{Ric}_\kappa(v,v)/6+O(\|v\|^4)\). Because the kernel is
radial, angular integration contracts the Ricci tensor to its trace, so that
\(\int K(\|u\|)\operatorname{Ric}_\kappa(u,u)\,du
=\mu_2\operatorname{Scal}_\kappa\). This explains why the resulting expansion
is expressed in terms of scalar curvature. For a scalar, vector, or matrix
field \(\bm F\) with four bounded covariant derivatives near an interior
target and a kernel satisfying Assumption~\ref{ass:pointwise-asymptotics},
\begin{equation}
\begin{aligned}
 &\int K\{d_\kappa(z,q)/h\}\bm F(q)\,dV_\kappa(q)\\
 &\quad=h^d\left[\mu_0\bm F(z)+h^2\mu_2
 \left\{\tfrac12\Delta_\kappa\bm F(z)
 -\tfrac16\operatorname{Scal}_\kappa\bm F(z)\right\}+O(h^4)\right].
\end{aligned}
\label{eq:curved-kernel-moment}
\end{equation}
Equation~\eqref{eq:curved-kernel-moment} is the radial-moment lemma proved
in Appendix~\ref{app:normal-coordinate-expansion}. It is applied entrywise to the local Gram and score
moments, explaining where geometry enters their statistical expansions.

\begin{proposition}[Curvature contribution]
\label{prop:curvature-contribution}
Under Assumption~\ref{ass:pointwise-asymptotics}, for constant sectional curvature \(\kappa\),
\(\operatorname{Scal}_\kappa=d(d-1)\kappa\), and
\begin{equation}
  \operatorname{Vol}_\kappa\{B(z,h)\}=v_dh^d
  \left[1-\frac{\operatorname{Scal}_\kappa}{6(d+2)}h^2+O(h^4)\right].
  \label{eq:small-ball-volume}
\end{equation}
Here \(v_d\) is the volume of the Euclidean unit ball in \(\R^d\), and
\(B(z,h)\) denotes a geodesic ball. Equation~\eqref{eq:small-ball-volume}
quantifies its departure from flat volume as \(h\to0\).
The scalar-curvature term in Equation~\eqref{eq:curved-kernel-moment}
cancels from the leading local-constant coefficient bias, but remains in
neighbourhood volume and higher-order variance. To isolate its variance
contribution, suppose \(f\), \(\bm M\), \(\bm\Omega\), and \(\bbeta\) are
constant near \(z\), local covariates and conditional error variance are
bounded, and the scaled covariance-sandwich moment condition in
Appendix~\ref{app:curvature-proofs} holds. If also \(nh^{d+2}\to\infty\), then
\begin{equation}
  \Var\{\widehat\bbeta_h(z)\}=\frac{\bm V(z)}{nh^d}
  \left[1+\frac{\operatorname{Scal}_\kappa}{6}
  \left(2\frac{\mu_2}{\mu_0}-\frac{\nu_2}{\nu_0}\right)h^2\right]
  +o\{h^2/(nh^d)\}.
  \label{eq:curvature-variance-correction}
\end{equation}
These additional conditions control the random-inverse contribution;
Equation~\eqref{eq:curvature-variance-correction} is a curvature-only
expansion, not the full second-order variance
formula for varying design or coefficient fields.
\end{proposition}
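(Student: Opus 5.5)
The plan treats the three claims separately: the scalar-curvature identity, the small-ball volume, and the variance correction. The third is the substantive one.

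\textbf{Scalar curvature.} The identity \(\operatorname{Scal}_\kappa=d(d-1)\kappa\) follows from the contraction chain~\eqref{eq:curvature-contractions}. For constant sectional curvature the curvature tensor is \(R(X,Y)Z=\kappa\{g(Y,Z)X-g(X,Z)Y\}\). Tracing once gives \(\operatorname{Ric}_\kappa=(d-1)\kappa g\), and tracing again gives \(d(d-1)\kappa\).

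\textbf{Small-ball volume.} I will work in geodesic normal coordinates at \(z\). Write
\(\operatorname{Vol}_\kappa\{B(z,h)\}=\int_{\|v\|<h}J_z(v)\,dv\) and insert the Jacobian expansion
\(J_z(v)=1-\operatorname{Ric}_\kappa(v,v)/6+O(\|v\|^4)\) recorded before the proposition. By symmetry,
\(\int_{\|v\|<h}v_av_b\,dv=\delta_{ab}v_dh^{d+2}/(d+2)\). Contracting the Ricci form against this therefore produces \(\operatorname{Scal}_\kappa\,v_dh^{d+2}/(d+2)\), and the \(O(\|v\|^4)\) remainder integrates to \(O(h^{d+4})\). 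This gives~\eqref{eq:small-ball-volume}. As a cross-check, the exact polar integral \(\omega_{d-1}\int_0^h\{R_\kappa\sinh(r/R_\kappa)\}^{d-1}dr\) from Appendix~\ref{app:geometric-definitions} can be Taylor expanded to the same result. I do not apply~\eqref{eq:curved-kernel-moment} with the boxcar kernel, because that kernel is not Lipschitz.

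\textbf{Variance correction: reduction.} Because \(\bbeta\) is constant near \(z\) and the kernel has compact support, \(Y_i-\bx_i^\top\bbeta(z)=\varepsilon_i\) on the support of the weights. Hence
\(\widehat\bbeta_h(z)-\bbeta(z)=\widehat{\bm\Gamma}_{n,h}^{-1}\bm S_n\), where
\(\bm S_n=n^{-1}\sum_iw_i\bx_i\varepsilon_i\). This also gives \(\bbeta_h=\bbeta\) exactly, so there is no smoothing bias.

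\textbf{Variance correction: deterministic expansion.} Apply~\eqref{eq:curved-kernel-moment} with \(\bm F=f\bm M\). The field is constant, so its Laplacian vanishes and
\(\bm\Gamma_h=h^df\bm M\{\mu_0-h^2\mu_2\operatorname{Scal}_\kappa/6+O(h^4)\}\).
The score \(\bm S_n\) has mean zero, and \(\Var(\bm S_n)=n^{-1}\E\{K_{h,z}(Z)^2\varepsilon^2\bx\bx^\top\}\). Applying the same lemma with kernel \(K^2\) gives
\(\Var(\bm S_n)=n^{-1}h^df\bm\Omega\{\nu_0-h^2\nu_2\operatorname{Scal}_\kappa/6+O(h^4)\}\).
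The idealized sandwich is \(\bm\Gamma_h^{-1}\Var(\bm S_n)\bm\Gamma_h^{-1}\). Inverting the first factor contributes \(+2h^2(\mu_2/\mu_0)\operatorname{Scal}_\kappa/6\), and the middle factor contributes \(-h^2(\nu_2/\nu_0)\operatorname{Scal}_\kappa/6\). Together these give exactly the bracket in~\eqref{eq:curvature-variance-correction} multiplied by \(\bm V(z)/(nh^d)\).

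\textbf{Variance correction: random inverse (main obstacle).} I must show that replacing \(\widehat{\bm\Gamma}_{n,h}^{-1}\) by \(\bm\Gamma_h^{-1}\) changes the variance only by \(o\{h^2/(nh^d)\}\). I will expand
\(\widehat{\bm\Gamma}^{-1}=\bm\Gamma^{-1}-\bm\Gamma^{-1}\bm D\bm\Gamma^{-1}+\bm\Gamma^{-1}\bm D\bm\Gamma^{-1}\bm D\widehat{\bm\Gamma}^{-1}\), with \(\bm D=\widehat{\bm\Gamma}-\bm\Gamma\) of relative size \(O_p\{(nh^d)^{-1/2}\}\).

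The first-order cross term \(\E\{\bm\Gamma^{-1}\bm D\bm\Gamma^{-1}\bm S_n\bm S_n^\top\bm\Gamma^{-1}\}\) involves, at distinct indices, a factor \(\E\{\bm D\}=0\). At a common index it involves \(\E(w_i^3\bx_i\bx_i^\top\bx_i\varepsilon_i^2)\), which is \(O(h^d)\), so the cross term is \(O\{n^{-2}h^{-2d}\}\), a relative error of \(O\{(nh^d)^{-1}\}\). The same order holds for the second-order terms.

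The condition \(nh^{d+2}\to\infty\) is exactly what makes \((nh^d)^{-1}=o(h^2)\), so these contributions are absorbed into the remainder. The remainder involving the random \(\widehat{\bm\Gamma}^{-1}\), including the completed singular event, is handled by the scaled covariance-sandwich moment condition of Appendix~\ref{app:curvature-proofs} together with the bounded covariates and bounded conditional variance. I expect this uniform-integrability step to be the delicate part of the proof.
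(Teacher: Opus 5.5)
Your proposal is correct. It has the same overall structure as the paper's proof: the contraction identity, a Jacobian-based ball volume, the population sandwich from the $K$ and $K^2$ versions of the radial-moment lemma, and a random-inverse remainder of relative order $(nh^d)^{-1}=o(h^2)$.

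For the ball volume, the paper simply applies Lemma~\ref{lem:radial-kernel-moment} with $\bm F\equiv1$ and $K(r)=I(r<1)$, for which $\mu_0=v_d$ and $\mu_2=v_d/(d+2)$. The appendix statement of that lemma explicitly drops the Lipschitz requirement for this deterministic expansion. Your caution about the boxcar kernel is therefore unnecessary, and your direct integration of $J_z$ is the same computation.

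The real difference is the random-inverse step. The paper conditions on $(\bX,\bZ)$ first. Because $\bbeta$ is constant, the conditional mean is exactly $\bbeta(z)$ on nonsingular designs. With $N_h=nh^d$, the variance therefore equals $\E\{N_h^{-1}\bm A_{n,h}^{-1}\bm C_{n,h}\bm A_{n,h}^{-1}\}$ up to the completion term, where $\bm C_{n,h}$ uses $\sigma_i^2$ rather than $\varepsilon_i^2$. The paper then Taylor expands $(\bm A,\bm C)\mapsto\bm A^{-1}\bm C\bm A^{-1}$ about the means on the event $\|\bm A_{n,h}-\E\bm A_{n,h}\|\le\lambda_{\min}(\E\bm A_{n,h})/2$. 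The complement of that event has probability $O(e^{-cN_h})$ by a Bernstein bound. On the event, the linear terms have exactly zero mean and the inverse is deterministically bounded. Hence~\eqref{eq:curvature-design-moment} is needed, via H\"older, only on an exponentially rare event.

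Your resolvent expansion keeps $\bm S_n\bm S_n^\top$ random, so your first-order cross term does not vanish. You correctly show it is still of relative order $N_h^{-1}$ through same-index terms. The price is that $\widehat{\bm\Gamma}_{n,h}^{-1}$ appears in your remainder on the whole sample space. Here is how to close that step:
\begin{itemize}
\item Write $\widehat\bbeta_h(z)-\bbeta(z)=T_1+T_2+T_3$ with $T_3=\bm\Gamma_h^{-1}\bm D\bm\Gamma_h^{-1}\bm D\widehat{\bm\Gamma}_{n,h}^{-1}\bm S_n$.
\item Bound every product involving $T_3$ by Cauchy--Schwarz. A one-sided product such as $\widehat{\bm\Gamma}^{-1}\bm C$ is not directly covered by the sandwich condition, so it should not be bounded on its own.
\item After conditioning on the design, $\E\|T_3\|^2$ is $N_h^{-1}$ times the sandwich $\bm A^{-1}\bm C\bm A^{-1}$ flanked by four factors $\bm A-\E\bm A$. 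H\"older with~\eqref{eq:curvature-design-moment} makes this $O(N_h^{-3})$.
\item The bounded singular-design completion contributes $O\{\Pr(\widehat{\bm\Gamma}_{n,h}\ \text{singular})\}$. Markov's inequality with high moments of $\bm A_{n,h}-\E\bm A_{n,h}$ already makes this $o\{h^2/(nh^d)\}$, so you do not need the exponential bound.
\end{itemize}
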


\begin{corollary}[Adaptive nearest-neighbour rate]
\label{cor:adaptive-knn}
Assume the model and local conditions of
Assumption~\ref{ass:pointwise-asymptotics}, replacing its deterministic
bandwidth condition by \(k\to\infty\), \(k/n\to0\). At a fixed interior
target with \(f(z)>0\), the \(k\)th-neighbour radius has the expansion
\begin{equation}
\begin{aligned}
 H_{n,k}(z)=h_{n,k}^{(0)}(z)\Bigg[1+
 \left\{\frac{\operatorname{Scal}_\kappa}{6d(d+2)}
 -\frac{\Delta_\kappa f(z)}{2d(d+2)f(z)}\right\}
 \{h_{n,k}^{(0)}(z)\}^2\\
 +O\{\{h_{n,k}^{(0)}(z)\}^4\}+O_p(k^{-1/2})\Bigg],
\end{aligned}
\label{eq:knn-radius-expansion}
\end{equation}
where \(h_{n,k}^{(0)}(z)=\{k/(nf(z)v_d)\}^{1/d}\). Let
\(\bar h_{n,k}=F_z^{-1}(k/n)\), where
\(F_z(r)=\Pr\{d_\kappa(z,Z_i)\leq r\}\), be the corresponding population
radius. The smoothing bias at \(\bar h_{n,k}\) is
\(O\{(k/n)^{2/d}\}\), and
\(\widehat\bbeta_{H_{n,k}}-\bbeta_{\bar h_{n,k}}=O_p(k^{-1/2})\).
For unconditional variance statements, additionally require uniform
integrability of
\(k\|\widehat\bbeta_{H_{n,k}}-\bbeta_{\bar h_{n,k}}\|^2\), including
the bounded singular-design completion. Then the variance is \(O(k^{-1})\).
Balancing squared smoothing bias with that variance gives
\(k\asymp n^{4/(d+4)}\) when the leading bias is nonzero.
\end{corollary}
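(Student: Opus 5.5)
The plan has three parts: first pin down the random radius \(H_{n,k}(z)\) through the population ball-mass function \(F_z\), then transfer the fixed-bandwidth results (Theorems~\ref{thm:pointwise-consistency}--\ref{thm:bias-variance}) to the data-driven radius, and finally balance bias against variance.

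\textbf{Radius expansion.} Apply Equation~\eqref{eq:curved-kernel-moment} with the indicator kernel, or directly integrate \(f\) over \(B(z,r)\) in normal coordinates using the Jacobian \(J_z(v)=1-\operatorname{Ric}_\kappa(v,v)/6+O(\|v\|^4)\). This gives
\[
F_z(r)=v_dr^d\Bigl[f(z)+\tfrac{r^2}{2(d+2)}\{\Delta_\kappa f(z)-\tfrac13\operatorname{Scal}_\kappa f(z)\}+O(r^4)\Bigr],
\]
where the factor \((d+2)^{-1}\) comes from \(\int_{\|u\|\le1}u_1^2du/v_d=1/(d+2)\). This is consistent with Equation~\eqref{eq:small-ball-volume} when \(f\) is constant. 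Inverting \(F_z(\bar h)=k/n\) by fixed-point iteration then gives \(\bar h_{n,k}=h^{(0)}_{n,k}[1+c\,(h^{(0)}_{n,k})^2+O((h^{(0)}_{n,k})^4)]\). Dividing the bracketed correction by \(d\) when taking the \(d\)th root produces exactly the coefficient \(\operatorname{Scal}_\kappa/\{6d(d+2)\}-\Delta_\kappa f/\{2d(d+2)f\}\).

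For the stochastic part, use the standard order-statistic representation \(F_z(H_{n,k})=U_{(k)}\), where \(U_{(k)}\) is the \(k\)th uniform order statistic with \(U_{(k)}/(k/n)=1+O_p(k^{-1/2})\). Because \(F_z\) is smooth with \(rF_z'(r)/F_z(r)\to d\), the delta method gives \(H_{n,k}/\bar h_{n,k}=1+O_p(k^{-1/2})\). The multiplicative error terms then combine into Equation~\eqref{eq:knn-radius-expansion}.

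\textbf{Bias and the stochastic term.} Since \(\bar h_{n,k}\asymp (k/n)^{1/d}\) is deterministic and tends to zero, Theorem~\ref{thm:bias-variance} gives \(\bbeta_{\bar h}-\bbeta=O(\bar h^2)=O\{(k/n)^{2/d}\}\). For \(\widehat\bbeta_{H}-\bbeta_{\bar h}\), split the difference as
\[
(\widehat\bbeta_{H}-\widehat\bbeta_{\bar h})+(\widehat\bbeta_{\bar h}-\bbeta_{\bar h}).
\]
The second term is \(O_p\{(n\bar h^d)^{-1/2}\}=O_p(k^{-1/2})\) by Theorem~\ref{thm:asymptotic-normality}, because \(n\bar h^d\asymp k\).

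The first term is the main obstacle, because \(H\) depends on the same sample. I would handle it by a uniform argument over the deterministic shell \(h\in[\bar h(1-Ck^{-1/2}),\bar h(1+Ck^{-1/2})]\), which contains \(H\) with probability approaching one. On that shell, the Lipschitz kernel gives
\[
|K(d/h)-K(d/\bar h)|\le L\,|h-\bar h|/\bar h\quad\text{for }d\le \bar h(1+Ck^{-1/2}).
\]
The normalized Gram matrix \((n\bar h^d)^{-1}\widehat{\bm\Gamma}\) and the centred score therefore change by \(O_p(k^{-1/2})\) uniformly over the shell. This follows from a bracketing or chaining bound on the monotone-in-\(h\) class of indicator envelopes, or more simply from the fact that the number of points in the shell annulus is \(O_p(k^{1/2})\). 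Since the normalized Gram matrix converges to \(\mu_0\bm Q(z)\), which is positive definite, these perturbations propagate through the inverse and give \(\widehat\bbeta_H-\widehat\bbeta_{\bar h}=O_p(k^{-1/2})\).

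\textbf{Variance and balancing.} Under the added uniform integrability of \(k\|\widehat\bbeta_H-\bbeta_{\bar h}\|^2\), convergence in probability at rate \(k^{-1/2}\) upgrades to a second-moment bound, so the variance is \(O(k^{-1})\). The squared bias is of exact order \((k/n)^{4/d}\) when \(\bm B(z)\ne0\). Minimizing \((k/n)^{4/d}+k^{-1}\) over \(k\) gives \(k^{(d+4)/d}\asymp n^{4/d}\), that is, \(k\asymp n^{4/(d+4)}\).
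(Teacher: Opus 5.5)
Your proof is correct. The radius expansion is the paper's: the indicator-kernel expansion of $F_z$, inversion with coefficient $\operatorname{Scal}_\kappa/\{6d(d+2)\}-\Delta_\kappa f(z)/\{2d(d+2)f(z)\}$, and a relative $O_p(k^{-1/2})$ order-statistic fluctuation. Your $F_z(H_{n,k})=U_{(k)}$ representation and mean-value step make the paper's binomial-fluctuation remark explicit.

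The transfer to the estimator takes a genuinely different route. The paper conditions on $H_{n,k}=r$ and on the indices inside the ball. The $k-1$ interior points are then i.i.d.\ from the law truncated to $d_\kappa(z,Z)<r$, the boundary point has weight $K(1)=0$, and the conditional population coefficient is exactly $\bbeta_r(z)$. A conditional Lindeberg CLT, uniform over shrinking relative-radius intervals, gives the $\sqrt k$ limit with covariance $\bm V_k(z)=\nu_0v_d\mu_0^{-2}\bm M^{-1}\bm\Omega\bm M^{-1}$. The random radius then shifts the centring only by $\bbeta_{H_{n,k}}-\bbeta_{\bar h_{n,k}}=O_p\{(h_{n,k}^{(0)})^2k^{-1/2}\}$.

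You instead compare with the estimator at the deterministic radius $\bar h_{n,k}$. The fixed-bandwidth theorems apply there directly, since $n\bar h_{n,k}^d\asymp k\to\infty$, and you bound $\widehat\bbeta_{H}-\widehat\bbeta_{\bar h}$ by a Lipschitz perturbation. This is more elementary: it needs no regular conditional order-statistic law or uniform conditional CLT. It is also not the naive random-bandwidth substitution the paper warns against.

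Neither chaining nor the annulus count is needed. The annulus count matters only for discontinuous kernels; with a Lipschitz kernel every point in the ball changes weight. On $\{|H_{n,k}/\bar h_{n,k}-1|\leq Ck^{-1/2}\}$ you have pathwise $|w_i^{H}-w_i^{\bar h}|\leq C'k^{-1/2}\mathbf 1\{d_i\leq\bar h_{n,k}(1+Ck^{-1/2})\}$. Every sum therefore runs over a deterministic ball, and Markov's inequality suffices.

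What your version gives up is the limiting law. You bound the perturbation only at the same order $k^{-1/2}$ as the sampling error. That proves the stated rate but does not identify the $\sqrt k$ distribution or $\bm V_k$.

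You can recover both within your framework. Because $H_{n,k}$ depends only on the locations, conditionally on $\{(\bx_j,Z_j)\}$ the error part of $\bar h_{n,k}^{-d}n^{-1}\sum_i(w_i^H-w_i^{\bar h})\bx_i(Y_i-\bx_i^\top\widehat\bbeta_{\bar h})$ has mean zero and is $O_p(k^{-1})$. Its smoothing parts carry an extra factor $O(\bar h_{n,k})$ or $O_p(\bar h_{n,k}^2+k^{-1/2})$. Hence $\widehat\bbeta_{H}-\widehat\bbeta_{\bar h}=o_p(k^{-1/2})$. The fixed-bandwidth CLT at $\bar h_{n,k}$ then transfers, and its covariance $\bm V(z)/(n\bar h_{n,k}^d)$ is asymptotically $\bm V_k(z)/k$.
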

Equation~\eqref{eq:knn-radius-expansion} separates the density/curvature
correction from order-statistic noise. Its transfer to an estimator uses the
conditional order-statistic argument in Appendix~\ref{app:curvature-proofs}, not substitution of
a random bandwidth into a pointwise deterministic-bandwidth theorem. The
continuous location density makes distance ties and coincident target points
null events in this result.

\begin{theorem}[Flat-curvature limit]
\label{thm:flat-limit}
Let \(\kappa_m<0\) tend to zero with fixed \(d,p,K\). Identify neighbourhoods
of targets \(z_m\) using one bounded normal-coordinate domain. Suppose the
pulled-back fields \(f_m,\bm M_m,\bm\Omega_m,\bbeta_m\) converge in
\(C^2\) to their Euclidean counterparts, have uniformly bounded fourth
covariant derivatives and the moments in
Assumption~\ref{ass:pointwise-asymptotics}, and their densities and relevant
minimum eigenvalues are uniformly bounded away from zero near the targets.
Let \(n_m\to\infty\), \(h_m\to0\), \(n_mh_m^d\to\infty\), and
\(|\kappa_m|h_m^2\to0\). Then the normalized kernel-moment expansions
and constants \(\bm B_m,\bm V_m\) converge to their Euclidean
local-regression counterparts. With \(n_mh_m^{d+8}\to0\), the bias-centred
normal limit also converges. For unconditional covariance convergence, impose
the uniform-integrability condition of Theorem~\ref{thm:bias-variance}
along this sequence. For adaptive radii, impose the conditions of
Corollary~\ref{cor:adaptive-knn} uniformly in \(m\),
\(k_m\to\infty\), \(k_m/n_m\to0\), and
\(|\kappa_m|(k_m/n_m)^{2/d}\to0\).
\end{theorem}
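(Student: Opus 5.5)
The plan is to make every curvature dependence in the earlier expansions explicit and uniform in \(\kappa_m\), and then let \(\kappa_m\to0\) inside one fixed normal-coordinate chart. Write \(v\in\R^d\) for normal coordinates at \(z_m\), via \(\exp_{z_m}\) composed with an orthonormal frame. In these coordinates \(d_{\kappa_m}(z_m,\exp_{z_m}v)=\|v\|\) exactly. The volume density is the explicit radial Jacobian
\[
J_m(v)=\left\{\frac{R_{\kappa_m}\sinh(\|v\|/R_{\kappa_m})}{\|v\|}\right\}^{d-1}
\]
from Appendix~\ref{app:geometric-definitions}. After the substitution \(v=hu\) with \(\|u\|\le1\), the local moment integral in Equation~\eqref{eq:curved-kernel-moment} becomes \(h^d\int K(\|u\|)\bm F_m(hu)J_m(hu)\,du\). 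The first step is to record the bound
\[
\sup_{\|u\|\le1}|J_m(hu)-1-\kappa_m(d-1)h^2\|u\|^2/6|\le C\kappa_m^2h^4,
\]
valid uniformly because \(|\kappa_m|h^2\to0\). Combined with \(\Delta_{\kappa_m}\bm F_m(z_m)\to\Delta_0\bm F(0)\), which follows from \(C^2\) convergence and the coordinate formula for the Laplace--Beltrami operator, this shows that the normalized moment expansion converges to its Euclidean counterpart. The scalar-curvature term \(\operatorname{Scal}_{\kappa_m}=d(d-1)\kappa_m\) vanishes in the limit, and the \(O(h^4)\) remainder constant is uniform in \(m\) because the fourth covariant derivatives are uniformly bounded.

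The second step is to apply this entrywise to \(\bm\Gamma_h\) and \(\bm g_h\), exactly as in the proofs of Theorems~\ref{thm:pointwise-consistency}--\ref{thm:bias-variance}. The constants \(\bm B_m\) of Equation~\eqref{eq:asymptotic-bias-equivalent} and \(\bm V_m\) of Equation~\eqref{eq:asymptotic-variance-constant} are continuous functions of the two-jets of \(f_m,\bm M_m,\bbeta_m\) at the target, together with \(f_m(z_m),\bm M_m(z_m),\bm\Omega_m(z_m)\). Covariant derivatives in normal coordinates differ from coordinate derivatives only by Christoffel terms of order \(|\kappa_m|\). With \(C^2\) convergence and the uniform lower bounds on \(f_m\) and on the minimum eigenvalue of \(\bm M_m\), which keep \(\bm Q_m^{-1}\) bounded, this gives \(\bm B_m\to\bm B_0\) and \(\bm V_m\to\bm V_0\).

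The third step is the distributional limit along the triangular array, and I expect it to be the main obstacle. The proof of Theorem~\ref{thm:asymptotic-normality} will be rerun with every constant tracked uniformly. The linearization is \(\widehat\bbeta-\bbeta_h=\widehat{\bm\Gamma}^{-1}(\widehat{\bm g}-\widehat{\bm\Gamma}\bbeta_h)\). For the score term I would verify a Lyapunov condition for the row-wise i.i.d.\ array, using the \(4+\delta\) moments, which are uniform in \(m\). The variance of \(h^{-d/2}\) times one summand converges to \(\nu_0f\bm\Omega\) by step one applied to \(K^2\). The Gram term satisfies \(h^{-d}\widehat{\bm\Gamma}\to\mu_0\bm Q_0\) in probability by a uniform Chebyshev bound. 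Centring at \(\bbeta_m+h^2\bm B_m\) leaves a bias remainder of \(O(h^4)\) with a uniform constant, so \(\sqrt{nh^d}\,h^4\to0\) under \(n_mh_m^{d+8}\to0\). Slutsky's lemma for the array then gives the limit \(N_p(\bm0,\bm V_0)\). The covariance statement follows from convergence in distribution together with the imposed uniform integrability.

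For adaptive radii, the plan is to use Corollary~\ref{cor:adaptive-knn} with uniform constants. In expansion~\eqref{eq:knn-radius-expansion} the curvature correction is \(O\{|\kappa_m|(k_m/n_m)^{2/d}\}\to0\), so \(H_{n,k}/h^{(0)}_{n,k}\to1\) in probability, and the conditional order-statistic argument transfers unchanged.
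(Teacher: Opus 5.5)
Your proposal is correct and follows essentially the same route as the paper. Both use the explicit normal-coordinate Jacobian \(J_m\to1\) uniformly on the rescaled kernel support, convergence of the metric coefficients and Laplace--Beltrami operators under \(C^2\) convergence, and the vanishing of \(\operatorname{Scal}_{\kappa_m}\). Both then use continuity of \(\bm B_m,\bm V_m\) in the low-order jets, Slutsky applied to the triangular-array CLT, and the same small-ball argument for adaptive radii. Your Lyapunov and uniform-constant bookkeeping is simply more explicit than the paper's dominated-convergence summary.

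There is one small sign slip. Since \(\sinh x/x\geq1\), the expansion is \(J_m(hu)=1-\kappa_m(d-1)h^2\|u\|^2/6+O(\kappa_m^2h^4)\), not with a plus sign. This is harmless, because only \(\sup_{\|u\|\le1}|J_m(hu)-1|\to0\) is used.
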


\subsection{When the representation is estimated}

An embedding may be identifiable only up to a common Lorentz isometry. This is
not a problem for LHWR. Statistical latent-space modelling and inference
\citep{papamichalis2021,lixuzhu2023} concern the first-stage problem itself;
the results below concern its consequences for the second-stage regression.

\begin{proposition}[Lorentz-isometry invariance]
\label{prop:lorentz-isometry-invariance}
A simultaneous time-orientation-preserving Lorentz transformation
\(L\), with \(L^\top\bm J L=\bm J\) and
\(\bm J=\diag(-1,\bm I_d)\), of every
training coordinate and the target leaves all distances, weights,
coefficients, fitted values, and predictions unchanged, with response,
covariates, kernel, and bandwidth held fixed. A deterministic bandwidth search
based on these quantities is invariant as well when its tie rule is unchanged.
\end{proposition}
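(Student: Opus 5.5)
The plan is to reduce every claimed invariance to a single fact: a time-orientation-preserving Lorentz transformation maps the upper sheet to itself and preserves the Lorentz inner product. Everything downstream depends on the coordinates only through the Lorentz inner products between the target and the training points.

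First, write \(\innerL{u}{v}=u^\top\bm J v\). The condition \(L^\top\bm J L=\bm J\) then gives \(\innerL{Lu}{Lv}=\innerL{u}{v}\) for all \(u,v\in\R^{1,d}\). In particular \(\innerL{Lz}{Lz}=-R_\kappa^2\). Time-orientation preservation means \(L\) maps the future cone to itself, so \((Lz)_0>0\) whenever \(z_0>0\). Hence \(L\) maps \(\mathbb H_\kappa^d\) onto itself, with the inverse \(L^{-1}\) of the same type.

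Next, Equation~\eqref{eq:lorentz-distance-general} depends on \((z,w)\) only through \(\innerL{z}{w}\). Therefore \(d_\kappa(Lz,LZ_i)=d_\kappa(z,Z_i)\) for every training point and target, and likewise for all training pairs \(d_\kappa(LZ_i,LZ_j)=d_\kappa(Z_i,Z_j)\).

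With the kernel \(K\) and a fixed bandwidth \(h\) held fixed, the weights \(w_i(z;h)\) are unchanged, so \(\bW(z;h)\) is unchanged. For adaptive radii, the set of positive finite distances from a target is unchanged, so the \(k\)th-neighbour radius is the same. Since \(\bX\) and \(\by\) are untouched, Equation~\eqref{eq:lhwr-estimator} returns the same \(\widehat\bbeta_h(z)\). The same holds for admissibility, since the rank condition of Assumption~\ref{ass:local-design} involves only \(\bW\) and \(\bX\). Fitted values, the smoother \(\bS\), residuals, and the prediction \eqref{eq:out-of-sample-prediction} then coincide, because \(\bx_\star\) is unchanged.

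For the bandwidth search, AICc in \eqref{eq:bandwidth-aicc} depends only on \(\RSS\) and \(\tr(\bS)\). CV in \eqref{eq:bandwidth-cv} depends only on leave-one-out fits built from the same weights. Every candidate score is therefore identical, and the argmin over \(\mathcal B\) is the same under an unchanged deterministic tie rule. The final fit with the selected bandwidth is then invariant by the previous paragraph.

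The only point needing care is the orientation step: a general element of \(O(1,d)\) could swap the two sheets. I will show \((Lz)_0>0\) from \(L_{00}\ge1\), which follows from the \((0,0)\) entry of \(L^\top\bm JL=\bm J\) together with time orientation. The estimate is \((Lz)_0=L_{00}z_0+\sum_a L_{0a}z_a\ge L_{00}z_0-\|(L_{0a})_a\|\,\|(z_a)_a\|\). Using \(\|(L_{0a})\|^2=L_{00}^2-1\) (from \(L\bm J L^\top=\bm J\)) and \(\|(z_a)\|<z_0\), the right-hand side is strictly positive.

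I will also note that the Tangent-E reference \eqref{eq:tangent-reference} is equivariant, \(\mu\mapsto L\mu\), by linearity. This lies outside the statement, which concerns intrinsic LHWR.
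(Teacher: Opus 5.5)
Your proposal is correct and follows essentially the same route as the paper's proof: \(L^\top\bm J L=\bm J\) preserves the Lorentz inner product, hence every distance, hence the kernel weights, Gram matrices, scores, coefficients, fitted values, and predictions. Your additional checks fill in points the paper's short proof leaves implicit, notably the bandwidth-search clause: that \(L\) maps the upper sheet onto itself via \(L_{00}\ge1\) and \(\sum_a L_{0a}^2=L_{00}^2-1\); that the adaptive \(k\)th-neighbour radii are unchanged; and that the AICc and CV scores coincide, so the argmin under a fixed tie rule is unchanged.
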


For distinct \(z,q\), let \(u_{zq}\) and \(u_{qz}\) be the unit initial
tangent vectors along their connecting geodesic; explicitly,
\(u_{zq}=\log_z(q)/d_\kappa(z,q)\) and
\(u_{qz}=\log_q(z)/d_\kappa(z,q)\).

\begin{proposition}[First variation of distance]
\label{prop:first-variation-distance}
For distinct \(z,q\in\mathbb H_\kappa^d\) and tangent perturbations
\(a\in T_z\mathbb H_\kappa^d\) and
\(b\in T_q\mathbb H_\kappa^d\), write \(z_t=\exp_z(ta)\) and
\(q_t=\exp_q(tb)\). Then
\begin{equation}
  \left.\frac{d}{dt}d_\kappa(z_t,q_t)\right|_{t=0}
  =-g_z(a,u_{zq})-g_q(b,u_{qz}).
  \label{eq:first-variation-distance}
\end{equation}
\end{proposition}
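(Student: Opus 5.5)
The plan is to compute the derivative directly from the explicit Lorentz formulas, rather than through the general first-variation formula of Riemannian geometry. Write \(c(t)=-\innerL{z_t}{q_t}/R_\kappa^2\), so that \(d_\kappa(z_t,q_t)=R_\kappa\acosh c(t)\). Because \(z\ne q\), Proposition~\ref{prop:geodesic-metric} gives \(c(0)=\cosh(r/R_\kappa)>1\), where \(r=d_\kappa(z,q)\). Hence \(\acosh\) is differentiable at \(c(0)\) and the chain rule yields
\begin{equation*}
 \left.\frac{d}{dt}d_\kappa(z_t,q_t)\right|_{t=0}
 =\frac{R_\kappa\,c'(0)}{\sqrt{c(0)^2-1}}
 =\frac{R_\kappa\,c'(0)}{\sinh(r/R_\kappa)}.
\end{equation*}
From Equation~\eqref{eq:exp-map-explicit}, \(t\mapsto\exp_z(ta)\) is smooth: it equals \(\cosh(ts/R_\kappa)z+R_\kappa\sinh(ts/R_\kappa)a/s\), and its value at \(a=0\) is constant. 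Its derivative at \(t=0\) is \(a\), and likewise \(\dot q_0=b\). Bilinearity of \(\innerL{\cdot}{\cdot}\) then gives \(c'(0)=-\{\innerL{a}{q}+\innerL{z}{b}\}/R_\kappa^2\).

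The second step rewrites these ambient pairings as tangent inner products with the unit vectors \(u_{zq}\) and \(u_{qz}\). Equation~\eqref{eq:tangent-log-map} gives
\begin{equation*}
 u_{zq}=\frac{\log_z(q)}{r}=\frac{1}{R_\kappa\sinh(r/R_\kappa)}
 \left(q+\frac{\innerL{z}{q}}{R_\kappa^2}z\right).
\end{equation*}
Since \(a\in T_z\mathbb H_\kappa^d\), we have \(\innerL{a}{z}=0\), and therefore
\begin{equation*}
 g_z(a,u_{zq})=\frac{\innerL{a}{q}}{R_\kappa\sinh(r/R_\kappa)}.
\end{equation*}
Symmetrically, \(g_q(b,u_{qz})=\innerL{b}{z}/\{R_\kappa\sinh(r/R_\kappa)\}\).

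Substituting into the chain-rule expression gives
\begin{equation*}
 \frac{R_\kappa}{\sinh(r/R_\kappa)}\cdot
 \frac{-\{\innerL{a}{q}+\innerL{b}{z}\}}{R_\kappa^2}
 =-g_z(a,u_{zq})-g_q(b,u_{qz}),
\end{equation*}
which is Equation~\eqref{eq:first-variation-distance}.

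The only delicate point is justifying differentiability: \(\acosh\) is not differentiable at \(1\), and that is exactly why distinctness of \(z\) and \(q\) is required. The identity \(\sqrt{c(0)^2-1}=\sinh(r/R_\kappa)\) uses \(r>0\). I would also remark that the sign convention is consistent: \(u_{zq}\) points toward \(q\), so moving \(z\) along it decreases the distance at unit rate, as expected. No curvature-specific computation beyond the explicit \(\exp\) and \(\log\) maps is needed.
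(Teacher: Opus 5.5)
Your proof is correct, but it takes a different route from the paper. The paper argues abstractly. Distinct points of $\mathbb H_\kappa^d$ are joined by a unique minimizing geodesic, because the space is complete, simply connected and negatively curved. The general Riemannian first variation formula then gives $-g_z\{a,\dot\gamma_{zq}(0)\}-g_q\{b,\dot\gamma_{qz}(0)\}$, and these initial velocities are identified with $u_{zq}$ and $u_{qz}$.

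You instead differentiate the closed form $R_\kappa\acosh\{-\innerL{z_t}{q_t}/R_\kappa^2\}$ directly. You use $\dot z_0=a$ and $\dot q_0=b$ from the explicit exponential map. You then use the explicit logarithmic map and $\innerL{a}{z}=0$ to rewrite $\innerL{a}{q}$ as $R_\kappa\sinh(r/R_\kappa)\,g_z(a,u_{zq})$. Each step checks out, including the formula for $\exp_z(ta)$ for negative $t$, the identity $\sqrt{c(0)^2-1}=\sinh(r/R_\kappa)$, and the sign.

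What your route buys:
\begin{itemize}
\item It is self-contained within the Lorentz model.
\item It makes differentiability of $t\mapsto d_\kappa(z_t,q_t)$ explicit, through smoothness of $c(t)$ and of $\acosh$ on $(1,\infty)$. Passing from the length-variation formula to the distance function needs smooth dependence of the minimizing geodesic on its endpoints, which the paper leaves implicit.
\item It works directly with the paper's definition $u_{zq}=\log_z(q)/d_\kappa(z,q)$, with no separate identification with a geodesic's initial velocity.
\item It shows that only the initial velocities $a$ and $b$ matter, so any $C^1$ curves through $z$ and $q$ would give the same derivative.
\end{itemize}
The paper's argument, in turn, is shorter and transfers essentially verbatim to any Hadamard manifold, or to any Riemannian manifold away from the cut locus. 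Yours relies on the hyperboloid's closed-form distance and logarithm.
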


Write \(d_i=d_\kappa(z,Z_i)\),
\(\widehat d_i=d_\kappa(\widehat z,\widehat Z_i)\), and
\(\rho_n(z)=\max_i|\widehat d_i-d_i|/h\).
Equation~\eqref{eq:first-variation-distance} measures the directional change
in distance when both endpoints move; it is not asserted at \(z=q\), where
the unit directions are undefined. Its exponential maps are defined in
Equation~\eqref{eq:exp-map-explicit}. Hats here denote estimated coordinates after any common
isometry alignment, and \(\rho_n\) measures distance error relative to the
smoothing radius.

\begin{assumption}[Generated-distance regularity]
\label{ass:generated-representation}
In addition to Assumption~\ref{ass:pointwise-asymptotics}, \(K\), extended by
zero beyond 1 on \([0,\infty)\), is continuously differentiable with
Lipschitz derivative. The estimated representation uses external information
or cross-fitting so that
an observation's response error is not used to construct its own distance.
Suppose \(\rho_n(z)=o_p(1)\). The minimum eigenvalues of
\(h^{-d}\widehat{\bm\Gamma}_{n,h}^{O}\) and
\(h^{-d}\widehat{\bm\Gamma}_{n,h}^{E}\), with the \(n^{-1}\)
normalization of Equation~\eqref{eq:local-population-moments}, are bounded away
from zero with probability tending to one. With
\(r_i^O(z)=Y_i-\bx_i^\top\widehat\bbeta_h^O(z)\), assume also
\begin{equation*}
  \frac1n\sum_{i:d_i\leq2h}\|\bx_i\|\,|r_i^O(z)|=O_p(h^d).
\end{equation*}
The bisquare kernel satisfies the stated differentiability requirement.
\end{assumption}

\begin{theorem}[Generated-distance expansion]
\label{thm:generated-distance-expansion}
Under Assumption~\ref{ass:generated-representation}, let
\(\widehat\bbeta_h^{E}\) and \(\widehat\bbeta_h^{O}\) denote estimates
using estimated and oracle distances. Then
\begin{equation}
  \widehat\bbeta_h^{E}(z)-\widehat\bbeta_h^{O}(z)
  =\bm R_{n,h}(z)+O_p\{\rho_n(z)^2\},
  \label{eq:generated-distance-expansion}
\end{equation}
where the linearized first-stage distance contribution is
\begin{equation}
 \bm R_{n,h}(z)=
 \{\widehat{\bm\Gamma}_{n,h}^{O}(z)\}^{-1}
 \frac1n\sum_i K'(d_i/h)\frac{\widehat d_i-d_i}{h}\,
                    \bx_i r_i^O(z).
 \label{eq:generated-distance-linear-term}
\end{equation}
Consequently the difference is \(O_p\{\rho_n(z)\}\).
\end{theorem}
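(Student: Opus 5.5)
The plan is to compare the two weighted least-squares solutions through the standard perturbation identity for ratios of local moments. Write \(\widehat w_i=K(\widehat d_i/h)\) and \(w_i=K(d_i/h)\). Use the superscripts \(E\) and \(O\) on the moments of Equation~\eqref{eq:local-population-moments} built from these two sets of weights.

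Since \(\widehat{\bm g}^E_{n,h}-\widehat{\bm\Gamma}^E_{n,h}\widehat\bbeta^O_h=n^{-1}\sum_i\widehat w_i\bx_i r_i^O(z)\), and \(n^{-1}\sum_i w_i\bx_i r_i^O=0\) by the oracle normal equations, we obtain the exact identity
\begin{equation*}
\widehat\bbeta^E_h-\widehat\bbeta^O_h=\{\widehat{\bm\Gamma}^E_{n,h}\}^{-1}\frac1n\sum_i(\widehat w_i-w_i)\bx_i r_i^O(z).
\end{equation*}
The whole proof therefore reduces to expanding the weight differences \(\widehat w_i-w_i\) and replacing \(\widehat{\bm\Gamma}^E\) by \(\widehat{\bm\Gamma}^O\).

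For the weights, extend \(K\) by zero beyond \(1\). Assumption~\ref{ass:generated-representation} makes this extension \(C^1\) with Lipschitz derivative, so a first-order Taylor expansion gives
\begin{equation*}
\widehat w_i-w_i=K'(d_i/h)\frac{\widehat d_i-d_i}{h}+e_i,\qquad |e_i|\le \tfrac12\mathrm{Lip}(K')\rho_n^2 .
\end{equation*}
On the event \(\rho_n\le1\), which has probability tending to one, both \(w_i\) and \(\widehat w_i\) vanish unless \(d_i\le 1+h\rho_n\le2h\) (taking \(h\le1\)). Hence every nonzero term is indexed by \(\{i:d_i\le2h\}\). The remainder is then bounded by
\begin{equation*}
\rho_n^2\,\frac1n\sum_{d_i\le2h}\|\bx_i\||r_i^O|=O_p(\rho_n^2h^d),
\end{equation*}
using the residual-mass condition in the assumption. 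The linear part is bounded in the same way by \(O_p(\rho_n h^d)\), because \(|K'|\) is bounded. Finally, since \(\widehat{\bm\Gamma}^E\) has minimum eigenvalue of order \(h^d\) with probability tending to one, its inverse is \(O_p(h^{-d})\). Multiplying through, the score difference contributes \(O_p(\rho_n)\) from the linear term and \(O_p(\rho_n^2)\) from the remainder.

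The remaining step is to swap \(\{\widehat{\bm\Gamma}^E\}^{-1}\) for \(\{\widehat{\bm\Gamma}^O\}^{-1}\). We write
\begin{equation*}
\{\widehat{\bm\Gamma}^E\}^{-1}-\{\widehat{\bm\Gamma}^O\}^{-1}=-\{\widehat{\bm\Gamma}^E\}^{-1}(\widehat{\bm\Gamma}^E-\widehat{\bm\Gamma}^O)\{\widehat{\bm\Gamma}^O\}^{-1},
\end{equation*}
where
\begin{equation*}
\widehat{\bm\Gamma}^E-\widehat{\bm\Gamma}^O=n^{-1}\sum_{d_i\le2h}(\widehat w_i-w_i)\bx_i\bx_i^\top .
\end{equation*}
We expect this to be \(O_p(\rho_nh^d)\), since \(n^{-1}\sum_{d_i\le2h}\|\bx_i\|^2=O_p(h^d)\). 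That bound follows from the conditional moment bounds, the positive bounded density, and a Markov argument on the expected local mass, which is of order \(\Pr(d_i\le 2h)=O(h^d)\). The inverse difference is then \(O_p(\rho_nh^{-d})\). Multiplied by the \(O_p(\rho_nh^d)\) linear score, it yields \(O_p(\rho_n^2)\).

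Collecting terms, the leading piece is exactly \(\bm R_{n,h}(z)\) of Equation~\eqref{eq:generated-distance-linear-term}, and every other term is \(O_p(\rho_n^2)\). Since \(\bm R_{n,h}(z)=O_p(\rho_n)\) and \(\rho_n=o_p(1)\), the difference is \(O_p(\rho_n)\).

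The main obstacle is the bookkeeping of stochastic orders. It is not obvious that the local residual mass and the local design mass are both of order \(h^d\) uniformly over the random distance perturbation. We handle this by restricting to the fixed oracle band \(d_i\le2h\), on the high-probability event \(\rho_n\le1\). This makes the index set nonrandom with respect to \(\widehat d\), so the cross-fitting condition is not needed for these bounds. That condition matters only later, for the distribution of \(\bm R_{n,h}\).
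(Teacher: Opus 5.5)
Your proposal is correct and follows essentially the same route as the paper's proof. Both arguments use the exact identity $\widehat\bbeta_h^E-\widehat\bbeta_h^O=\{\widehat{\bm\Gamma}_{n,h}^E\}^{-1}n^{-1}\sum_i(w_i^E-w_i^O)\bx_i r_i^O(z)$ from the oracle normal equations, a first-order Taylor expansion of the weights with an $O\{\rho_n(z)^2\}$ remainder controlled by the residual-mass condition, and the resolvent identity to swap the estimated inverse Gram matrix for the oracle one; your Markov bound on the local design mass supplies detail the paper compresses into its appeal to the ``local moment condition.'' One small slip: on $\{\rho_n(z)\le 1\}$ the support restriction should read $d_i\le h\{1+\rho_n(z)\}\le 2h$, not $d_i\le 1+h\rho_n$, and no condition $h\le 1$ is needed.
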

Equation~\eqref{eq:generated-distance-linear-term} is a \(p\)-vector:
distance perturbations change weights, and the oracle inverse Gram matrix
translates the perturbed residual score into coefficients.
Equation~\eqref{eq:generated-distance-expansion} controls the remaining
quadratic error; the expansion by itself does not assert a first-stage CLT.

\begin{theorem}[Joint response--representation limit]
\label{thm:joint-representation-limit}
Suppose Assumption~\ref{ass:generated-representation} holds,
\(nh^{d+8}\to0\), and \(\sqrt{nh^d}\rho_n(z)^2\xrightarrow{p}0\).
If jointly
\begin{equation}
  \begin{pmatrix}
  \sqrt{nh^d}\{\widehat\bbeta_h^O(z)-\bbeta_h(z)\}\\
  \sqrt{nh^d}\bm R_{n,h}(z)
  \end{pmatrix}
  \Rightarrow
  \begin{pmatrix}\bm G_Y(z)\\\bm G_R(z)\end{pmatrix},
  \label{eq:joint-score-limit}
\end{equation}
then
\begin{equation}
  \sqrt{nh^d}\{\widehat\bbeta_h^E(z)-\bbeta(z)-h^2\bm B(z)\}
  \Rightarrow \bm G_Y(z)+\bm G_R(z).
  \label{eq:joint-representation-limit}
\end{equation}
When the scaled first-stage term is negligible, the oracle limiting law is
recovered. Otherwise its contribution must be retained in the joint law.
\end{theorem}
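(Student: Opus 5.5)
The plan is to decompose the centred estimated-representation error into three pieces and handle each with an earlier result:
\begin{equation*}
 \widehat\bbeta_h^E(z)-\bbeta(z)-h^2\bm B(z)
 =\{\widehat\bbeta_h^O(z)-\bbeta_h(z)\}
 +\{\widehat\bbeta_h^E(z)-\widehat\bbeta_h^O(z)\}
 +\{\bbeta_h(z)-\bbeta(z)-h^2\bm B(z)\}.
\end{equation*}
After multiplying by \(\sqrt{nh^d}\), the first piece is the oracle sampling term. The second is the generated-distance perturbation. The third is the deterministic bias remainder.

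\textbf{Step 1: bias remainder.} For the third piece I will use the population bias expansion \eqref{eq:population-bias-expansion} of Theorem~\ref{thm:bias-variance}. This requires only Assumption~\ref{ass:pointwise-asymptotics}, which is contained in Assumption~\ref{ass:generated-representation}. It gives \(\bbeta_h(z)-\bbeta(z)-h^2\bm B(z)=O(h^4)\), hence
\[
\sqrt{nh^d}\,O(h^4)=O\{(nh^{d+8})^{1/2}\}\to0
\]
under the assumed \(nh^{d+8}\to0\). This is the same computation that yields \eqref{eq:clt-bias-centred} from \eqref{eq:clt-population-target}.

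\textbf{Step 2: generated-distance term.} For the second piece I will invoke Theorem~\ref{thm:generated-distance-expansion}:
\[
\widehat\bbeta_h^E-\widehat\bbeta_h^O=\bm R_{n,h}(z)+O_p\{\rho_n(z)^2\}.
\]
Scaling by \(\sqrt{nh^d}\), the remainder becomes \(O_p\{\sqrt{nh^d}\rho_n(z)^2\}\). I have to be careful here, because \(O_p\) of a random quantity means the remainder is bounded by a tight multiple of \(\rho_n^2\). So the scaled remainder is \(O_p(1)\cdot\sqrt{nh^d}\rho_n(z)^2\), and this tends to \(0\) in probability by the hypothesis \(\sqrt{nh^d}\rho_n(z)^2\xrightarrow{p}0\). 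Nothing more is needed.

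\textbf{Step 3: joint limit.} Combining the steps,
\[
\sqrt{nh^d}\{\widehat\bbeta_h^E-\bbeta-h^2\bm B\}
=\sqrt{nh^d}\{\widehat\bbeta_h^O-\bbeta_h\}+\sqrt{nh^d}\bm R_{n,h}+o_p(1).
\]
By the joint convergence \eqref{eq:joint-score-limit} and the continuous mapping theorem applied to the addition map \((a,b)\mapsto a+b\), the sum of the first two terms converges to \(\bm G_Y(z)+\bm G_R(z)\). Slutsky's lemma then absorbs the \(o_p(1)\) term. The joint hypothesis is essential at this point, because marginal limits alone would not determine the law of the sum.

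\textbf{Step 4: negligible first stage.} For the final remark, suppose \(\sqrt{nh^d}\bm R_{n,h}\xrightarrow{p}0\). Then \(\bm G_R=0\). By Theorem~\ref{thm:asymptotic-normality} the oracle block converges to \(N_p\{\bm0,\bm V(z)\}\), so the bias-centred oracle limit \eqref{eq:clt-bias-centred} is recovered.

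\textbf{Main obstacle.} The only delicate point is that \(\widehat\bbeta_h^E\) and \(\widehat\bbeta_h^O\) are defined only on nonsingular-design events. Assumption~\ref{ass:generated-representation} makes both scaled Gram matrices well conditioned with probability tending to one. The completed values on the complementary event therefore affect nothing in distribution, and restricting to that event costs only \(o_p(1)\).
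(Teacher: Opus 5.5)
Your proposal is correct and follows essentially the same route as the paper. It uses the identical three-term decomposition, Theorem~\ref{thm:generated-distance-expansion} with $\sqrt{nh^d}\rho_n(z)^2\xrightarrow{p}0$ for the quadratic remainder, Equation~\eqref{eq:population-bias-expansion} with $nh^{d+8}\to0$ for the bias remainder, and the continuous mapping theorem plus Slutsky on the joint limit; your Step~4, identifying $\bm G_Y(z)$ with the $N_p\{\bm0,\bm V(z)\}$ limit of Theorem~\ref{thm:asymptotic-normality}, only makes explicit what the paper states as the representation term dropping out of the limiting law.
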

Equation~\eqref{eq:joint-score-limit} is an additional joint convergence
assumption; cross-fitting alone does not establish it or independence of its
components. If the joint limit is Gaussian, with
\(\E\bm G_R=\bm b_R\), \(\Var(\bm G_R)=\bm V_R\), and
\(\Cov(\bm G_Y,\bm G_R)=\bm C_{YR}\), then
Equation~\eqref{eq:joint-representation-limit} has the law
\begin{equation}
 N_p\{\bm b_R,\ \bm V+\bm V_R+\bm C_{YR}+\bm C_{YR}^\top\}.
 \label{eq:joint-gaussian-covariance}
\end{equation}
All covariance blocks in Equation~\eqref{eq:joint-gaussian-covariance} are
\(p\times p\). A sufficient condition for oracle equivalence is
\(\sqrt{nh^d}\rho_n(z)\to_p0\); a nonnegligible first-stage term must
instead be estimated or propagated under a justified joint law.

\begin{corollary}[Generated target]
\label{cor:generated-target-location}
Suppose the training representation is fixed and
\(\widehat z=\exp_z(v_n)\), with \(v_n\in T_z\mathbb H_\kappa^d\).
Let \([v_n]_z\in\R^d\) be its coordinates in an orthonormal frame, and
let the \(p\times d\) matrix \(\bm D\bbeta(z)\) represent the
differential of \(\bbeta\) in that frame. In addition to
Assumption~\ref{ass:pointwise-asymptotics}, assume
\begin{equation*}
  \|v_n\|=o_p(h),\qquad
  \sqrt{nh^d}\|v_n\|^2\xrightarrow{p}0,\qquad nh^{d+8}\to0.
\end{equation*}
For a deterministic \(r_n=o(h)\) with
\(\Pr(\|v_n\|\leq r_n)\to1\), require stochastic equicontinuity of the
centred local estimation error:
\begin{equation*}
\begin{aligned}
 \sup_{\|v\|\leq r_n}\sqrt{nh^d}\big\|
 &[\widehat\bbeta_h\{\exp_z(v)\}-\bbeta_h\{\exp_z(v)\}]\\
 &-[\widehat\bbeta_h(z)-\bbeta_h(z)]\big\|\xrightarrow{p}0.
\end{aligned}
\end{equation*}
Assume the joint convergence
\begin{equation*}
  \begin{pmatrix}
  \sqrt{nh^d}\{\widehat\bbeta_h(z)-\bbeta_h(z)\}\\
  \sqrt{nh^d}[v_n]_z
  \end{pmatrix}
  \Rightarrow
  \begin{pmatrix}\bm G_Y(z)\\ U(z)\end{pmatrix}.
\end{equation*}
Then
\begin{equation}
  \sqrt{nh^d}\{\widehat\bbeta_h(\widehat z)-\bbeta(z)-h^2\bm B(z)\}
  \Rightarrow \bm G_Y(z)+\bm D\bbeta(z)U(z).
  \label{eq:generated-target-limit}
\end{equation}
If \(U(z)\sim N_d\{\bm0,\bm\Sigma_Z(z)\}\) independently of
\(\bm G_Y(z)\), the additional covariance is
\begin{equation}
  \bm D\bbeta(z)\bm\Sigma_Z(z)\bm D\bbeta(z)^\top.
  \label{eq:generated-target-variance}
\end{equation}
\end{corollary}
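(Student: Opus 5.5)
We decompose at the generated target $\widehat z=\exp_z(v_n)$ as
\begin{equation*}
\begin{aligned}
 \widehat\bbeta_h(\widehat z)-\bbeta(z)-h^2\bm B(z)
 &=[\widehat\bbeta_h(z)-\bbeta_h(z)]
 +\big\{[\widehat\bbeta_h(\widehat z)-\bbeta_h(\widehat z)]
 -[\widehat\bbeta_h(z)-\bbeta_h(z)]\big\}\\
 &\quad+[\bbeta_h(\widehat z)-\bbeta(\widehat z)-h^2\bm B(z)]
 +[\bbeta(\widehat z)-\bbeta(z)].
\end{aligned}
\end{equation*}
We multiply each bracket by $\sqrt{nh^d}$ and treat the pieces in turn.

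\emph{Stochastic oscillation.} On the event $\{\|v_n\|\le r_n\}$, whose probability tends to one, the second bracket is bounded by the supremum in the assumed stochastic equicontinuity condition. It is therefore $o_p(1)$ after scaling.

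\emph{Coefficient increment.} A first-order Taylor expansion along the geodesic $t\mapsto\exp_z(tv_n)$ gives
\begin{equation*}
 \bbeta(\widehat z)-\bbeta(z)=\bm D\bbeta(z)[v_n]_z+O(\|v_n\|^2).
\end{equation*}
This uses the bounded second covariant derivatives from Assumption~\ref{ass:pointwise-asymptotics}; in normal coordinates at $z$, $\exp_z$ is the identity map on coordinates. The remainder, scaled by $\sqrt{nh^d}$, vanishes because $\sqrt{nh^d}\|v_n\|^2\to_p0$.

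\emph{Bias bracket.} This is the main obstacle: the expansion of Theorem~\ref{thm:bias-variance} is needed at a moving point, uniformly in the target. I would go back to the proof of Equation~\eqref{eq:population-bias-expansion} via the radial-moment lemma, Equation~\eqref{eq:curved-kernel-moment}. The constants in its $O(h^4)$ remainder depend only on bounds for four covariant derivatives and on the lower bound for $f$ and for the eigenvalues of $\bm M$. These bounds hold uniformly on a small geodesic ball around the interior target $z$. Hence
\begin{equation*}
 \sup_{\|v\|\le r_n}\|\bbeta_h(\exp_zv)-\bbeta(\exp_zv)-h^2\bm B(\exp_zv)\|=O(h^4).
\end{equation*}
Next, $\bm B$ is $C^1$, since it involves second derivatives of fields having four bounded derivatives. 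Therefore
\begin{equation*}
 h^2\|\bm B(\widehat z)-\bm B(z)\|=O_p(h^2\|v_n\|)=o_p(h^3).
\end{equation*}
Scaled by $\sqrt{nh^d}$, the $O(h^4)$ term vanishes because $nh^{d+8}\to0$. The term $o_p(h^3)$ also vanishes: $\sqrt{nh^d}\,h^3\le\sqrt{nh^d}\,h^2\|v_n\|\cdot h/\|v_n\|$ is awkward, so I would instead bound it directly by $\sqrt{nh^d}\,h^2\|v_n\|$. By Cauchy--Schwarz,
\begin{equation*}
 \sqrt{nh^d}\,h^2\|v_n\|\le \tfrac12\sqrt{nh^d}\,(h^4+\|v_n\|^2),
\end{equation*}
and both terms on the right are $o_p(1)$ by the two stated rate conditions.

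\emph{Conclusion.} Collecting terms,
\begin{equation*}
 \sqrt{nh^d}\{\widehat\bbeta_h(\widehat z)-\bbeta(z)-h^2\bm B(z)\}
 =\sqrt{nh^d}\{\widehat\bbeta_h(z)-\bbeta_h(z)\}+\bm D\bbeta(z)\sqrt{nh^d}[v_n]_z+o_p(1).
\end{equation*}
The map $(a,u)\mapsto a+\bm D\bbeta(z)u$ is continuous and linear. The assumed joint convergence, the continuous mapping theorem and Slutsky's lemma then give Equation~\eqref{eq:generated-target-limit}.

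\emph{Gaussian case.} Under the independent Gaussian specification, $\bm G_Y(z)\sim N_p\{\bm0,\bm V(z)\}$ by Theorem~\ref{thm:asymptotic-normality}. The law of $\bm D\bbeta(z)U(z)$ is $N_p\{\bm0,\bm D\bbeta\,\bm\Sigma_Z\,\bm D\bbeta^\top\}$. Independence makes the covariances add, which yields the additional covariance in Equation~\eqref{eq:generated-target-variance}.
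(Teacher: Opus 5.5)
Your proposal is correct and follows the paper's proof essentially step for step. It uses the same pieces: the centred error at \(z\), the equicontinuity oscillation, the uniform \(O(h^4)\) population-bias remainder together with \(\bm B(\widehat z)-\bm B(z)=O_p(\|v_n\|)\), the covariant Taylor expansion of \(\bbeta(\widehat z)\), and then continuous mapping and Slutsky. The only cosmetic difference is the cross term \(\sqrt{nh^d}\,h^2\|v_n\|\). The paper handles it through tightness of \(\sqrt{nh^d}[v_n]_z\) from the joint convergence, whereas you use \(2ab\le a^2+b^2\) with the two rate conditions; both are valid.
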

Equation~\eqref{eq:generated-target-limit} propagates target displacement
through the coefficient differential. Equation~\eqref{eq:generated-target-variance}
adds its covariance only under the stated independence; otherwise the two
cross-covariance terms must also be included.

\begin{corollary}[Adaptive representation stability]
\label{cor:representation-stability-knn}
Assume the model, kernel, and local regularity in
Corollary~\ref{cor:adaptive-knn}, with fixed \(p\), and the no-own-response
construction condition of Assumption~\ref{ass:generated-representation}.
For \(a\in\{O,E\}\), suppose the adaptive Gram matrices
\(k^{-1}\sum_i w_i^a\bx_i\bx_i^\top\) have minimum eigenvalues bounded
away from zero with probability tending to one. The two distance vectors use
the same index set. For the implementation's positive-distance ordering,
assume these target-to-training distances remain strictly positive under both
representations with probability tending to one. On the union
\(\mathcal N_n\) of the oracle and estimated contributing neighbourhoods,
require
\(k^{-1}\sum_{i\in\mathcal N_n}
(\|\bx_i\|^2+\|\bx_i\||r_i^O|)=O_p(1)\), where
\(r_i^O=Y_i-\bx_i^\top\widehat\bbeta_k^O(z)\).
If \(\delta_n=\max_i|\widehat d_i-d_i|
=o_p\{h_{n,k}^{(0)}(z)\}\), then
\begin{equation}
  \|\widehat\bbeta_k^E(z)-\widehat\bbeta_k^O(z)\|
  =O_p\{\delta_n/h_{n,k}^{(0)}(z)\}=o_p(1).
  \label{eq:representation-stability-knn-rate}
\end{equation}
Whenever the oracle adaptive estimator admits a \(\sqrt{k}\)-scaled limiting
law, the stronger condition
\(\sqrt{k}\delta_n/h_{n,k}^{(0)}(z)\xrightarrow{p}0\) transfers that law to the
estimated-representation estimator.
\end{corollary}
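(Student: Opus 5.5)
The plan is to compare the two adaptive fits through their common weighted least-squares form. For $a\in\{O,E\}$, write $w_i^a=K(d_i^a/H^a)$, where $H^a$ is the $k$th positive-distance radius computed from that representation. Set $\widehat{\bm\Gamma}^a=k^{-1}\sum_i w_i^a\bx_i\bx_i^\top$ and $\widehat{\bm g}^a=k^{-1}\sum_i w_i^a\bx_iY_i$. Since $Y_i=\bx_i^\top\widehat\bbeta_k^O(z)+r_i^O$ and the oracle normal equations give $\sum_i w_i^O\bx_ir_i^O=0$, we obtain the exact identity
\begin{equation*}
 \widehat\bbeta_k^E(z)-\widehat\bbeta_k^O(z)
 =(\widehat{\bm\Gamma}^E)^{-1}\frac1k\sum_{i\in\mathcal N_n}(w_i^E-w_i^O)\bx_ir_i^O .
\end{equation*}
By hypothesis, the minimum eigenvalue of $\widehat{\bm\Gamma}^E$ is bounded away from zero with probability tending to one. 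Also, only indices in $\mathcal N_n$ contribute, because the kernel vanishes beyond radius one. It therefore suffices to show that $\max_{i\in\mathcal N_n}|w_i^E-w_i^O|=O_p(\delta_n/h_{n,k}^{(0)})$. Combined with the assumed bound $k^{-1}\sum_{\mathcal N_n}\|\bx_i\||r_i^O|=O_p(1)$, this gives Equation~\eqref{eq:representation-stability-knn-rate}.

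The weight comparison runs through the radii and then through the kernel. First, the order statistics are $1$-Lipschitz in sup norm. Both distance vectors use the same index set, and with probability tending to one no distance is zero in either representation, so the positive-distance ordering takes the $k$th order statistic of the same $n$ numbers in both cases. Hence $|H^E-H^O|\le\delta_n$. Second, Corollary~\ref{cor:adaptive-knn} gives $H^O=h_{n,k}^{(0)}(z)\{1+o_p(1)\}$. Since $\delta_n=o_p(h_{n,k}^{(0)})$, this also yields $H^E/h_{n,k}^{(0)}\to_p1$, so both radii are bounded below by $h_{n,k}^{(0)}/2$ with probability tending to one. Third, $K$ is Lipschitz on $[0,\infty)$ after extension by zero, so
\begin{equation*}
 |w_i^E-w_i^O|\le \mathrm{Lip}(K)\Big|\frac{\widehat d_i}{H^E}-\frac{d_i}{H^O}\Big|
 \le \mathrm{Lip}(K)\Big\{\frac{|\widehat d_i-d_i|}{H^E}+d_i\frac{|H^E-H^O|}{H^EH^O}\Big\}.
\end{equation*}
On $\mathcal N_n$ the oracle distances satisfy $d_i\le \max(H^O,\widehat d_i+\delta_n)\lesssim h_{n,k}^{(0)}$. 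Each term is therefore $O_p(\delta_n/h_{n,k}^{(0)})$, uniformly in $i$.

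For the law-transfer statement, multiply the bound by $\sqrt k$. Under $\sqrt k\,\delta_n/h_{n,k}^{(0)}\to_p0$ we get $\sqrt k\{\widehat\bbeta_k^E-\widehat\bbeta_k^O\}=o_p(1)$, and Slutsky's lemma transfers any $\sqrt k$-scaled limit of the oracle estimator, whatever its centring, to the estimated-representation estimator.

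The step I expect to need the most care is the radius comparison. I must check that the positive-distance convention does not shift the rank under the two representations, which is handled by the assumed strict positivity and the shared index set. I must also check that $\mathcal N_n$, defined as a union of the two supports, is the right set to control. The no-own-response condition is not needed for the deterministic bound itself. It only ensures that the oracle residuals $r_i^O$ and the summability assumption remain meaningful; I would state this explicitly rather than try to exploit a cancellation.
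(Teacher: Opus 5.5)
Your proposal is correct and follows essentially the same route as the paper. Both arguments use Lipschitz continuity of order statistics to get $|H^E-H^O|\le\delta_n$, then Corollary~\ref{cor:adaptive-knn} to show both radii are asymptotic to $h_{n,k}^{(0)}(z)$, then the same two-term bound on $|\widehat d_i/H^E-d_i/H^O|$ over the union of neighbourhoods, then the normal-equation identity for the coefficient rate, and finally Slutsky for the $\sqrt{k}$ law. Your direct use of $(\widehat{\bm\Gamma}^E)^{-1}$ in the exact identity, together with the assumed bound on $k^{-1}\sum_{\mathcal N_n}\|\bx_i\|\,|r_i^O|$, is slightly leaner than the paper's appeal to separate Gram and score perturbation bounds, but it is the same argument.
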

Equation~\eqref{eq:representation-stability-knn-rate} is a stability and
oracle-equivalence bound, not an unconditional adaptive joint-limit theorem.

The practical conclusion is straightforward. A fixed, substantively chosen
feature map can be analyzed conditionally. A learned representation should be
re-estimated within validation folds, and its uncertainty should be propagated
when it is not negligible relative to the local smoothing error.

\section{Simulation Lessons}
\label{sec:simulation}

The simulation study separates locality from geometry. In each replication,
\(n=500\) locations are sampled uniformly by hyperbolic area from domains with
\(R_{\max}\in\{4,6,8,10\}\). Coefficient surfaces are generated only from
Lorentz-hyperbolic distances. A localized regime uses true neighbourhood size
\(k_0=20\); a smooth regime uses \(k_0=80\). Ordinary least squares (OLS),
LHWR, Poincar\'e-E, and Tangent-E use identical responses, covariates, kernels,
and candidate bandwidths. The local methods differ only in distance. For these
matched designs, the candidate grid is \(k=10,20,\ldots,250\), the heterogeneity
multiplier is 2, and Gaussian noise has variance equal to one sixth of the
realized signal variance. The coefficient fields are sums of five intrinsic
bisquare bumps; their fixed target locations, amplitudes, and seed schedule
are specified in the archived simulation code. They are controlled
geometry-aligned designs, not an empirical assertion that real coefficient
surfaces must have this structure.

To distinguish selection from evaluation, let
\(\eta_i=\bx_i^\top\bbeta(Z_i)\) be the noise-free simulated response,
and let \(\widehat\eta_i=\bx_i^\top\widehat\bbeta(Z_i)\) use the fit
after bandwidth selection. We measure signal and coefficient recovery by
\begin{equation}
 \mathrm{RMSE}_{\eta}
 =\left\{n^{-1}\sum_i(\widehat\eta_i-\eta_i)^2\right\}^{1/2},\qquad
 \mathrm{RMSE}_{\beta}
 =\left\{n^{-1}\sum_i\|\widehat\bbeta(Z_i)-\bbeta(Z_i)\|_2^2\right\}^{1/2}.
 \label{eq:simulation-recovery-metrics}
\end{equation}
Equation~\eqref{eq:simulation-recovery-metrics} evaluates recovery at the
sampled locations using known simulation truth. The coefficient norm sums
over all \(p\) coefficients, including the intercept, without division by
\(p\). CV-selected identifies how the bandwidth was chosen; these
recovery errors are not held-out prediction errors.

\input{tables/tas_simulation_core.tex}

Table~\ref{tab:simulation-core-results} gives the central practical result.
For localized heterogeneity, LHWR has the smallest CV-selected coefficient
error, and its advantage remains in the oracle comparison. For the smooth
surface, Tangent-E is best and Poincar\'e-E is close. Negative curvature is
therefore not a generic performance bonus. It helps when the response surface
is locally aligned with the hyperbolic neighbourhoods; a single tangent chart
can be sufficient when variation is broad and smooth.

The rate experiment in Figure~\ref{fig:rate-validation} checks the compact
theory above. Fixed and adaptive bias slopes are \(-0.336\) and \(-0.342\),
close to the predicted \(-1/3\). At \(n=2000\), the scaled intercept variance
is 9.68 versus the analytic value 9.94. At that sample size, bias-centered standardized errors have
near-zero means, standard deviations near one, and coverage between 0.936 and
0.952.

\begin{figure}[!t]
  \centering
  \includegraphics[width=0.88\textwidth]{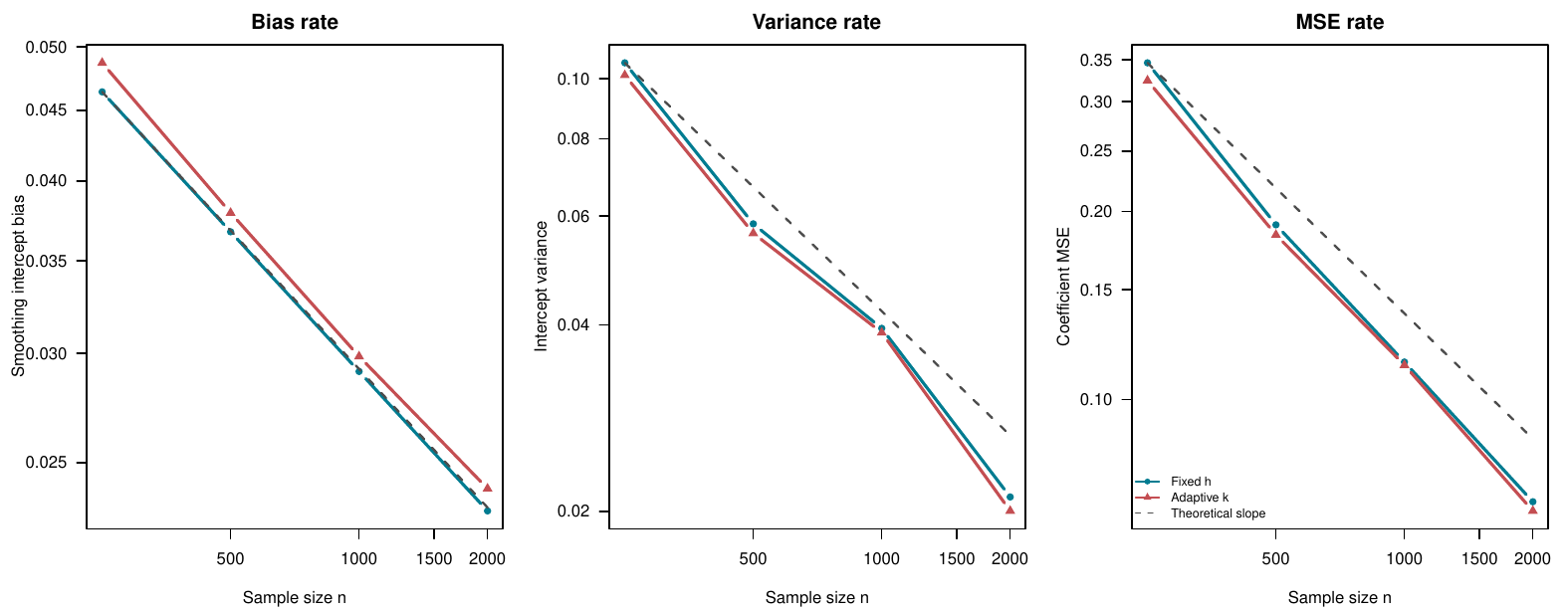}
  \caption{Monte Carlo validation of fixed- and adaptive-bandwidth rates.
  Dashed reference lines show the theoretical slopes.}
  \label{fig:rate-validation}
\end{figure}

When coordinates are estimated, conditional intervals can be too narrow.
Table~\ref{tab:generated-representation-validation} shows that adding the
target-location covariance substantially improves coverage in both first-order
and representation-dominant regimes. The adjusted slope-coefficient coverage
is 0.928--0.938 at \(n=16000\), below the nominal 0.95 target; some
undercoverage also remains in the fixed-representation benchmark. With 1000
replications, the Monte Carlo standard error of a coverage estimate in this
range is approximately 0.008. Thus the experiment demonstrates the importance
of representation uncertainty, not exact finite-sample calibration.

This is a controlled generated-target experiment: training locations remain
fixed within each replication and an independent standard bivariate Gaussian
vector \(U\), expressed in an orthonormal frame of \(T_z\mathbb H^2\),
perturbs the target through \(\widehat z=\exp_z(a_nU)\).
Here \(h=1.7n^{-1/6}\), and the four perturbation scales are \(a_n=0\),
\(n^{-1/2}\), \((nh^2)^{-1/2}\), and \(n^{-1/4}\), respectively.
The intervals use the known second-order bias, coefficient differential, and
leading response variance. They illustrate an oracle uncertainty calculation,
rather than a feasible procedure that estimates all of these quantities.
In the dominant regime, \(\sqrt{nh^2}a_n\) diverges; this is a stress test
outside the finite joint-limit premise of
Corollary~\ref{cor:generated-target-location}, not a direct verification of
that corollary. The experiment does not re-estimate a training-network embedding.

\input{tables/tas_generated_representation.tex}

\section{Economic Similarity as Locality: A WDI Illustration}
\label{sec:wdi}

The application uses 2024 World Bank World Development Indicators (WDI)
\citep{worldbank2025wdi}. After complete-case filtering, \(n=141\) countries
remain. The associational model is
\begin{equation*}
  y_i=\beta_0(z_i)+\beta_1(z_i)\,\text{unemployment}_i+
  \beta_2(z_i)\,\text{internet users}_i+\varepsilon_i,
\end{equation*}
where \(y_i\) is annual GDP growth and both covariates are standardized.

The representation uses standardized log GDP per capita and standardized
trade openness, denoted \(\psi_{i1}\) and \(\psi_{i2}\), in
Equation~\eqref{eq:generic-feature-map-Z}. These quantities are not additional
regressors. They decide which countries contribute most to each local fit.
Consequently, countries that are geographically distant can borrow strength
when their income and trade-integration profiles are similar. This answers a
different question from GWR: how do the unemployment and
internet-use associations vary across economic peer groups?

Figure~\ref{fig:wdi-geometry} shows the same representation in the Lorentz
model and Poincar\'e disk. The bounded disk is easier to read, but LHWR uses
the Lorentz-model coordinates in the left panel to calculate distances.

\begin{figure}[!t]
  \centering
  \includegraphics[width=0.94\textwidth]{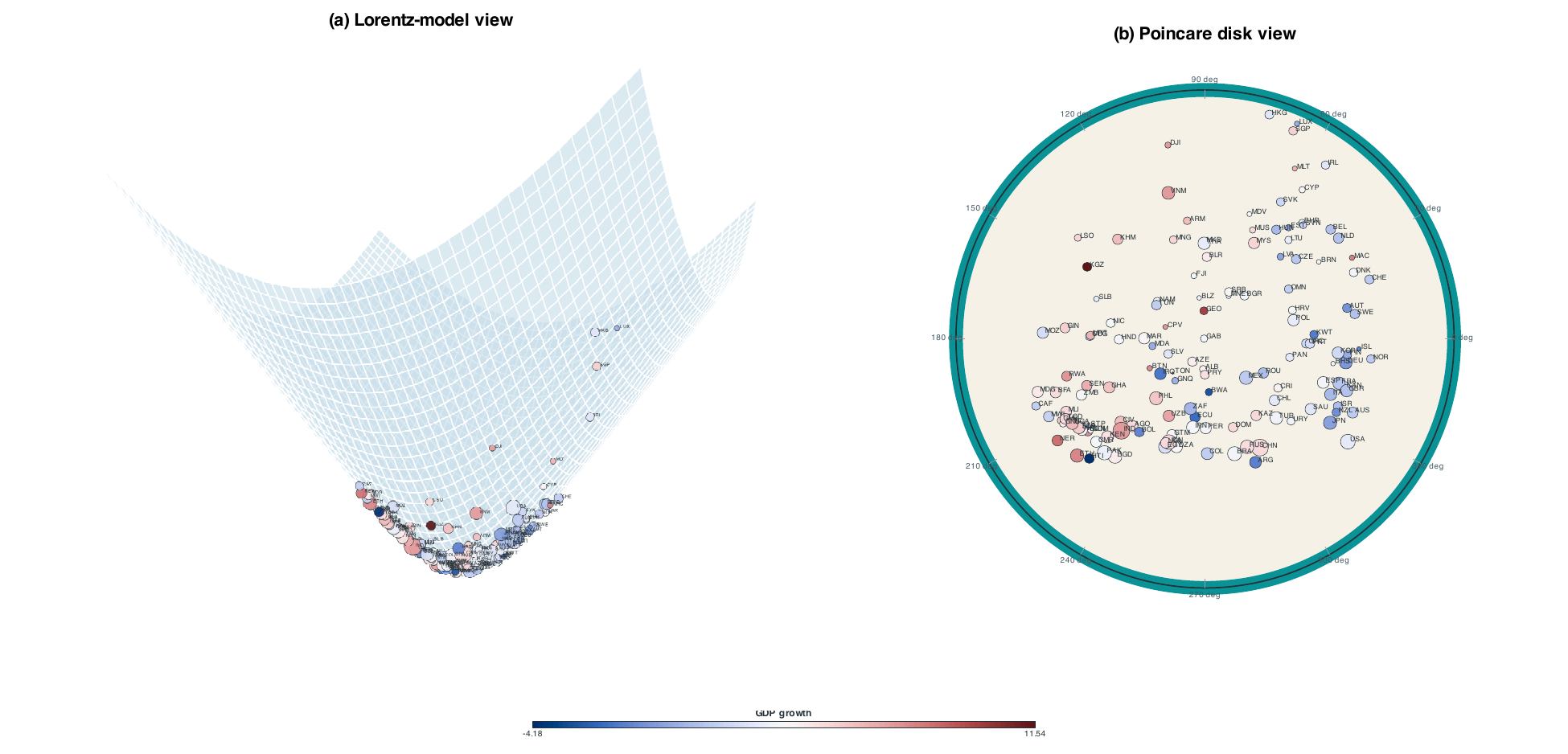}
  \caption{The 141-country WDI sample in two isometric views of the same
  economic representation. Fill color gives annual GDP growth, point size
  reflects population, and labels identify countries.}
  \label{fig:wdi-geometry}
\end{figure}

Prediction is assessed by leave-one-out
cross-validation conditional on the fixed economic representation: each
country is omitted once, and every local method reselects its bandwidth by
\AICc using the remaining responses. The feature standardization and supplied
coordinates are held fixed; the Tangent-E reference is recomputed from each
training subset. This evaluates response prediction for the represented
countries, not an end-to-end representation-learning procedure for new
countries. OLS tests the value of locality; GWR uses great-circle
distance between capital-city coordinates; Poincar\'e-E and Tangent-E use
Euclidean approximations to the same economic representation.
All four local methods use the same bisquare kernel and candidate grid
\(k=20,30,\ldots,120\), with the radius set by the \(k\)th strictly positive
training distance. Zero-distance observations retain weight one, and the
bisquare boundary has weight zero; \(k\) therefore indexes the radius rather
than counting positive weights. GWR retains great-circle distances computed
by \texttt{GWmodel}, but uses the common weighted least-squares and \AICc
implementation. This matches the tuning conventions while allowing each
method to select a different bandwidth.

For the common set of held-out responses, predictive goodness of fit is
\begin{equation}
 R^2_{\mathrm{pred}}=1-
 \frac{\sum_i(y_i-\widehat y_i^{(-i)})^2}
      {\sum_i(y_i-\bar y)^2},\qquad
 \bar y=n^{-1}\sum_i y_i.
 \label{eq:wdi-predictive-r2}
\end{equation}
In Equation~\eqref{eq:wdi-predictive-r2}, \(\widehat y_i^{(-i)}\) is the
prediction with response \(i\) omitted from fitting and bandwidth selection.
The denominator is a common descriptive normalization, not the loss of a
separately cross-validated intercept-only learner; negative values are possible.

\input{tables/tas_wdi_loocv.tex}

Table~\ref{tab:wdi-feature-map-predictive} shows that LHWR has the smallest
held-out \RMSE (2.216) and \MAE (1.678), and the largest predictive
\(R^2\) (0.223). GWR has \RMSE 2.519 and \MAE 1.879 under the matched
tuning protocol. Its excess \RMSE over LHWR is 0.302, with a paired 95\%
interval of [0.040, 0.568]. Paired \RMSE intervals favor LHWR over OLS,
GWR, and Poincar\'e-E. The interval for Tangent-E includes zero, so this
comparison does not establish a clear advantage over Tangent-E.
These percentile intervals
resample 141 paired, already computed prediction errors, with 2000 resamples;
the models are not refitted. They are descriptive
comparisons conditional on these predictions, not fully calibrated inference
for the entire learning procedure. Overlapping training folds and dependence
between countries are not reproduced by this bootstrap. The main empirical improvement
therefore comes from replacing global or geographic pooling with economic
peer groups. The exact Lorentz distance adds only a modest predictive gain for
this smooth two-feature representation.

Figure~\ref{fig:wdi-coefficients} maps the local coefficient estimates and
their pointwise significance classifications. Unemployment has raw pointwise
evidence at 48 locations, 23 of which remain after coefficient-wise
Benjamini--Hochberg adjustment. All 23 estimates are negative and occur mainly
in lower-income and lower-middle-income economic neighbourhoods, including
Ethiopia, Niger, Haiti, Pakistan, and Cameroon. Conditional on internet use,
these estimates describe a more negative local association between
unemployment and GDP growth in those peer groups. Internet use has raw
pointwise evidence at 15 locations, but none remains after adjustment. It
should therefore be interpreted as exploratory rather than as a stable set of
country-specific findings.

\begin{figure}[!t]
  \centering
  \includegraphics[width=0.98\textwidth]{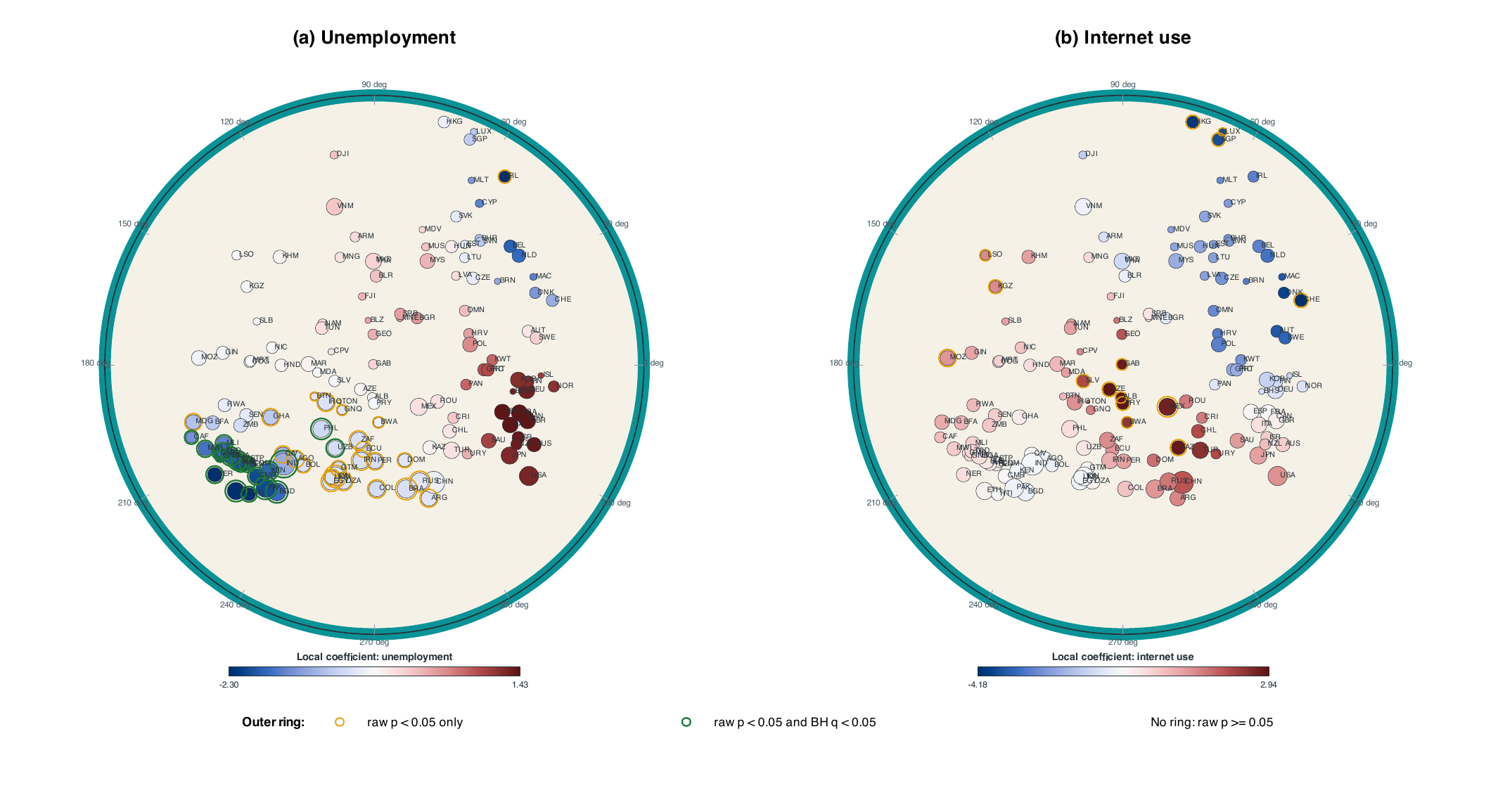}
  \caption{Local coefficients for unemployment and internet use over the WDI
  economic representation. Fill color gives the estimate. Orange rings mark
  raw pointwise \(p<0.05\) only; dark green rings mark both raw significance
  and Benjamini--Hochberg-adjusted \(q<0.05\).}
  \label{fig:wdi-coefficients}
\end{figure}

The local weighted designs are well conditioned: the median and maximum
condition numbers are 1.31 and 2.02, and every local variance inflation factor
is below 1.58. Using the same \(k=40\) diagnostic weights, OLS residuals have
Lorentz-hyperbolic Moran's \(I=0.0717\) with permutation \(p=0.020\), whereas
LHWR residuals have \(I=-0.0383\) with \(p=0.140\). This descriptive contrast
suggests that the positive economic-neighbourhood pattern remaining after
global fitting is not retained after local calibration.

Using the same zero-diagonal, row-standardised bisquare weights and 99
full-vector residual permutations, the corresponding LISA
diagnostic flags only Bolivia as a low--high residual
outlier (local \(I_i^{(H)}=-0.6193\), raw two-sided permutation \(p=0.020\));
the other 140 countries are not flagged at the 5\% threshold. Bolivia has a
below-mean residual but a positive weighted lag of centred residuals among
its economic peers. This is an isolated discordant pattern, not a high--high
or low--low residual cluster. The single unadjusted flag among 141 local
diagnostics is exploratory, not a multiple-testing-adjusted discovery or
evidence that all residual dependence has been eliminated.

\section{Choosing Among LHWR and Its Approximations}
\label{sec:guidance}

The simulations and application support a conditional recommendation rather
than a winner-takes-all rule.

\begin{itemize}
  \item Use LHWR when the representation has a defensible negative-curvature
  interpretation and sharply local peer groups are scientifically plausible.
  \item Use Tangent-E as a sensitivity analysis and as a simpler alternative
  when observations occupy a moderate region around one reference point or
  the coefficient surface appears smooth.
  \item Use Poincar\'e-E only as a projection-based ablation. Ordinary
  Euclidean distance in the Poincar\'e disk is not hyperbolic distance.
  \item Retain a global model to show whether locality matters at all, and use
  GWR only when physical distance is a scientifically relevant
  competing notion of locality.
\end{itemize}

The most consequential modeling decision is often the representation, not the
choice among closely related metrics. A feature map based on economic
development defines different peer groups from one based on institutions,
financial structure, or trade networks. Such alternatives should be chosen
from subject-matter reasoning and assessed by held-out prediction and
stability, not selected only because they produce a visually appealing map.

\section{Discussion}
\label{sec:discussion}

LHWR turns a hyperbolic representation into an interpretable rule for
borrowing information: the geometry defines local peers, and regression
coefficients describe how covariate associations vary among them. This
connects representation learning with the varying-coefficient framework
\citep{hastietibshirani1993}. The theory, prediction rule, bandwidth selection,
and diagnostics make that locality rule statistically explicit and
empirically assessable.

A hyperbolic representation is particularly well motivated by hierarchical,
tree-like, or core--periphery organization.
Equation~\eqref{eq:exact-ball-volume} defines its neighbourhood volume;
Equation~\eqref{eq:ball-volume-growth} establishes exponential growth at large
geodesic radii in dimension at least two. Unlike polynomial Euclidean volume
growth, this capacity accommodates the proliferation of branches across
hierarchical levels \citep{krioukov2010,nickel2017,nickel2018}. It concerns
representational capacity, not temporal response growth or guaranteed
prediction gains. This large-radius motivation must also be distinguished
from the small-neighbourhood limit underlying the smoothing theory.

For regression, nearby represented observations should plausibly have similar
coefficient relationships; hubs or hyperboloid coordinates alone do not
establish this. When observations lie near a common centre relative to the
curvature radius, Equation~\eqref{eq:tangent-distortion-bound} controls
tangent-distance distortion. Tangent-E may then suffice, although regression
performance also depends on the coefficient surface and selected
neighbourhoods. Geometry comparisons therefore complement, rather than
replace, substantive justification.

For fixed representations, the pointwise results establish consistency,
bias--variance expansions, and asymptotic normality under the stated
conditions. Familiar local smoothing rates are retained. In
Proposition~\ref{prop:curvature-contribution}, the explicit scalar-curvature
term cancels from leading local-constant coefficient bias, but geometry
remains in intrinsic derivatives, neighbourhood volume, and higher-order
variance. The adaptive results connect neighbour counts with metric radii
and estimation error.

For estimated representations, treating learned coordinates as fixed can
omit variation of the same order as regression noise. In the first-order regime of
Table~\ref{tab:generated-representation-validation}, adding the target-location
covariance raises coverage from 0.885--0.891 to 0.928--0.935. This controlled
experiment uses oracle bias and variance quantities and still falls below
nominal coverage. It demonstrates the importance of representation
uncertainty, not a fully calibrated procedure for arbitrary embeddings.
The joint-limit results require a suitable first-stage limit; they do not
establish one for every embedding algorithm.

Table~\ref{tab:simulation-core-results} shows the smallest coefficient-recovery
error for LHWR under localized heterogeneity, whereas Tangent-E performs best
for the broad, smooth surface. These controlled, intrinsically generated
coefficient fields separate locality from distance approximation. They
support a conditional choice of geometry, not a universal ranking.

The WDI illustration gives that distinction an empirical interpretation.
For the common 141-country sample,
Table~\ref{tab:wdi-feature-map-predictive} reports held-out \RMSE values of
2.216 for LHWR, 2.446 for OLS, and 2.519 for GWR under matched tuning
conventions. The results are consistent with a benefit from pooling countries
by economic similarity instead of imposing one global relationship or using
physical proximity. The additional gain from exact Lorentz distance is
modest: Tangent-E has \RMSE 2.236, and its paired difference interval includes
zero. The two-feature WDI map is therefore an illustration of
representation-defined locality, not evidence that the observed countries
form a hierarchically growing network or that negative curvature is necessary
for this application.

The maps complement aggregate prediction scores:
Figure~\ref{fig:wdi-coefficients} identifies 23 locations with
negative unemployment coefficients after coefficient-wise multiplicity
adjustment, while no internet-use coefficient remains significant after that
adjustment. These conditional, exploratory findings identify associations for
further investigation, not causal policy effects. Country dependence and
overlapping training folds limit the paired-error intervals; residual
diagnostics do not establish the independence assumed in the pointwise
theory. Fixed-map validation also does not assess end-to-end representation
learning for new countries.

\begingroup
\clubpenalty=10000
Computational scalability remains a separate question. Fast,
high-performance, and GPU implementations have been developed for GWR
\citep{li2019fast,lu2022hpgwr,wang2020cuda,cuda2025gwr}. LHWL supplies dense
and exact nearest-neighbour implementations, but the present study does not
benchmark their large-scale performance against those systems. Further work
should develop feasible uncertainty propagation for learned representations
under dependence, alongside scalable computation.
\par
\endgroup

\section{Conclusion}
\label{sec:conclusion}

LHWR makes hyperbolic representations useful for interpretable local
regression. Geometry determines whose information enters a fit, while the
coefficient field describes the resulting associations. The fixed- and
estimated-representation theory states when estimation is stable and when
representation uncertainty must enter inference.

The evidence shows both the value and the limits of this approach. LHWR is
advantageous for localized simulated heterogeneity; tangent approximations
can be competitive or better for smoother variation. In the WDI application,
economic-peer regression improves on global and geographic pooling, and the
maps locate evidence for particular covariate associations.

The practical gain is a scientifically motivated definition of locality,
coupled with explicit estimation and uncertainty assessment. Hyperbolic
structure is useful when it preserves relationships relevant to the
regression question. Its value should be demonstrated through comparisons
and diagnostics, not inferred from geometry alone.

\paragraph*{Data availability statement.}
The WDI data are publicly available from the World Bank. The processed
analysis data contain only redistributable indicators and country identifiers.
The computational reproducibility archive contains the processed sample,
download and preprocessing scripts, and the recorded outputs. The archive is
available from the corresponding author on request. A public repository
identifier will be provided when the archive is publicly deposited.

\paragraph*{Code availability.}
The LHWR method is implemented in the open-source LHWL R package, distributed
under the GNU General Public License (version 3 or later). The package
repository is \url{https://github.com/byuzbasi/LHWL}. The computational
reproducibility archive includes the package source, scripts for reproducing
the analyses, tables, and figures, and software-version records.

\paragraph*{Conflict of interest.}
The authors declare no competing interests.

\paragraph*{Declaration of generative AI assistance.}
During manuscript preparation, ChatGPT (OpenAI) was used for English-language
editing and stylistic refinement. The authors reviewed all content and take
full responsibility for the accuracy and scientific integrity of the manuscript.

\FloatBarrier
\appendix
\numberwithin{equation}{section}
\counterwithin{table}{section}

\section{Geometric Definitions and Conventions}
\label{app:geometric-definitions}

We use the upper-sheet model, distance, and tangent metric in Equations
\eqref{eq:lorentz-distance-general} and
\eqref{eq:tangent-space-metric}. The identities below fix conventions
used in the proofs \citep{helgason1978,chavel2006,robbin2024diffgeo}.

\begin{proof}[Proof of Proposition~\ref{prop:geodesic-metric}]
A time-orientation-preserving Lorentz transformation sends the first point
to \((R_\kappa,0,\ldots,0)\) without changing inner products. The upper-sheet
constraint makes the time component of the transformed second point at least
\(R_\kappa\), with equality only at that origin. Thus
\(-\innerL{z}{w}/R_\kappa^2\geq1\), proving nonnegativity and separation;
symmetry follows from the inner product. A radial unit-speed geodesic has
coordinates \((R_\kappa\cosh(t/R_\kappa),
R_\kappa\sinh(t/R_\kappa)\omega)\) for a unit spatial direction
\(\omega\). Its length from the origin is \(t\), which gives the stated
\(\acosh\) distance. Hyperbolic space is complete and simply connected
with nonpositive curvature, so this geodesic minimizes length. The infimum
of curve lengths satisfies the triangle inequality by concatenation.
\end{proof}

We verify the inverse relation between Equations~\eqref{eq:tangent-log-map}
and~\eqref{eq:exp-map-explicit}.
For \(r=d_\kappa(\mu,z)>0\), the vector
\(z+R_\kappa^{-2}\innerL{\mu}{z}\mu\) is tangent at \(\mu\) and has
norm \(R_\kappa\sinh(r/R_\kappa)\). Multiplying by the factor in
Equation~\eqref{eq:tangent-log-map} therefore gives a vector of norm
\(r\), and substitution in Equation~\eqref{eq:exp-map-explicit} returns
\(z\). At equality the continuous definitions give
\(\log_\mu(\mu)=0\) and \(\exp_\mu(0)=\mu\). There is no cut locus:
the exponential and logarithmic maps are mutually inverse on the full
tangent space and upper sheet. The unit vector \(\log_\mu(z)/r\), however,
is not defined at \(z=\mu\); this explains the distinct-endpoint condition
in the distance-variation proposition.

The default reference in Equation~\eqref{eq:tangent-reference} is
well defined for every finite nonempty upper-sheet sample, since
\begin{equation}
 -\innerL{\bar z_{\rm tr}}{\bar z_{\rm tr}}
 =n_{\rm tr}^{-2}\sum_{i,j\in\mathcal I_{\rm tr}}
       \{-\innerL{z_i}{z_j}\}\ \geq\ R_\kappa^2>0.
 \label{eq:reference-existence}
\end{equation}
Equation~\eqref{eq:reference-existence} follows from the metric proposition;
the positive time component of the mean selects the upper sheet. Normalizing
this timelike vector commutes with a common Lorentz isometry. Tangent
coordinates in two orthonormal bases differ by an orthogonal matrix, so
Equation~\eqref{eq:tangent-distance} is basis independent. These facts
do not identify the extrinsic reference with a squared-geodesic Fr\'echet
mean or remove the effect of choosing a different reference.

In coordinates \(x^1,\ldots,x^d\), let \(g_{ab}\) be the metric matrix,
\(g^{ab}\) its inverse, and \(|g|=\det(g_{ab})\). For a twice
differentiable scalar field \(F\), the convention used in the bias formula is
\begin{equation}
 \Delta_\kappa F
 =\frac{1}{\sqrt{|g|}}\sum_{a,b=1}^d
   \partial_a\{\sqrt{|g|}\,g^{ab}\partial_b F\}.
 \label{eq:laplace-beltrami-definition}
\end{equation}
Equation~\eqref{eq:laplace-beltrami-definition} is the trace of the covariant
Hessian. In a normal orthonormal frame at the target it becomes
\(\sum_a\partial_a^2 F\); with suitable zero boundary terms,
\(\int F\Delta_\kappa F\,dV_\kappa
=-\int\|\operatorname{grad}_gF\|^2dV_\kappa\).
This fixes the sign, and all vector/matrix derivatives below are entrywise.
Contraction of constant sectional curvature gives
\(\operatorname{Ric}_\kappa=(d-1)\kappa g\) and its trace
\(\operatorname{Scal}_\kappa=d(d-1)\kappa\), as stated in
Equation~\eqref{eq:curvature-contractions}.

Let \(r=\|v\|_z\) be normal-coordinate radius and \(d\omega^2\) the unit
sphere metric. The polar metric and the Jacobian relative to tangent
Lebesgue measure \(dv\) are
\begin{equation}
 ds^2=dr^2+\{R_\kappa\sinh(r/R_\kappa)\}^2d\omega^2,
 \qquad
 J_z(v)=\left\{\frac{\sinh(r/R_\kappa)}{r/R_\kappa}\right\}^{d-1}.
 \label{eq:polar-metric-jacobian}
\end{equation}
Equation~\eqref{eq:polar-metric-jacobian} means
\(dV_\kappa\{\exp_z(v)\}=J_z(v)dv\), with the continuous value
\(J_z(0)=1\). Expanding it gives
\(J_z(v)=1-\operatorname{Ric}_\kappa(v,v)/6+O(r^4)\), the volume factor
used in Appendix~\ref{app:normal-coordinate-expansion}.

The same polar metric quantifies the effect of measuring distances in a
single tangent space rather than intrinsically on the hyperboloid.
\begin{lemma}[Tangent-space distance distortion]
\label{lem:tangent-distance-distortion}
Fix \(\kappa<0\), \(R_\kappa=(-\kappa)^{-1/2}\), and a common centre
\(\mu\in\mathbb H_\kappa^d\). Let \(\rho\geq0\) and suppose
\(d_\kappa(\mu,z_i),d_\kappa(\mu,z_j)\leq\rho\). For the exact
logarithmic map and the tangent distance in
Equation~\eqref{eq:tangent-distance},
\begin{equation}
 d_{T,\mu}(z_i,z_j)\leq d_\kappa(z_i,z_j)
 \leq C_\rho\,d_{T,\mu}(z_i,z_j),\qquad
 C_\rho=\frac{\sinh(\rho/R_\kappa)}{\rho/R_\kappa},
 \label{eq:tangent-distortion-bound}
\end{equation}
where \(C_0=1\) is the continuous value.
\end{lemma}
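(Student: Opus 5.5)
We compare the intrinsic distance with the tangent distance by working in polar normal coordinates at \(\mu\), using the metric in Equation~\eqref{eq:polar-metric-jacobian}. Write \(v_i=\log_\mu(z_i)\) and \(v_j=\log_\mu(z_j)\). Then \(\|v_i\|_\mu=d_\kappa(\mu,z_i)\leq\rho\), and likewise for \(v_j\). Let \(\Phi=\exp_\mu\) be the map from \(T_\mu\mathbb H_\kappa^d\) with the flat metric \(g_\mu\) to the hyperboloid. There is no cut locus, so \(\Phi\) is a global diffeomorphism, and \(d_{T,\mu}(z_i,z_j)=\|v_i-v_j\|_\mu\). Both inequalities will follow from pointwise bounds on the pullback metric \(\Phi^*g\) relative to \(g_\mu\), combined with a choice of comparison curve for each direction.

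\textbf{Pointwise comparison.} In polar coordinates \((r,\omega)\) on \(T_\mu\), the flat metric is \(dr^2+r^2d\omega^2\). By Equation~\eqref{eq:polar-metric-jacobian}, the pullback is \(dr^2+\{R_\kappa\sinh(r/R_\kappa)\}^2d\omega^2\). The function \(x\mapsto\sinh x/x\) is at least \(1\) and increasing on \([0,\infty)\). Hence at a tangent point of radius \(r\), and for every tangent vector \(\xi\),
\[
 \|\xi\|_\mu\leq\|d\Phi\,\xi\|\leq\frac{\sinh(r/R_\kappa)}{r/R_\kappa}\|\xi\|_\mu .
\]
The radial component of \(\xi\) is preserved, and only the angular component is stretched, by the factor \(R_\kappa\sinh(r/R_\kappa)/r\). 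At \(r=0\) the factor takes its continuous value \(1\).

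\textbf{Upper bound.} Take the straight segment \(\sigma(t)=(1-t)v_i+tv_j\) in \(T_\mu\). It stays in the Euclidean ball of radius \(\rho\), because that ball is convex. Its image \(\Phi\circ\sigma\) is therefore a curve joining \(z_i\) to \(z_j\). By the pointwise bound and monotonicity, its hyperbolic length is at most \(C_\rho\|v_i-v_j\|_\mu\). Since \(d_\kappa\) is the infimum of curve lengths (Proposition~\ref{prop:geodesic-metric}), this gives \(d_\kappa(z_i,z_j)\leq C_\rho\,d_{T,\mu}(z_i,z_j)\).

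\textbf{Lower bound.} Take the minimizing geodesic \(\gamma\) from \(z_i\) to \(z_j\), and pull it back to \(\tilde\gamma=\log_\mu\circ\gamma\). The left pointwise inequality holds at every radius, with no restriction to the ball. So the Euclidean length of \(\tilde\gamma\) is at most the hyperbolic length of \(\gamma\), which equals \(d_\kappa(z_i,z_j)\). The Euclidean length of \(\tilde\gamma\) is in turn at least \(\|v_i-v_j\|_\mu\). This gives \(d_{T,\mu}\leq d_\kappa\), and note that it needs no hypothesis involving \(\rho\).

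The main obstacle is the upper bound. The constant \(C_\rho\) requires the comparison curve to stay within radius \(\rho\) of \(\mu\), which is why I use the straight tangent chord instead of the geodesic: the geodesic, when pulled back, need not lie in the tangent ball. A secondary technical point is the singularity of polar coordinates at \(r=0\), which arises if the chord passes through the origin. This is handled either by the continuity of the factor at \(r=0\) or by perturbing the chord slightly and passing to a limit. Finally, the case \(\rho=0\) forces \(z_i=z_j=\mu\), and both sides of the inequality vanish.
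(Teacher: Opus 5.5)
Your proposal is correct and follows essentially the same route as the paper's proof. Both pull back the hyperbolic polar metric under $\exp_\mu$ and compare it pointwise with the flat metric $g_\mu$ using that $\sinh x/x\geq 1$ is increasing; both get the lower bound by comparing any joining curve with its logarithmic image, and the upper bound from the exponential image of the straight chord, which stays in the convex tangent ball of radius $\rho$.
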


\begin{proof}
Write \(v_a=\log_\mu(z_a)\) for \(a\in\{i,j\}\). On the tangent space,
the Euclidean polar metric induced by \(g_\mu\) is
\(ds_E^2=dr^2+r^2d\omega^2\), where \(r=\|v\|_\mu\).
Equation~\eqref{eq:polar-metric-jacobian}, now centred at \(\mu\), gives
the pullback of the hyperbolic metric under \(\exp_\mu\).
The ratio \(\sinh x/x\), extended by one at zero, is at least one and
increases for \(x\geq0\): its derivative has numerator
\(x\cosh x-\sinh x\), whose derivative is \(x\sinh x\geq0\).
Consequently every curve has hyperbolic length at least the Euclidean length
of its logarithmic image. Taking the infimum over curves joining the two
points gives the first inequality in Equation~\eqref{eq:tangent-distortion-bound}.
For the second inequality, the straight segment from \(v_i\) to \(v_j\)
stays in the convex tangent ball \(\|v\|_\mu\leq\rho\). On this segment,
the hyperbolic polar metric is bounded above by \(C_\rho^2ds_E^2\).
Its exponential image therefore has length at most
\(C_\rho\|v_i-v_j\|_\mu\), which bounds the geodesic distance above.
The metric comparison extends continuously across \(r=0\); in dimension
one there is no angular contribution and the distances are equal.
\end{proof}

To interpret Lemma~\ref{lem:tangent-distance-distortion} near its common
centre, Taylor expansion of the multiplier gives
\begin{equation}
 C_\rho=1+\frac{\rho^2}{6R_\kappa^2}
       +O\!\left(\frac{\rho^4}{R_\kappa^4}\right),
 \qquad \rho/R_\kappa\to0.
 \label{eq:tangent-distortion-expansion}
\end{equation}
Together, Equations~\eqref{eq:tangent-distortion-bound} and
\eqref{eq:tangent-distortion-expansion} bound the relative distance
distortion for distinct points by
\begin{equation*}
 0\leq\frac{d_\kappa(z_i,z_j)}{d_{T,\mu}(z_i,z_j)}-1
 \leq\frac{\rho^2}{6R_\kappa^2}+O\!\left(\frac{\rho^4}{R_\kappa^4}\right).
\end{equation*}
Thus the approximation is controlled by distance from the common centre
relative to the curvature radius, not merely by proximity of the two points
to one another. This is a geometric bound, not a regression-risk ordering.
It concerns the exact map: the unit-curvature implementation instead sets
the tangent vector to zero when the computed \(d_{-1}(\mu,z)\leq10^{-10}\).
For distinct points in that numerical neighbourhood, the computed tangent
distance can therefore vanish; the exact multiplicative bound is not
asserted for these thresholded distances.

If \(\omega_{d-1}=dv_d\) denotes the surface area of the Euclidean unit
sphere, integration gives the exact ball volume
\begin{equation}
 \operatorname{Vol}_\kappa\{B(z,r)\}
 =\omega_{d-1}\int_0^r
       \{R_\kappa\sinh(t/R_\kappa)\}^{d-1}\,dt.
 \label{eq:exact-ball-volume}
\end{equation}
Equation~\eqref{eq:exact-ball-volume} reduces to
\(2\pi R_\kappa^2\{\cosh(r/R_\kappa)-1\}\) for \(d=2\).
For \(d\geq2\), using \(\sinh u\sim e^u/2\) yields
\begin{equation}
 \operatorname{Vol}_\kappa\{B(z,r)\}
 \sim\frac{\omega_{d-1}R_\kappa^d}{(d-1)2^{d-1}}
       e^{(d-1)r/R_\kappa},\qquad r\to\infty.
 \label{eq:ball-volume-growth}
\end{equation}
Equation~\eqref{eq:ball-volume-growth} concerns large-radius volume, whereas
the kernel expansions concern small radii. For \(d=1\) the ball length is
\(2r\), so the exponential-growth assertion is not made.

\section{Finite-Sample Local Estimation}
\label{app:finite-sample-estimation}

\begin{proof}[Proof of Proposition~\ref{prop:local-wls}]
Let \(\bm G_z=\bX^\top\bW\bX\). For any nonzero \(a\in\R^p\),
\(a^\top\bm G_z a=\|\bW^{1/2}\bX a\|_2^2>0\) by
Assumption~\ref{ass:local-design}. The gradient of the loss in
Equation~\eqref{eq:local-wls-objective} is
\(-2\bX^\top\bW(\by-\bX\theta)\), and its Hessian is
\(2\bm G_z\), a positive-definite matrix. Its unique stationary point is
therefore the unique minimizer in Equation~\eqref{eq:lhwr-estimator}.
\end{proof}

For a fixed admissible weighted design, \(\bA_i\bX=\bI_p\).
Taking expectations and covariances in \(\widehat\bbeta(Z_i)=\bA_i\by\)
proves Equation~\eqref{eq:conditional-moments}. Subtracting
\(\bbeta(Z_i)=\bA_i\bX\bbeta(Z_i)\) gives its smoothing-bias identity.
If the conditional errors are Gaussian, linear transformation also gives
\begin{equation}
 \widehat\bbeta(Z_i)\mid(\bX,\bZ)
 \sim N_p\{\bA_i\bm m,\bA_i\bm\Sigma\bA_i^\top\}.
 \label{eq:conditional-gaussian-law}
\end{equation}
Equation~\eqref{eq:conditional-gaussian-law} is centred at \(\bA_i\bm m\),
not generally at \(\bbeta(Z_i)\). Estimated variance and response-selected
weights do not give an exact pointwise Student law from this display.

A weighted Gauss--Markov interpretation requires a separate, exact working
model. On the active set \(\mathcal J_i=\{j:w_{ij}>0\}\), write the
restricted arrays as \(\by_+,\bX_+,\bW_+\). Suppose
\begin{equation}
 \E(\by_+\mid\bX,\bZ)=\bX_+\bbeta_i,\qquad
 \Var(\by_+\mid\bX,\bZ)=\sigma_i^2\bW_+^{-1}.
 \label{eq:weighted-gauss-markov-model}
\end{equation}
Equation~\eqref{eq:weighted-gauss-markov-model} assumes an exactly constant
local coefficient and inverse-weight covariance; neither is the general
varying-coefficient model. Premultiplication by \(\bW_+^{1/2}\) produces a
full-rank homoskedastic linear model. If another linear unbiased estimator
has operator \(\bA_++\bm D\), then \(\bm D\bX_+=0\); its covariance
exceeds that of \(\bA_+\by_+\) by
\(\sigma_i^2\bm D\bW_+^{-1}\bm D^\top\), a positive-semidefinite matrix.
Thus LHWR is BLUE within this specific active-set model. Replacing the exact
mean or covariance in Equation~\eqref{eq:weighted-gauss-markov-model} by an
approximation does not establish exact unbiasedness or BLUE optimality.

\section{Normal-Coordinate Kernel Expansion}
\label{app:normal-coordinate-expansion}

This appendix proves the geometric expansion used throughout the asymptotic
theory. All derivatives are evaluated at the fixed interior target \(z\), and
the calculation is applied componentwise when the integrand is vector- or
matrix-valued.

\begin{lemma}[Radial kernel moment on \(\mathbb H_\kappa^d\)]
\label{lem:radial-kernel-moment}
For a bounded, nonnegative radial kernel \(K\) supported on \([0,1]\),
and any field \(\bm F\) with four bounded covariant derivatives near the
fixed interior target \(z\), as \(h\to0\),
\begin{align*}
 &\int_{\mathbb H_\kappa^d}
 K\left\{\frac{d_\kappa(z,q)}{h}\right\}\bm F(q)\,dV_\kappa(q)\\
 &\qquad=
 h^d\left[
   \mu_0\bm F
   +h^2\mu_2
    \left\{
      \frac{1}{2}\Delta_\kappa\bm F
      -\frac{1}{6}\operatorname{Scal}_\kappa\bm F
    \right\}
   +O(h^4)
 \right].
\end{align*}
The same formula holds with \(K^2\), \(\nu_0\), and \(\nu_2\) replacing
\(K\), \(\mu_0\), and \(\mu_2\). The moments are the finite integrals
defined in Section~\ref{sec:theory}. Lipschitz continuity of \(K\) is not
needed for this deterministic integral expansion, so the indicator kernel
\(K(r)=I(r<1)\) is included. The stronger kernel conditions in
Assumption~\ref{ass:pointwise-asymptotics} remain in force for the statistical
results that invoke that assumption.
\end{lemma}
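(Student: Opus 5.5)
The plan is to pull the integral back to the tangent space $T_z\mathbb H_\kappa^d$ through the exponential map, which is a global diffeomorphism because there is no cut locus (Appendix~\ref{app:geometric-definitions}). Under $q=\exp_z(v)$, geodesic distance becomes $d_\kappa(z,q)=\|v\|_z$. The volume element becomes $dV_\kappa=J_z(v)\,dv$, with $J_z$ given by Equation~\eqref{eq:polar-metric-jacobian}. Because $K$ is supported on $[0,1]$, only $\|v\|\le h$ contributes. We then rescale $v=hu$ with $u$ in the unit Euclidean ball, expressed in an orthonormal frame $e_1,\ldots,e_d$. This turns the integral into
\[
h^d\int_{\|u\|\le1}K(\|u\|)\,\bm F\{\exp_z(hu)\}\,J_z(hu)\,du .
\]
Since $h\to0$ and $z$ is interior, the whole integration region eventually lies in the neighbourhood where $\bm F$ has four bounded covariant derivatives.

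Next I will Taylor expand both factors in normal coordinates. For the field,
\[
\bm F\{\exp_z(hu)\}=\bm F+h\,\nabla_u\bm F+\tfrac{h^2}{2}\nabla^2\bm F(u,u)+\tfrac{h^3}{6}\nabla^3\bm F(u,u,u)+O(h^4),
\]
where the remainder is uniform over $\|u\|\le1$ by the fourth-derivative bound. This works because along radial geodesics the iterated covariant derivatives coincide with ordinary derivatives of $t\mapsto\bm F\{\exp_z(tu)\}$. For the Jacobian, expanding $\{\sinh(r/R_\kappa)/(r/R_\kappa)\}^{d-1}$ gives
\[
J_z(hu)=1-\tfrac{h^2}{6}\operatorname{Ric}_\kappa(u,u)+O(h^4\|u\|^4),
\]
and $J_z$ is an even function of $u$. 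I then multiply the two expansions and collect terms up to order $h^3$. Vector- and matrix-valued fields are handled entrywise.

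The integration step uses radial symmetry of $K(\|u\|)$ on the unit ball. Every odd monomial in $u$ integrates to zero, which removes the $h$ and $h^3$ terms, including the cross term between $\nabla_u\bm F$ and the Jacobian correction. For second moments, $\int u_au_bK(\|u\|)\,du=\mu_2\delta_{ab}$. This gives
\[
\int K\,\nabla^2\bm F(u,u)\,du=\mu_2\operatorname{tr}\nabla^2\bm F=\mu_2\Delta_\kappa\bm F,
\]
using that the trace of the covariant Hessian equals the Laplace--Beltrami operator, Equation~\eqref{eq:laplace-beltrami-definition}. Likewise
\[
\int K\,\operatorname{Ric}_\kappa(u,u)\,du=\mu_2\operatorname{Scal}_\kappa .
\]
Combining these yields $\mu_0\bm F+h^2\mu_2\{\tfrac12\Delta_\kappa\bm F-\tfrac16\operatorname{Scal}_\kappa\bm F\}$. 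The $K^2$ version is word for word the same, with $\nu_0,\nu_2$ in place of $\mu_0,\mu_2$. None of these steps uses continuity of $K$: only boundedness and compact support enter, via dominated integration of bounded remainders over the unit ball. The indicator kernel is therefore covered.

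The main obstacle is rigorous control of the $O(h^4)$ remainder, uniformly on the shrinking ball. Two things must be checked.
\begin{itemize}
\item The third-order term must vanish exactly, not merely be small. It is odd in $u$: both $\nabla^3\bm F(u,u,u)$ and the product of $\nabla_u\bm F$ with the even $h^2$ part of $J_z$ are odd. Since $J_z$ has only even powers of $\|u\|$, the fourth-order Taylor remainder of $\bm F$ is $O(h^4)$ times a bounded function on the unit ball.
\item The Jacobian's $O(h^4)$ term must be uniform; it is, because $\sinh x/x$ is analytic and the radius is at most $h$.
\end{itemize}
Interpreting $\nabla^k\bm F(u,\ldots,u)$ as the symmetrized covariant derivative along radial geodesics, which is what normal coordinates provide, makes these Taylor claims legitimate.
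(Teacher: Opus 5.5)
Your proposal is correct and follows essentially the same route as the paper's proof: pull back through $\exp_z$ (no cut locus), rescale $v=hu$, expand $\bm F$ covariantly along radial geodesics to third order and the Jacobian to $1-\tfrac{h^2}{6}\operatorname{Ric}_\kappa(u,u)+O(h^4)$, then let radial symmetry remove the odd terms while the second moments produce $\mu_2\Delta_\kappa\bm F$ and $\mu_2\operatorname{Scal}_\kappa$. The paper explains the absence of a cubic Jacobian term by noting that the curvature tensor is parallel; your argument from the evenness of the explicit factor $\{\sinh(r/R_\kappa)/(r/R_\kappa)\}^{d-1}$ is a concrete version of the same point.
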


\begin{proof}
Because the target is interior and the injectivity radius of hyperbolic space
is infinite, write \(q=\exp_z(v)\) throughout the kernel support. Set
\(v=hu\). Geodesic normal coordinates give
\begin{equation*}
  d_\kappa\{z,\exp_z(hu)\}=h\|u\|
\end{equation*}
and
\begin{equation*}
  dV_\kappa\{\exp_z(hu)\}
  =h^dJ_z(hu)\,du.
\end{equation*}
For constant sectional curvature,
\begin{equation*}
  J_z(hu)
  =1-\frac{h^2}{6}\operatorname{Ric}_\kappa(u,u)
  +O(h^4\|u\|^4).
\end{equation*}
There is no cubic derivative-of-curvature term because the curvature tensor is
parallel. Covariant Taylor expansion along the radial geodesic gives
\begin{align*}
  \bm F\{\exp_z(hu)\}
  ={}&\bm F
  +h\sum_a u_a\nabla_{e_a}\bm F
  +\frac{h^2}{2}\sum_{a,b}u_au_b
    (\nabla^2\bm F)_{ab}\\
  &+\frac{h^3}{6}\sum_{a,b,c}u_au_bu_c
    (\nabla^3\bm F)_{abc}
  +O(h^4\|u\|^4),
\end{align*}
where \(e_1,\ldots,e_d\) is a normal orthonormal frame at \(z\), and
\((\nabla^j\bm F)_{a_1\cdots a_j}\) denotes the covariant derivative
tensor evaluated on those basis vectors. In particular \(\nabla^2\bm F\)
is the entrywise Hessian; this avoids identifying it with arbitrary iterated
coordinate derivatives away from the target. Radial symmetry
implies
\begin{equation*}
  \int u_aK(\|u\|)\,du=0,
  \qquad
  \int u_au_bK(\|u\|)\,du=\mu_2I(a=b),
\end{equation*}
and all third-order moments vanish. The second derivative term therefore
integrates to \((\mu_2/2)\Delta_\kappa\bm F\). Similarly,
\begin{equation*}
  \int K(\|u\|)\operatorname{Ric}_\kappa(u,u)\,du
  =\mu_2\operatorname{Scal}_\kappa.
\end{equation*}
Multiplying the Taylor and Jacobian expansions, integrating over the unit ball,
and collecting terms gives the result. Repeating the argument with \(K^2\)
gives the second statement.
\end{proof}

The kernel constants can also be checked without simulation. For the bisquare
kernel, polar integration with \(t=r^2\) gives
\begin{equation}
\begin{aligned}
 \mu_0&=\frac{8v_d}{(d+2)(d+4)},&
 \mu_2&=\frac{8v_d}{(d+2)(d+4)(d+6)},\\
 \nu_0&=\frac{384v_d}{(d+2)(d+4)(d+6)(d+8)},&
 \nu_2&=\frac{384v_d}{(d+2)(d+4)(d+6)(d+8)(d+10)}.
\end{aligned}
\label{eq:bisquare-kernel-constants}
\end{equation}
Equation~\eqref{eq:bisquare-kernel-constants} evaluates the moments used in
the main bias and variance formulas: angular symmetry gives
\(\int u_1^2K(\|u\|)du=v_d\int_0^1r^{d+1}K(r)dr\), and the remaining
integrals are beta integrals. For \(d=2\), they are
\(\mu_0=\pi/3\), \(\mu_2=\pi/24\), \(\nu_0=\pi/5\), and
\(\nu_2=\pi/60\). At \(\kappa=-1\), substitution in
Equation~\eqref{eq:curvature-variance-correction} gives the
curvature-only multiplier \(1-h^2/18\), subject to that proposition's
additional assumptions.

\section{Proofs of Consistency and Bias--Variance Results}
\label{app:consistency-proofs}

Apply Lemma~\ref{lem:radial-kernel-moment} to \(\bm Q=f\bm M\) and to
\(\bm Q\bbeta\). Equations~\eqref{eq:asymptotic-model}--
\eqref{eq:asymptotic-Q} imply
\begin{align}
  \bm\Gamma_h(z)
  ={}&h^d\left[
    \mu_0\bm Q
    +h^2\mu_2
      \left\{
        \frac{1}{2}\Delta_\kappa\bm Q
        -\frac{1}{6}\operatorname{Scal}_\kappa\bm Q
      \right\}
    +O(h^4)
  \right],
  \label{eq:appendix-Gamma-expansion}\\
  \bm g_h(z)
  ={}&h^d\left[
    \mu_0\bm Q\bbeta
    +h^2\mu_2
      \left\{
        \frac{1}{2}\Delta_\kappa(\bm Q\bbeta)
        -\frac{1}{6}\operatorname{Scal}_\kappa\bm Q\bbeta
      \right\}
    +O(h^4)
  \right].
  \label{eq:appendix-g-expansion}
\end{align}
Since \(\bm Q(z)\) is positive definite, the inverse in
equation~\eqref{eq:population-and-sample-estimator} exists for sufficiently
small \(h\). The matrix expansion
\begin{equation*}
  (\bm A+h^2\bm C)^{-1}
  =\bm A^{-1}-h^2\bm A^{-1}\bm C\bm A^{-1}+O(h^4)
\end{equation*}
applied to equations~\eqref{eq:appendix-Gamma-expansion} and
\eqref{eq:appendix-g-expansion} yields
\begin{equation*}
  \bbeta_h-\bbeta
  =
  \frac{h^2\mu_2}{2\mu_0}\bm Q^{-1}
  \left[
    \Delta_\kappa(\bm Q\bbeta)
    -(\Delta_\kappa\bm Q)\bbeta
  \right]
  +O(h^4).
\end{equation*}
The two scalar-curvature terms cancel exactly. The covariant product rule
\begin{equation*}
  \Delta_\kappa(\bm Q\bbeta)
  =
  (\Delta_\kappa\bm Q)\bbeta
  +2\sum_{a=1}^d
    (\nabla_{e_a}\bm Q)(\nabla_{e_a}\bbeta)
  +\bm Q\Delta_\kappa\bbeta
\end{equation*}
gives equations~\eqref{eq:asymptotic-bias-constant} and
\eqref{eq:asymptotic-bias-equivalent}.

Write \(K_{h,z}(q)=K\{d_\kappa(z,q)/h\}\) and
\(\widehat{\bm\Gamma}_{n,h}=n^{-1}\sum_iw_i\bx_i\bx_i^\top\).
Define the centred local score
\begin{equation*}
  \bm\psi_{i,h}(z)
  =K_{h,z}(Z_i)\bx_i
  \{Y_i-\bx_i^\top\bbeta_h(z)\}.
\end{equation*}
By the definition of \(\bbeta_h\),
\(\E\{\bm\psi_{i,h}(z)\}=\bm0\), and the sample normal equations give
\begin{equation}
  \widehat\bbeta_h(z)-\bbeta_h(z)
  =
  \widehat{\bm\Gamma}_{n,h}(z)^{-1}
  \frac{1}{n}\sum_{i=1}^n\bm\psi_{i,h}(z).
  \label{eq:appendix-score-identity}
\end{equation}
The kernel support and moment assumptions imply
\begin{equation*}
  \widehat{\bm\Gamma}_{n,h}-\bm\Gamma_h
  =O_p\{(h^d/n)^{1/2}\}.
\end{equation*}
Equation~\eqref{eq:appendix-Gamma-expansion} and \(nh^d\to\infty\) give
\begin{equation}
  h^d\widehat{\bm\Gamma}_{n,h}^{-1}
  \xrightarrow{p}\{\mu_0\bm Q(z)\}^{-1}.
  \label{eq:appendix-Gram-limit}
\end{equation}
Moreover,
\begin{equation*}
  \Var\left\{
    \frac{1}{n}\sum_{i=1}^n\bm\psi_{i,h}(z)
  \right\}=O(h^d/n).
\end{equation*}
Combining these orders with equation~\eqref{eq:appendix-score-identity} gives
\begin{equation*}
  \widehat\bbeta_h(z)-\bbeta_h(z)
  =O_p\{(nh^d)^{-1/2}\}.
\end{equation*}
Adding the population smoothing bias proves
equation~\eqref{eq:pointwise-rate}. Since \(h\to0\) and \(nh^d\to\infty\),
both terms vanish, proving Theorem~\ref{thm:pointwise-consistency}.

Continuity of \(\bbeta\) and equation~\eqref{eq:population-bias-expansion}
imply, uniformly on the kernel support,
\begin{equation*}
  \bx_i^\top\{\bbeta(Z_i)-\bbeta_h(z)\}=O_p(h\|\bx_i\|).
\end{equation*}
The contribution of this local coefficient difference to the score covariance
is \(O(h^{d+2})\). The conditional mean-zero error supplies the leading term.
The squared-kernel version of Lemma~\ref{lem:radial-kernel-moment} gives
\begin{equation}
  \E\{\bm\psi_{i,h}(z)\bm\psi_{i,h}(z)^\top\}
  =h^d\nu_0f(z)\bm\Omega(z)+o(h^d).
  \label{eq:appendix-score-variance}
\end{equation}
The central limit theorem proved in the next appendix and the additional
uniform-integrability condition in Theorem~\ref{thm:bias-variance}
justify convergence of the first two moments of
\(\bm T_{n,h}=\sqrt{nh^d}(\widehat\bbeta_h-\bbeta_h)\).
In particular, \(\E\bm T_{n,h}\to\bm0\) and
\(\E(\bm T_{n,h}\bm T_{n,h}^\top)\to\bm V(z)\). Hence
\begin{equation*}
  \Var\{\widehat\bbeta_h(z)\}
  =
  \frac{1}{nh^d}
  \frac{\nu_0}{\mu_0^2f(z)}
  \bm M(z)^{-1}\bm\Omega(z)\bm M(z)^{-1}
  +o\{(nh^d)^{-1}\}.
\end{equation*}
This proves equation~\eqref{eq:variance-expansion}.

For clarity, uniform integrability is an additional moment assumption, not
a consequence of convergence in probability of the inverse Gram matrix.
Explicitly, the condition on the completed estimator is
\begin{equation*}
 \lim_{M\to\infty}\sup_n
 \E\{\|\bm T_{n,h}\|^2 I(\|\bm T_{n,h}\|^2>M)\}=0.
\end{equation*}
A uniformly bounded \((2+\eta)\)th moment of \(\bm T_{n,h}\), for some
\(\eta>0\), is sufficient. Neither bound is automatic for a compactly
supported kernel: rare, nearly singular local designs must be controlled.
Without this extra condition, \(\bm V(z)\) remains the covariance of the
limiting normal law, but convergence of unconditional variances is not
asserted.

Under the same uniform-integrability assumption, the expectation expansion
that is actually justified is
\begin{equation}
 \E\widehat\bbeta_h(z)-\bbeta(z)
 =h^2\bm B(z)+O(h^4)+o\{(nh^d)^{-1/2}\}.
 \label{eq:expectation-bias-qualified}
\end{equation}
Equation~\eqref{eq:expectation-bias-qualified} follows from
\(\E\bm T_{n,h}\to0\); it is not an unconditional \(O(h^4)\) remainder
for the sample bias. For example, \((nh^d)^{-1/2}=O(h^2)\) makes its last
term \(o(h^2)\); without such a comparison or a stronger expectation bound,
the population and finite-sample biases must remain distinguished.

The population smoothing bias and leading covariance give the AMSE criterion
in equation~\eqref{eq:amse}: since \(\E\bm T_{n,h}\to\bm0\), their
cross term is \(o\{h^4+(nh^d)^{-1}\}\). Differentiating the criterion gives
\begin{equation*}
  4h^3\|\bm B(z)\|_2^2
  -\frac{d\,\tr\{\bm V(z)\}}{nh^{d+1}}=0,
\end{equation*}
which proves equation~\eqref{eq:optimal-bandwidth} and the stated rates.

\section{Proof of Pointwise Asymptotic Normality}
\label{app:normality-proof}

\begin{proof}[Proof of Theorem~\ref{thm:asymptotic-normality}]
For any fixed \(a\in\mathbb R^p\), consider the triangular array
\begin{equation*}
  \zeta_{i,n}
  =
  \frac{1}{\sqrt{nh^d}}a^\top\bm\psi_{i,h}(z).
\end{equation*}
Equation~\eqref{eq:appendix-score-variance} gives
\begin{equation*}
  \sum_{i=1}^n\Var(\zeta_{i,n})
  \longrightarrow
  a^\top\{\nu_0f(z)\bm\Omega(z)\}a.
\end{equation*}
The bounded kernel, compact support, and \((4+\delta)\)-moment condition imply
the Lindeberg condition. Hence the Lindeberg--Feller central limit theorem and
the Cram\'er--Wold device give
\begin{equation*}
  \frac{1}{\sqrt{nh^d}}
  \sum_{i=1}^n\bm\psi_{i,h}(z)
  \xrightarrow{d}
  N_p\{\bm0,\nu_0f(z)\bm\Omega(z)\}.
\end{equation*}
Multiplying equation~\eqref{eq:appendix-score-identity} by
\(\sqrt{nh^d}\), using equation~\eqref{eq:appendix-Gram-limit}, and applying
Slutsky's theorem proves equation~\eqref{eq:clt-population-target} with
covariance \(\bm V(z)\).

By equation~\eqref{eq:population-bias-expansion},
\begin{equation*}
  \sqrt{nh^d}
  \{\bbeta_h(z)-\bbeta(z)-h^2\bm B(z)\}
  =O\{\sqrt{nh^d}\,h^4\}.
\end{equation*}
The right-hand side converges to zero under \(nh^{d+8}\to0\), proving
equation~\eqref{eq:clt-bias-centred}.
If \(nh^{d+4}\to0\), then \(\sqrt{nh^d}h^2\bm B(z)\to0\) as well.
Slutsky's theorem therefore gives Equation~\eqref{eq:clt-undersmoothed}.
\end{proof}

\section{Curvature and Adaptive-Neighbour Proofs}
\label{app:curvature-proofs}

To obtain the small-ball volume in
Proposition~\ref{prop:curvature-contribution}, apply
Lemma~\ref{lem:radial-kernel-moment} with
\(\bm F\equiv1\) and \(K(r)=I(r<1)\). For this kernel,
\begin{equation*}
  \mu_0=v_d,
  \qquad
  \mu_2=\frac{v_d}{d+2}.
\end{equation*}
Equation~\eqref{eq:curved-kernel-moment} becomes
\begin{equation*}
  \operatorname{Vol}_\kappa\{B(z,h)\}
  =v_dh^d
  \left\{
    1-\frac{\operatorname{Scal}_\kappa}{6(d+2)}h^2
    +O(h^4)
  \right\},
\end{equation*}
which proves equation~\eqref{eq:small-ball-volume}.

For the variance correction, impose the additional conditions stated in
Proposition~\ref{prop:curvature-contribution}: \(f\), \(\bm M\),
\(\bm\Omega\), and \(\bbeta\) are constant on a fixed neighbourhood of
\(z\), and \(\|\bx_i\|\) and
\(\sigma_i^2=\E(\varepsilon_i^2\mid\bx_i,Z_i)\) are bounded there.
Set \(N_h=nh^d\), \(\bm A_{n,h}=h^{-d}\widehat{\bm\Gamma}_{n,h}\), and
\begin{equation*}
  \bm C_{n,h}=\frac{1}{nh^d}\sum_iw_i^2\sigma_i^2\bx_i\bx_i^\top.
\end{equation*}
The precise design-moment condition used here is, for some \(\eta>0\),
\begin{equation}
  \sup_n\E\|\bm A_{n,h}^{-1}\bm C_{n,h}
                 \bm A_{n,h}^{-1}\|^{1+\eta}<\infty,
  \label{eq:curvature-design-moment}
\end{equation}
where the sandwich is set to zero on singular designs for this condition.
This additional tail control is not implied by nonsingularity with
probability tending to one; it controls the inverse Gram factors in the
combination that actually determines conditional variance.
Lemma~\ref{lem:radial-kernel-moment} gives
\begin{align*}
  \bm\Gamma_h
  & =h^df\bm M
  \left\{
    \mu_0-\frac{h^2\mu_2}{6}
    \operatorname{Scal}_\kappa+O(h^4)
  \right\},\\
  \E(\bm\psi_{i,h}\bm\psi_{i,h}^\top)
  & =h^df\bm\Omega
  \left\{
    \nu_0-\frac{h^2\nu_2}{6}
    \operatorname{Scal}_\kappa+O(h^4)
  \right\}.
\end{align*}
Expanding the two population inverse Gram factors around \(\mu_0f\bm M\)
produces
\begin{equation*}
  1+
  \frac{\operatorname{Scal}_\kappa}{6}
  \left(2\frac{\mu_2}{\mu_0}-\frac{\nu_2}{\nu_0}\right)h^2
  +O(h^4),
\end{equation*}
for the corresponding population sandwich.

Here is the random-inverse justification. On nonsingular designs the
conditional mean is the constant \(\bbeta(z)\), and the conditional variance
is \(N_h^{-1}\bm A_{n,h}^{-1}\bm C_{n,h}\bm A_{n,h}^{-1}\).
Bounded local summands, independence, and positive definiteness of
\(\E\bm A_{n,h}\) imply an exponentially small probability
\(O(e^{-cN_h})\), for some \(c>0\), that
\(\|\bm A_{n,h}-\E\bm A_{n,h}\|\) exceeds half the smallest eigenvalue
of \(\E\bm A_{n,h}\). The second and fourth moments of the centred
\(\bm A_{n,h}\) and \(\bm C_{n,h}\) have orders \(N_h^{-1}\) and
\(N_h^{-2}\), respectively. On the complementary regular event, the
second-order Taylor remainder of \((\bm A,\bm C)\mapsto
\bm A^{-1}\bm C\bm A^{-1}\) therefore has expectation \(O(N_h^{-1})\).
The linear terms have zero expectation before restriction to that event.
Equation~\eqref{eq:curvature-design-moment} and H\"older's inequality
control the discarded covariance-sandwich contribution
exponentially. Consequently,
\begin{equation*}
\begin{aligned}
 \E(\bm A_{n,h}^{-1}\bm C_{n,h}\bm A_{n,h}^{-1})
  ={}&(\E\bm A_{n,h})^{-1}(\E\bm C_{n,h})
       (\E\bm A_{n,h})^{-1}\\
     &+O(N_h^{-1}).
\end{aligned}
\end{equation*}
The bounded completion on singular designs contributes only an exponentially
small moment term. Thus the relative random-inverse remainder is
\(O(N_h^{-1})=o(h^2)\) when \(nh^{d+2}\to\infty\), proving
equation~\eqref{eq:curvature-variance-correction}. Constancy of
\(\bbeta\), as well as the design moments, excludes an additional
second-order contribution from local coefficient variation. For the
bisquare kernel with \(d=2\),
\begin{equation*}
  2\frac{\mu_2}{\mu_0}-\frac{\nu_2}{\nu_0}
  =\frac{1}{6},
  \qquad
  \operatorname{Scal}_{-1}=-2,
\end{equation*}
so the relative correction is \(1-h^2/18+O(h^4)\).
This \(O(h^4)\) remainder describes the population sandwich expansion. After
the random-inverse contribution is included, the empirical variance expansion
has the \(1-h^2/18+o(h^2)\) form stated in
equation~\eqref{eq:curvature-variance-correction}.

\begin{proof}[Proof of Corollary~\ref{cor:adaptive-knn}]
Applying Lemma~\ref{lem:radial-kernel-moment} to \(\bm F=f\) and the
indicator kernel gives the local distance distribution
\begin{align*}
  P\{d_\kappa(z,Z_i)\le r\}
  ={}&v_df(z)r^d
  \left[
    1+
    \frac{r^2}{d+2}
    \left\{
      \frac{\Delta_\kappa f(z)}{2f(z)}
      -\frac{\operatorname{Scal}_\kappa}{6}
    \right\}
    +O(r^4)
  \right].
\end{align*}
Set the leading term equal to \(k/n\), which defines
\(h_{n,k}^{(0)}=\{k/(nfv_d)\}^{1/d}\), and write
\(r=h_{n,k}^{(0)}\{1+c(h_{n,k}^{(0)})^2\}\). Expanding the preceding display and equating the
second-order term to zero yields
\begin{equation*}
  c
  =
  \frac{\operatorname{Scal}_\kappa}{6d(d+2)}
  -\frac{\Delta_\kappa f(z)}{2d(d+2)f(z)}.
\end{equation*}
The binomial fluctuation of the number of observations in the ball contributes
a relative \(O_p(k^{-1/2})\) order-statistic term. This proves
equation~\eqref{eq:knn-radius-expansion}.

To justify the estimator statement, let
\(\bar h_{n,k}=F_z^{-1}(k/n)\). Continuity and local positivity of the
radial density imply
\(H_{n,k}/\bar h_{n,k}=1+O_p(k^{-1/2})\). Conditional on \(H_{n,k}=r\)
and the indices inside the ball, its \(k-1\) interior observations are
independent draws from the joint law truncated to \(d_\kappa(z,Z)<r\).
This is the regular conditional order-statistic law, valid since ties
have probability zero. The boundary observation has zero weight because the
Lipschitz compactly supported kernel satisfies \(K(1)=0\); outside
observations also have zero weight.

Write \(\ell=d_\kappa(z,Z)\). The conditional population weighted
coefficient is exactly \(\bbeta_r(z)\): division by \(F_z(r)\) cancels
between its Gram and score moments. Uniformly for \(r/\bar h_{n,k}\to1\),
the per-interior-observation moments satisfy
\begin{equation}
\begin{aligned}
 \E\{K(\ell/r)\bx\bx^\top\mid\ell<r\}&\longrightarrow
                       \frac{\mu_0}{v_d}\bm M(z),\\
 \Var\{K(\ell/r)\bx(Y-\bx^\top\bbeta_r)\mid\ell<r\}
     &\longrightarrow\frac{\nu_0}{v_d}\bm\Omega(z).
\end{aligned}
\label{eq:adaptive-truncated-moments}
\end{equation}
Equation~\eqref{eq:adaptive-truncated-moments} follows by dividing the
normal-coordinate kernel expansions by \(F_z(r)\sim f(z)v_dr^d\).
Uniform local moment bounds give the conditional Lindeberg condition and
Gram convergence for \(k-1\) draws. Thus the conditional centred estimation
error is \(O_p(k^{-1/2})\), with normal covariance constant
\begin{equation}
 \bm V_k(z)=\frac{\nu_0v_d}{\mu_0^2}
             \bm M(z)^{-1}\bm\Omega(z)\bm M(z)^{-1}.
 \label{eq:adaptive-variance-constant}
\end{equation}
Equation~\eqref{eq:adaptive-variance-constant} is the covariance at scale
\(\sqrt{k}\); the density factor cancels because the neighbourhood fixes
probability mass. The conditional limits are uniform on any shrinking
relative-radius interval containing \(H_{n,k}\) with probability tending
to one, so integration over the radius gives the same unconditional limit.
Differentiating the smooth population moment integrals gives
\(\partial_r\bbeta_r=2r\bm B+O(r^3)\). Consequently
\(\bbeta_{H_{n,k}}-\bbeta_{\bar h_{n,k}}
=O_p\{(h_{n,k}^{(0)})^2k^{-1/2}\}\); this shift is negligible at
\(\sqrt{k}\) scale. The population bias is
\(\bbeta_{\bar h_{n,k}}-\bbeta=(h_{n,k}^{(0)})^2\bm B
+O\{(h_{n,k}^{(0)})^4\}\).

The additional uniform-integrability condition in the corollary converts
this limit into unconditional covariance \(\bm V_k/k+o(k^{-1})\), including
the singular-design completion. It also controls the AMSE cross term, as in
Appendix~\ref{app:consistency-proofs}. Balancing squared smoothing bias
\((k/n)^{4/d}\) and variance \(k^{-1}\) gives the stated optimal order when
\(\bm B\ne0\). No unrestricted claim that the sample's expectation bias
equals its population smoothing bias is needed.
\end{proof}

\section{Proof of the Flat-Curvature Limit}
\label{app:flat-limit-proof}

\begin{proof}[Proof of Theorem~\ref{thm:flat-limit}]
Let \(a_m=\sqrt{-\kappa_m}\). In normal polar coordinates on
\(\mathbb H_{\kappa_m}^d\), the metric and volume density are
\begin{equation*}
  ds_m^2
  =dr^2+\left\{\frac{\sinh(a_mr)}{a_m}\right\}^2d\omega^2,
  \qquad
  J_m(r)
  =\left\{\frac{\sinh(a_mr)}{a_mr}\right\}^{d-1}.
\end{equation*}
On every bounded normal-coordinate set,
\begin{equation*}
  \frac{\sinh(a_mr)}{a_m}
  =r+\frac{a_m^2r^3}{6}+O(a_m^4r^5),
  \qquad
  J_m(r)=1+O(|\kappa_m|r^2),
\end{equation*}
uniformly in \(r\). The explicit smooth metric coefficients, their inverses,
and the derivatives needed in Equation~\eqref{eq:laplace-beltrami-definition}
converge to their Euclidean counterparts on compact normal-coordinate sets
(with continuous values at the origin). The corresponding distances and
Laplace--Beltrami operators on \(C^2\) fields therefore converge.
Because \(K\) is bounded and
compactly supported, dominated convergence applies to every kernel moment in
Lemma~\ref{lem:radial-kernel-moment}. The scalar-curvature terms vanish because
\(\operatorname{Scal}_{\kappa_m}=d(d-1)\kappa_m\to0\).

The assumed \(C^2\) convergence of the pulled-back fields, together with
uniform bounds on fourth derivatives and moments, implies convergence of
\(\bm B_m(z_m)\), \(\bm V_m(z_m)\), and the small-ball probability to their
Euclidean counterparts. The condition
\(|\kappa_m|h_m^2\to0\) removes the curvature correction uniformly over the
smoothing neighbourhood. Together with \(n_mh_m^{d+8}\to0\), Slutsky's theorem
applied to the triangular-array argument in
Appendix~\ref{app:normality-proof} gives convergence of the bias-centred
limiting law. The same small-ball expansion, under the stated conditions on
\(k_m\), gives the Euclidean limit of the adaptive-radius expansion.
\end{proof}

\section{Proofs for Generated Lorentz Representations}
\label{app:generated-representation-proofs}

\begin{proof}[Proof of Proposition~\ref{prop:lorentz-isometry-invariance}]
Write \(\bm J=\operatorname{diag}(-1,\bm I_d)\) for the ambient Lorentz
metric matrix. For every
\(L\in O^+(1,d)\),
\begin{equation*}
  L^\top\bm J L=\bm J.
\end{equation*}
Consequently,
\begin{equation*}
  \langle Lz,Lq\rangle_L=\langle z,q\rangle_L
  \quad\text{and}\quad
  d_\kappa(Lz,Lq)=d_\kappa(z,q).
\end{equation*}
Every kernel weight is therefore unchanged under the simultaneous
transformation. The weighted Gram matrices and score vectors are identical,
so the local coefficient estimates, fitted values, and predictions are
identical as well.
\end{proof}

\begin{proof}[Proof of Proposition~\ref{prop:first-variation-distance}]
Hyperbolic space is complete and simply connected with negative sectional
curvature, so distinct points are joined by a unique minimizing geodesic. The
first variation formula for its length gives
\begin{equation*}
  \left.\frac{d}{dt}d_\kappa(z_t,q_t)\right|_{t=0}
  =-g_z\{a,\dot\gamma_{zq}(0)\}
   -g_q\{b,\dot\gamma_{qz}(0)\},
\end{equation*}
where \(\gamma_{zq}\) and \(\gamma_{qz}\) are the unit-speed minimizing
geodesics from \(z\) to \(q\) and from \(q\) to \(z\), respectively. Their
initial velocities are \(u_{zq}\) and \(u_{qz}\), which proves
Equation~\eqref{eq:first-variation-distance}. A common Lorentz isometry
preserves the distance for every \(t\), so its derivative is zero.
\end{proof}

\begin{proof}[Proof of Theorem~\ref{thm:generated-distance-expansion}]
Assumption~\ref{ass:generated-representation} is in force. Write
\(\widehat{\bm\Gamma}_{n,h}^{a}=n^{-1}\sum_iw_i^a\bx_i\bx_i^\top\)
for \(a\in\{O,E\}\), and let \(r_i^O(z)\) be the oracle residual defined
in that assumption.
Let
\begin{equation*}
  w_i^O=K(d_i/h),
  \qquad
  w_i^E=K(\widehat d_i/h),
  \qquad
  \Delta_i=\widehat d_i-d_i.
\end{equation*}
Because \(K'\) is Lipschitz, Taylor's theorem gives, uniformly over the
contributing observations,
\begin{equation}
  w_i^E-w_i^O
  =K'(d_i/h)\frac{\Delta_i}{h}+q_{i,n},
  \qquad
  |q_{i,n}|\leq C\rho_n(z)^2.
  \label{eq:appendix-weight-linearization}
\end{equation}
Here \(K\) is defined on \([0,\infty)\) and extended by zero beyond 1,
with Lipschitz derivative as required in the assumption; this includes the
bisquare kernel. The union of the oracle and perturbed neighbourhoods is contained in a
geodesic ball of radius \(h\{1+o_p(1)\}\), whose empirical mass is
\(O_p(h^d)\).

The oracle normal equations imply
\begin{equation*}
  \frac1n\sum_{i=1}^n w_i^O\bx_i r_i^O(z)=\bm0.
\end{equation*}
Subtracting the oracle equations from the perturbed equations therefore gives
the exact identity
\begin{equation}
  \widehat\bbeta_h^{\,E}(z)-\widehat\bbeta_h^{\,O}(z)
  =\{\widehat{\bm\Gamma}_{n,h}^{E}(z)\}^{-1}
   \frac1n\sum_{i=1}^n(w_i^E-w_i^O)\bx_i r_i^O(z).
  \label{eq:appendix-exact-generated-identity}
\end{equation}
The local moment condition and Equation
\eqref{eq:appendix-weight-linearization} yield
\begin{align*}
  h^{-d}\|\widehat{\bm\Gamma}_{n,h}^{E}
       -\widehat{\bm\Gamma}_{n,h}^{O}\|&=O_p\{\rho_n(z)\},\\
  \left\|\frac1n\sum_{i=1}^n q_{i,n}\bx_i r_i^O(z)\right\|
       &=O_p\{h^d\rho_n(z)^2\}.
\end{align*}
Uniform nonsingularity and
\begin{equation*}
  \bm A^{-1}-\bm B^{-1}
  =\bm A^{-1}(\bm B-\bm A)\bm B^{-1}
\end{equation*}
then imply
\begin{equation*}
  \|\{\widehat{\bm\Gamma}_{n,h}^{E}\}^{-1}
    -\{\widehat{\bm\Gamma}_{n,h}^{O}\}^{-1}\|
  =O_p\{h^{-d}\rho_n(z)\}.
\end{equation*}
Substituting these bounds into Equation
\eqref{eq:appendix-exact-generated-identity} gives
\begin{equation*}
  \widehat\bbeta_h^{\,E}(z)-\widehat\bbeta_h^{\,O}(z)
  =\{\widehat{\bm\Gamma}_{n,h}^{O}(z)\}^{-1}
   \frac1n\sum_{i=1}^n
   K'(d_i/h)\frac{\Delta_i}{h}\bx_i r_i^O(z)
   +O_p\{\rho_n(z)^2\},
\end{equation*}
which is Equation~\eqref{eq:generated-distance-expansion}. The displayed
linear term is \(O_p\{\rho_n(z)\}\), proving the final assertion.
\end{proof}

\begin{proof}[Proof of Theorem~\ref{thm:joint-representation-limit}]
Theorem~\ref{thm:generated-distance-expansion} and the condition
\(\sqrt{nh^d}\rho_n(z)^2\xrightarrow{p}0\) give
\begin{align*}
  &\sqrt{nh^d}\{\widehat\bbeta_h^{\,E}(z)
    -\bbeta(z)-h^2\bm B(z)\}\\
  &\quad=
  \sqrt{nh^d}\{\widehat\bbeta_h^{\,O}(z)-\bbeta_h(z)\}
  +\sqrt{nh^d}\bm R_{n,h}(z)
  +\sqrt{nh^d}\{\bbeta_h(z)-\bbeta(z)-h^2\bm B(z)\}
  +o_p(1).
\end{align*}
The last term before the remainder is \(o(1)\) by
Equation~\eqref{eq:population-bias-expansion} and \(nh^{d+8}\to0\).
Equation~\eqref{eq:joint-score-limit}, the continuous mapping theorem, and
Slutsky's theorem now prove Equation~\eqref{eq:joint-representation-limit}.
If the scaled representation term is \(o_p(1)\), it disappears from the
limiting law. No Gaussian assumption is needed for the addition of the two
joint limits. The bound in Theorem~\ref{thm:generated-distance-expansion}
shows that \(\sqrt{nh^d}\rho_n(z)\to0\) is a sufficient, but not necessary,
condition for this first-stage term to vanish. When the joint limit is
Gaussian, adding its two components adds their covariance blocks and both
cross blocks, giving Equation~\eqref{eq:joint-gaussian-covariance}.
\end{proof}

\begin{proof}[Proof of Corollary~\ref{cor:generated-target-location}]
In normal coordinates at \(z\), covariant Taylor expansion gives
\begin{equation}
  \bbeta\{\exp_z(v_n)\}
  =\bbeta(z)+\bm D\bbeta(z)[v_n]_z+O_p(\|v_n\|^2).
  \label{eq:appendix-target-taylor}
\end{equation}
Smoothness also gives
\(\bm B\{\exp_z(v_n)\}=\bm B(z)+O_p(\|v_n\|)\).
The corollary's equicontinuity condition transfers
the centred estimation error from \(z\) to \(\widehat z=\exp_z(v_n)\).
The joint convergence assumption gives
\(\sqrt{nh^d}[v_n]_z=O_p(1)\), so
\(\sqrt{nh^d}h^2\|v_n\|=o_p(1)\). The quadratic target remainder is
negligible by \(\sqrt{nh^d}\|v_n\|^2\to0\), and the uniform
\(O(h^4)\) population-bias remainder is negligible by \(nh^{d+8}\to0\).
Combining these observations with Equation~\eqref{eq:appendix-target-taylor}
yields
\begin{align*}
  &\sqrt{nh^d}\{\widehat\bbeta_h(\widehat z)
  -\bbeta(z)-h^2\bm B(z)\}\\
  &\qquad=
  \sqrt{nh^d}\{\widehat\bbeta_h(z)-\bbeta_h(z)\}
  +\bm D\bbeta(z)\sqrt{nh^d}[v_n]_z+o_p(1).
\end{align*}
The joint convergence assumption and Slutsky's theorem prove
Equation~\eqref{eq:generated-target-limit}. Under independence, covariance
addition gives Equation~\eqref{eq:generated-target-variance}.
Without independence, writing \(\bm C_{YU}=\Cov(\bm G_Y,U)\), a
\(p\times d\) matrix, the covariance is
\(\bm V+\bm D\bbeta\,\bm\Sigma_Z\,\bm D\bbeta^\top
+\bm C_{YU}\bm D\bbeta^\top+\bm D\bbeta\bm C_{YU}^\top\).
This covariance statement requires finite second moments; normality of the
sum additionally requires a jointly Gaussian limit.
\end{proof}

\begin{proof}[Proof of Corollary~\ref{cor:representation-stability-knn}]
On the event that both distance vectors use the same strictly positive
index set, the \(k\)th order statistic is one-Lipschitz with respect to the
sup norm. The corollary assumes this event has probability tending to one;
dropping an unequal set of zero distances would not justify the bound. Hence
\begin{equation*}
  |H_{n,k}^{E}(z)-H_{n,k}^{O}(z)|\leq\delta_n.
\end{equation*}
Corollary~\ref{cor:adaptive-knn} gives
\(H_{n,k}^{O}(z)/h_{n,k}^{(0)}(z)\xrightarrow{p}1\). Because
\(\delta_n=o_p\{h_{n,k}^{(0)}(z)\}\), the order-statistic bound implies
\(H_{n,k}^{E}(z)/h_{n,k}^{(0)}(z)\xrightarrow{p}1\) as well. For every observation
that receives positive weight under either geometry,
\begin{align*}
 \left|
 \frac{\widehat d_i}{H_{n,k}^{E}(z)}
 -\frac{d_i}{H_{n,k}^{O}(z)}
 \right|
 &\leq
 \frac{|\widehat d_i-d_i|}{H_{n,k}^{E}(z)}
 +
 \frac{d_i|H_{n,k}^{E}(z)-H_{n,k}^{O}(z)|}
      {H_{n,k}^{E}(z)H_{n,k}^{O}(z)}  \\
 &=O_p\left\{\frac{\delta_n}{h_{n,k}^{(0)}(z)}\right\}.
\end{align*}
The union of the two contributing neighbourhoods contains \(O_p(k)\)
observations. Lipschitz continuity of \(K\), the local moment conditions, and
uniform nonsingularity therefore give the adaptive analogue of the Gram and
score perturbation bounds used in the proof of
Theorem~\ref{thm:generated-distance-expansion}. The inverse identity then yields
equation~\eqref{eq:representation-stability-knn-rate}. The adaptive stochastic
scale is \(k^{-1/2}\), so the stated consistency follows directly. Whenever
the oracle adaptive estimator admits a \(\sqrt{k}\)-scaled limiting law, the
first-order equivalence condition and Slutsky's theorem transfer that law to
the estimated-representation estimator.
\end{proof}

\section{Secondary Conditional Diagnostics}
\label{app:secondary-diagnostics}

For fixed coordinates, kernel, and bandwidths, write
\(\widehat\bbeta(Z_i)=\bA_i\by\),
\(\bs_i^\top=\bx_i^\top\bA_i\), and
\(\widehat\by=\bS\by\). Under the homoskedastic uncorrelated error model,
let \(\bm R_H=(\bI_n-\bS)^\top(\bI_n-\bS)\). The residual
degrees of freedom and expected residual sum of squares are
\begin{equation}
\begin{aligned}
 \delta_1&=\tr(\bm R_H)=n-2\tr(\bS)+\tr(\bS^\top\bS),\\
 \E(RSS_H\mid\bX,\bZ)&=
       \|(\bI_n-\bS)\bm m\|^2+\sigma^2\delta_1.
\end{aligned}
\label{eq:residual-moment}
\end{equation}
Equation~\eqref{eq:residual-moment} follows by expanding the residual
quadratic form. It separates smoothing bias from residual noise and shows
why division by residual degrees of freedom is not generally unbiased for
\(\sigma^2\). For \(\delta_1>0\), the working plug-in quantities are
\begin{equation}
  \widehat\sigma^2\bA_i\bA_i^\top,
  \qquad
  \widehat\sigma^2=
  \frac{\|\by-\widehat\by\|_2^2}
       {\delta_1}.
 \label{eq:working-covariance-estimate}
\end{equation}
Equation~\eqref{eq:working-covariance-estimate} uses the special case
\(\bm\Sigma=\sigma^2\bI_n\) of
Equation~\eqref{eq:conditional-moments}. Response-selected bandwidths,
smoothing bias, and residual dependence are not removed by treating the
realized neighbourhoods as fixed. Pointwise tests and Benjamini--Hochberg
classifications are therefore exploratory; the adjustment alone does not
establish calibrated post-selection p-values or false-discovery control.

Moment-matched F1--F3 diagnostics follow Leung et al.~\citep{leung2000}, with the local
smoother built from Lorentz-hyperbolic distance. Let \(RSS_0\) and \(RSS_H\)
be the OLS and LHWR residual sums of squares, respectively, and
\(\nu_{\rm OLS}=n-p>0\) the OLS residual degrees of freedom. Write
\begin{equation*}
  \bm R_H=(\bI_n-\bS)^\top(\bI_n-\bS),\qquad
  \delta_1=\tr(\bm R_H),\qquad \delta_2=\tr(\bm R_H^2).
\end{equation*}
Then
\begin{equation*}
  F_1=\frac{RSS_H/\delta_1}{RSS_0/\nu_{\rm OLS}}
  \ \dot\sim\ F_{\delta_1^2/\delta_2,\nu_{\rm OLS}}.
\end{equation*}
With \(\eta_1=\nu_{\rm OLS}-\delta_1\) and
\(\eta_2=\nu_{\rm OLS}-2\delta_1+\delta_2\),
\begin{equation*}
  F_2=\frac{(RSS_0-RSS_H)/\eta_1}{RSS_0/\nu_{\rm OLS}}
  \ \dot\sim\ F_{\eta_1^2/\eta_2,\nu_{\rm OLS}}.
\end{equation*}
For coefficient index \(k\in\{1,\ldots,p\}\) (not the neighbour-count
parameter in this subsection), let \(\bm B_k\by\) collect the \(n\) local
estimates. Define the \(n\times n\) centring matrix
\(\bm C_0=\bI_n-n^{-1}\mathbf1_n\mathbf1_n^\top\), and
\(\bm C_k=n^{-1}\bm B_k^\top\bm C_0\bm B_k\). Here \(\mathbf1_n\)
is the vector of ones. Defining
\(\gamma_{1k}=\tr(\bm C_k)\),
\(\gamma_{2k}=\tr(\bm C_k^2)\), and
\(V_k^2=n^{-1}\by^\top\bm B_k^\top\bm C_0\bm B_k\by\), the coefficient
stationarity diagnostic is
\begin{equation*}
  F_{3,k}=\frac{V_k^2/\gamma_{1k}}{RSS_H/\delta_1}
  \ \dot\sim\ F_{\gamma_{1k}^2/\gamma_{2k},\delta_1^2/\delta_2}.
\end{equation*}
The symbol \(\dot\sim\) denotes a moment-matched reference approximation,
not an exact F law: the numerator and denominator quadratic forms are
generally dependent even under Gaussian errors. F1 uses the lower tail;
F2 and F3 use upper tails. All required denominator sums of squares and
moment-based degrees of freedom must be positive; otherwise the corresponding
reference p-value is undefined. The F2 approximation concerns an improvement
in fit and is not evidence for local structure when the numerator is negative.
The stationary independent Gaussian model with fixed weights motivates these
reference laws. In particular, calibration with all coefficients stationary
does not prove validity for testing one stationary coefficient when nuisance
coefficient fields vary. These limitations accompany the numerical
calibration below.

For local design assessment, let \(q\) be the number of non-intercept
predictors and \(\widetilde{\bX}_i\) contain them after centring by their
weighted local means and scaling to unit weighted sums of squares. This
scaling is only for the collinearity diagnostic, not a refit or transformation
of the regression model. The local condition number and variance
inflation factor are
\begin{equation*}
  \mathrm{CN}_i=\sqrt{\lambda_{i,1}/\lambda_{i,q}},
  \qquad
  \mathrm{VIF}_{ik}=\{1-R_{ik}^2\}^{-1},
\end{equation*}
where \(\lambda_{i,1}\geq\cdots\geq\lambda_{i,q}>0\) are the eigenvalues of
\(\widetilde{\bX}_i^\top\bW_i\widetilde{\bX}_i\), and \(R_{ik}^2\) is from
the weighted regression of predictor \(k\) on the remaining predictors
\citep{wheeler2005}.

For row-standardised diagnostic weights \(w_{ij}^{(H)}\) with zero diagonal,
the residual Lorentz-hyperbolic Moran statistic is
\begin{equation*}
  I_H=\frac{n}{S_0}
  \frac{\sum_{i,j}w_{ij}^{(H)}(e_i-\bar e)(e_j-\bar e)}
       {\sum_i(e_i-\bar e)^2},
  \qquad \E_{\rm perm}(I_H\mid\be,\bm W^{(H)})=-\frac{1}{n-1},
\end{equation*}
where \(e_i=Y_i-\widehat Y_i\), \(\bar e=n^{-1}\sum_i e_i\), and
\(S_0=\sum_{i,j}w_{ij}^{(H)}>0\) \citep{moran1950}. The expectation is
over uniform permutations of a fixed nonconstant residual vector, with
\(n>1\) and the diagnostic weights fixed. It is not the sampling expectation
of residuals estimated by LHWR. The local statistic is
\begin{equation*}
  I_i^{(H)}=\frac{(e_i-\bar e)\sum_jw_{ij}^{(H)}(e_j-\bar e)}
  {n^{-1}\sum_i(e_i-\bar e)^2}
\end{equation*}
and gives the usual high--high, low--low, high--low, and low--high labels
\citep{anselin1995}. Because smoothing makes fitted residuals nonexchangeable,
ordinary residual permutations are descriptive rather than exact post-fit
tests.

\section{Finite-Sample Diagnostic Calibration}
\label{app:diagnostic-calibration}

The calibration experiment uses 200 replications with \(n=300\),
\(R_{\max}=6\), fixed \(k=60\), nominal level 0.05, and 199 Moran
permutations. It compares stationary coefficients with independent errors,
localised coefficients with independent errors, and stationary coefficients
with Lorentz-neighbourhood correlated errors.

\IfFileExists{simulation_results/inference_mc_v2/tables/inference_compact_summary.tex}{%
  \input{simulation_results/inference_mc_v2/tables/inference_compact_summary.tex}
}{%
  \noindent The generated calibration table was not found.
}

Under stationary independent errors, F1 and F2 are conservative (rejection
rates 0.000 and 0.005), while F3 rates range from 0.040 to 0.055. Under
localised heterogeneity, the global tests reject in every replication and F3
rates range from 0.970 to 1.000. Correlated errors inflate several rejection
rates and reduce pointwise coverage, which motivates the qualifications in the
main-text discussion.

\FloatBarrier
\section{Additional WDI Diagnostics}
\label{app:additional-wdi-diagnostics}

For the WDI fit, F1 and F2 give p-values 0.026 and below 0.001. The F3
p-values are 0.123 for the intercept, 0.006 for unemployment, and 0.055 for
internet use. Median and maximum local condition numbers are 1.31 and 2.02,
and all local variance inflation factors are below 1.58. With \(k=40\), OLS
residuals have \(I_H=0.0717\) (permutation p-value 0.020), whereas LHWR
residuals have \(I_H=-0.0383\) (p-value 0.140); these fitted-residual summaries
are descriptive for the reason given above.

\end{document}

%% file: tables/tas_simulation_core.tex
\begin{table}[!htbp]
\centering
\caption{Monte Carlo comparison for the matched designs. Entries are means
(Monte Carlo standard deviations) over 100 replications after averaging over
\(R_{\max}\in\{4,6,8,10\}\). Oracle errors minimise
\(\mathrm{RMSE}_{\beta}\) over the candidate bandwidths and are unavailable
in practice. CV denotes bandwidth selection; the reported errors measure
recovery of the true signal and coefficients at the simulated design points.}
\label{tab:simulation-core-results}
\begingroup
\setlength{\tabcolsep}{5pt}
\begin{tabular}{lrrrr}
\toprule
Method & CV \(k\) & CV \(\mathrm{RMSE}_{\eta}\) &
CV \(\mathrm{RMSE}_{\beta}\) & Oracle \(\mathrm{RMSE}_{\beta}\) \\
\midrule
\multicolumn{5}{l}{\textit{Localized surface, \(k_0=20\)}} \\
OLS & -- & 1.325 (0.090) & 1.365 (0.043) & -- \\
LHWR & 95.2 (26.8) & 0.808 (0.051) & 1.034 (0.040) & 1.014 \\
Poincar\'e-E & 56.9 (27.0) & 0.862 (0.060) & 1.070 (0.040) & 1.047 \\
Tangent-E & 74.6 (26.6) & 0.936 (0.066) & 1.156 (0.037) & 1.130 \\
\addlinespace
\multicolumn{5}{l}{\textit{Smooth surface, \(k_0=80\)}} \\
OLS & -- & 0.711 (0.032) & 0.713 (0.020) & -- \\
LHWR & 104.3 (19.4) & 0.379 (0.018) & 0.453 (0.017) & 0.448 \\
Poincar\'e-E & 40.4 (6.4) & 0.360 (0.013) & 0.427 (0.011) & 0.420 \\
Tangent-E & 42.1 (5.2) & 0.333 (0.014) & 0.395 (0.014) & 0.390 \\
\bottomrule
\end{tabular}
\endgroup
\end{table}

%% file: tables/tas_generated_representation.tex
\begin{table}[!htbp]
\centering
\caption{Generated-target uncertainty at \(n=16000\) over 1000
replications. Conditional intervals treat the representation as fixed;
adjusted intervals add the target-location covariance. Both use the known
second-order smoothing bias and analytic leading variances. The nominal
coverage is 0.95; the dominant regime is a stress test beyond the finite
joint-limit setting.}
\label{tab:generated-representation-validation}
\begingroup
\setlength{\tabcolsep}{4pt}
\begin{tabular}{llrrrrr}
\toprule
Regime & Coef. & Emp. SD & Cond. SE & Adj. SE & Cond. cov. & Adj. cov. \\
\midrule
Fixed & \(\beta_1\) & 0.081 & 0.074 & 0.074 & 0.920 & 0.920 \\
Fixed & \(\beta_2\) & 0.081 & 0.074 & 0.074 & 0.926 & 0.926 \\
Negligible first stage & \(\beta_1\) & 0.083 & 0.074 & 0.075 & 0.911 & 0.925 \\
Negligible first stage & \(\beta_2\) & 0.082 & 0.074 & 0.075 & 0.924 & 0.926 \\
First-order first stage & \(\beta_1\) & 0.095 & 0.074 & 0.087 & 0.885 & 0.935 \\
First-order first stage & \(\beta_2\) & 0.090 & 0.074 & 0.082 & 0.891 & 0.928 \\
Representation dominant & \(\beta_1\) & 0.198 & 0.074 & 0.192 & 0.526 & 0.935 \\
Representation dominant & \(\beta_2\) & 0.159 & 0.074 & 0.152 & 0.635 & 0.938 \\
\bottomrule
\end{tabular}
\endgroup
\end{table}

%% file: tables/tas_wdi_loocv.tex
\begin{table}[!htbp]
\centering
\caption{\label{tab:wdi-feature-map-predictive}LOOCV conditional on the fixed WDI representation for the common 141-country sample. All local methods use the same kernel, candidate grid and training-fold AICc rule; $\overline{k}$ is the mean selected $k$. Differences are competitor loss minus LHWR loss; positive values favor LHWR, and brackets are 95\% paired-bootstrap percentile intervals for the computed prediction errors, without refitting models.}
\centering
\setlength{\tabcolsep}{3pt}
\begin{tabular}[t]{lrrrrrr}
\toprule
Method & $\overline{k}$ & $\mathrm{RMSE}$ & $\Delta\mathrm{RMSE}$ [95\%] & $\mathrm{MAE}$ & $\Delta\mathrm{MAE}$ [95\%] & $R^2_{\mathrm{pred}}$\\
\midrule
OLS & -- & 2.446 & 0.230 [0.053, 0.389] & 1.864 & 0.186 [0.038, 0.327] & 0.054\\
GWR & 58 & 2.519 & 0.302 [0.040, 0.568] & 1.879 & 0.201 [0.023, 0.380] & -0.003\\
Poincar\'e-E & 49 & 2.246 & 0.029 [0.002, 0.059] & 1.707 & 0.029 [-0.006, 0.064] & 0.203\\
Tangent-E & 48 & 2.236 & 0.020 [-0.003, 0.044] & 1.703 & 0.025 [-0.008, 0.056] & 0.209\\
LHWR & 41 & 2.216 & -- & 1.678 & -- & 0.223\\
\bottomrule
\end{tabular}
\end{table}

%% file: simulation_results/inference_mc_v2/tables/inference_compact_summary.tex
\begin{table}[!t]
\centering
\caption{Finite-sample calibration at nominal level 0.05. Panel A gives conditional F-test rejection rates; Panel B gives pointwise 95\% interval coverage and zero-coefficient rejection for $x_2$; Panel C separates LH-Moran behavior for fitted residuals from that for the generating errors.}
\label{tab:inference-calibration}
\footnotesize
\setlength{\tabcolsep}{4pt}
\textbf{Panel A: conditional F-test rejection rates}\par\smallskip
\begin{tabular}{lrrrrr}
\toprule
Scenario & F1 & F2 & F3: int. & F3: $x_1$ & F3: $x_2$ \\
\midrule
Stationary, independent & 0.000 & 0.005 & 0.040 & 0.055 & 0.055 \\
Localized, independent & 1.000 & 1.000 & 1.000 & 0.995 & 0.970 \\
Stationary, correlated & 0.155 & 0.735 & 0.970 & 0.065 & 0.065 \\
\bottomrule
\end{tabular}
\par\medskip
\textbf{Panel B: pointwise interval coverage and zero rejection}\par\smallskip
\begin{tabular}{lrrrrr}
\toprule
Scenario & Cov. int. & Cov. $x_1$ & Cov. $x_2$ & Raw zero rej. & BH zero rej. \\
\midrule
Stationary, independent & 0.945 & 0.948 & 0.952 & 0.048 & 0.001 \\
Localized, independent & 0.895 & 0.903 & 0.932 & -- & -- \\
Stationary, correlated & 0.731 & 0.950 & 0.948 & 0.052 & 0.001 \\
\bottomrule
\end{tabular}
\par\medskip
\textbf{Panel C: LH-Moran diagnostic}\par\smallskip
\begin{tabular}{lrrrr}
\toprule
Scenario & Fitted mean $I$ & Fitted rej. & Error mean $I$ & Error rej. \\
\midrule
Stationary, independent & -0.045 & 0.985 & -0.002 & 0.070 \\
Localized, independent & -0.040 & 0.855 & -0.002 & 0.035 \\
Stationary, correlated & -0.027 & 0.230 & 0.102 & 0.985 \\
\bottomrule
\end{tabular}
\end{table}